\providecommand{\algorithmname}{Algorithm}
\newcommand{\lyxaddress}[1]{
\par {\raggedright #1
\vspace{1.4em}
\noindent\par}
}
\theoremstyle{plain}
\newtheorem{thm}{\protect\theoremname}
  \theoremstyle{plain}
  \newtheorem{prop}[thm]{\protect\propositionname}
  \theoremstyle{remark}
  \newtheorem{rem}[thm]{\protect\remarkname}
  \theoremstyle{plain}
  \newtheorem{assumption}{\protect\assumptionname}
  \theoremstyle{plain}
  \newtheorem{lem}[thm]{\protect\lemmaname}
\providecommand{\assumptionname}{Assumption}
\providecommand{\lemmaname}{Lemma}
\providecommand{\propositionname}{Proposition}
\providecommand{\remarkname}{Remark}
\providecommand{\theoremname}{Theorem}
\newcounter{hypA}
\newenvironment{condition}{\refstepcounter{hypA}\begin{itemize}
\item[({\bf A\arabic{hypA}})]}{\end{itemize}}
  \providecommand{\assumptionname}{Assumption}
  \providecommand{\lemmaname}{Lemma}
  \providecommand{\propositionname}{Proposition}
  \providecommand{\remarkname}{Remark}
\providecommand{\theoremname}{Theorem}
\begin{document}

\title{Piecewise-Deterministic Markov Chain Monte Carlo}

\author{Paul Vanetti $^{1}$, Alexandre Bouchard-Côté $^{2}$, George Deligiannidis
$^{1}$, Arnaud Doucet $^{1}$}
\maketitle

\lyxaddress{$^{1}$Department of Statistics, University of Oxford, UK.}

\lyxaddress{$^{2}$Department of Statistics, University of British Columbia,
Canada.}
\begin{abstract}
A novel class of non-reversible Markov chain Monte Carlo schemes relying
on continuous-time piecewise-deterministic Markov Processes has recently
emerged. In these algorithms, the state of the Markov process evolves
according to a deterministic dynamics which is modified using a Markov
transition kernel at random event times. These methods enjoy remarkable
features including the ability to update only a subset of the state
components while other components implicitly keep evolving and the
ability to use an unbiased estimate of the gradient of the log-target
while preserving the target as invariant distribution. However, they
also suffer from important limitations. The deterministic dynamics
used so far do not exploit the structure of the target. Moreover,
exact simulation of the event times is feasible for an important yet
restricted class of problems and, even when it is, it is application
specific. This limits the applicability of these techniques and prevents
the development of a generic software implementation of them. We introduce
novel MCMC methods addressing these shortcomings. In particular, we
introduce novel continuous-time algorithms relying on exact Hamiltonian
flows and novel non-reversible discrete-time algorithms which can
exploit complex dynamics such as approximate Hamiltonian dynamics
arising from symplectic integrators while preserving the attractive
features of continuous-time algorithms. We demonstrate the performance
of these schemes on a variety of applications. %
\end{abstract}
{\small{}{}Keywords:} generalized Metropolis\textendash Hastings;
Hamiltonian dynamics; intractable likelihood; non-reversible Markov
chain Monte Carlo; piecewise-deterministic Markov process; weak convergence.

\global\long\def\rd{\text{d}}

\global\long\def\flow{\mathbf{{\Phi}}}

\global\long\def\P{\mathbb{P}}

\section{Introduction\label{section:introduction}}

Markov chain Monte Carlo (MCMC) methods are the tools of choice to
sample non-standard probability distributions. In high-dimensional
scenarios, the celebrated Metropolis\textendash Hastings algorithm
performs usually poorly and alternative algorithms are required. Two
of the most popular alternatives are slice sampling \citep{neal2003}
and Hamiltonian Monte Carlo (HMC) methods \citep{duane1987,neal2011,leimkuhler2015,B_B_L_G_17}
which have had much empirical success over recent years. More recently,
continuous-time non-reversible MCMC algorithms based on Piecewise-Deterministic
Markov Processes (PDMP)\ schemes have also appeared in the literature
in applied probability \citep{M_16,DM_P_16,B_R_16}, automatic control
\citep{M_H_12}, physics \citep{P_dW_12,M_K_K_14,I_K_15,N_H_16},
statistics and machine learning \citep{BC_D_V_15,B_F_R_16,F_B_P_R_16,B_BC_D_D_F_R_J_16,P_G_C_P_17,W_R_17}.
In physics, these schemes have become quickly popular as they provide
state-of-the-art performance when applied to the simulation of large
scale physical models. They also show promise for statistics applications,
in particular for high dimensional sparse graphical models \citep{BC_D_V_15}
and big data \citep{BC_D_V_15,B_F_R_16,G_16,P_G_C_P_17}.

However, the PDMP-based schemes currently available suffer from shortcomings
which limit both their applicability and performance. To ensure invariance
with respect to the target distribution, one needs to be able to simulate
these continuous-time processes exactly. In practice, this restricts
severely the deterministic dynamics one can use: all the existing
algorithms use a simple linear dynamics that does not exploit the
geometry of the target. Moreover, exact simulation of the event times
is problem specific and may be impossible in certain scenarios. This
prevents the development of a generic software implementation of these
techniques.

In this paper, we address these limitations by developing novel continuous-time
and discrete-time Piecewise-Deterministic Markov Chain Monte Carlo
(PD-MCMC) techniques which bring together HMC, PDMP and generalized
Metropolis\textendash Hastings.

First, we show that it is possible to develop continuous-time PD-MCMC
algorithms relying on Hamiltonian dynamics. In this context, exact
simulation of the resulting PDMP remains possible for an important
class of target distributions. The resulting algorithms provide an
alternative to elliptical slice sampling-type algorithms \citep{M_A_2010,B_C_2016}%
. We also exploit a generalized version of Metropolis\textendash Hastings
algorithm (see, e.g., \citep{lelievreroussetstoltz2010}) satisfying
a skewed detailed balance condition to derive novel schemes.

Second, we introduce novel discrete-time PD-MCMC algorithms. These
non-reversible algorithms can be thought of as a discretized version
of continuous-time PD-MCMC but preserve the target distribution as
invariant distribution for all discretization steps. These schemes
are not only able to exploit complex dynamics, such as approximate
Hamiltonian dynamics arising from symplectic integrators, but it is
also always possible to simulate the event times%
. Moreover some versions of these discrete-time algorithms do not
even require being able to compute the gradient of the log-target.
These methods enjoy the same attractive features as their continuous-time
counterparts: they can leverage any representation of the target as
a product of non-negative factors. Additionally they can use unbiased
estimators of the log-target distribution and its gradient and still
provide algorithms with the correct invariant distribution. %

The rest of the paper is organised as follows. In Section \ref{sec:Continuous-Time-Piecewise-Determ}
we review continuous-time PDMPs, provide sufficient conditions to
ensure invariance of a PDMP with respect to a given target distribution,
discuss existing PD-MCMC algorithms and finally introduce novel algorithms
relying on Hamiltonian dynamics. In Section \ref{sec:Discrete-Time-PDMP-PDMCMC},
we introduce the class of discrete-time PDMP and provide sufficient
conditions to ensure invariance of a PDMP with respect to a given
target distribution which parallel the ones obtained in the continuous-time
scenarios. We review existing and describe novel discrete-time PD-MCMC
algorithms. Section \ref{sec:Discrete-time-LocalPDMCMC} is dedicated
to the efficient implementation of discrete-time algorithms using
subsampling and prefetching ideas while Section \ref{sec:Discrete-time-doublystochasticPDMCMC}
proposes discrete-time algorithms to handle scenarios where the target
is intractable but its logarithm and the logarithm of its gradient
can be estimated unbiasedly. Empirical performance of some of these
schemes are reviewed in Section \ref{sec:Numerical-results}. Appendix
\ref{sec:Proofofinvariance} contains all the proofs of validity of
the proposed algorithms while weak convergence of a specific discrete-time
scheme to a PDMP is proven in Appendix \ref{sec:Weak-convergence-of}.

\section{Continuous-Time PDMP and PD-MCMC\label{sec:Continuous-Time-Piecewise-Determ}}

\subsection{PDMP}

PDMPs were introduced in \citep{D_84}. We will only provide here
an informal review of this class of processes in the spirit of \citep{M_H_12,DM_P_16,F_B_P_R_16,B_BC_D_D_F_R_J_16}
and refer the reader to \citep{D_93} for a detailed theoretical treatment.
For the sake of simplicity, assume that $\mathcal{Z}=\mathbb{R}^{n}$.
A $\mathcal{Z}$-valued continuous-time PDMP process $\left\{ z_{t};t\geq0\right\} $
is a càdlàg process involving a deterministic dynamics altered by
random jumps at random event times. It is defined through 
\begin{enumerate}
\item an Ordinary Differential Equation (ODE) with differentiable drift
$\phi:\mathcal{Z}\rightarrow\mathcal{Z}$, i.e.,
\begin{equation}
\frac{\rd z_{t}}{\rd t}=\phi\left(z_{t}\right),\label{eq:ODEdrift}
\end{equation}
which induces a deterministic flow
\begin{eqnarray}
(t,z) & \in & \mathbb{R}^{+}\times\mathcal{Z}\mapsto\Phi_{t}\left(z\right)\in{\cal Z}\label{eq:ODEflow}
\end{eqnarray}
satisfying the semi-group property $\Phi_{s}\circ\Phi_{t}=\Phi_{s+t}$
and such that $t\mapsto\Phi_{t}\left(z\right)$ is càdlàg,
\item an event rate $\lambda:\mathcal{Z}\rightarrow\mathbb{R}^{+}$, with
$\lambda\left(z_{t}\right)\epsilon+o\left(\epsilon\right)$ being
the probability of having an event in the time interval $\left[t,t+\epsilon\right]$,
and
\item a Markov transition kernel $Q$ from $\mathcal{Z}$ to $\mathcal{Z}$
where the state at event time $t$ is given by $z_{t}\sim Q\left(z_{t^{-}},\cdot\right)$,
$z_{t^{-}}$ being the state of the process just before the event.
\end{enumerate}
Algorithm \ref{alg:continuoustimePDMP} describes how to simulate
the path of a PDMP.

\begin{algorithm}[H]
\protect\caption{Simulation of continuous-time PDMP ~\label{alg:continuoustimePDMP}}

\begin{enumerate}
\item Initialize $z_{0}$ arbitrarily on $\mathcal{Z}$ and set $t_{0}\gets0$. 
\item for $k=1,2,\ldots$ do
\begin{enumerate}
\item Sample inter-event time $\tau_{k}$, where $\tau_{k}$ is a non-negative
random variable such that 
\begin{equation}
\mathbb{P}\left(\tau_{k}\geq t\right)=\exp\left[-\int_{r=0}^{t}\lambda\left\{ \Phi_{r}(z_{t_{k-1}})\right\} \rd r\right].\label{eq:SimEventTime}
\end{equation}
\item For $r\in(0,\tau_{k}),~$set 
\begin{equation}
z_{t_{k-1}+r}\gets\Phi_{r}(z_{t_{k-1}}).\label{eq:SolveFlow}
\end{equation}
\item Set $t_{k}\gets t_{k-1}+\tau_{k}$ and sample
\begin{equation}
z_{t_{k}}\sim Q(z_{t_{k}^{-}},\cdot).\label{eq:SimTransition}
\end{equation}
\end{enumerate}
\end{enumerate}
\end{algorithm}
To be able to exactly simulate a PDMP, we thus need to be able to
simulate from the distribution (\ref{eq:SimEventTime}) and compute
the flow (\ref{eq:SolveFlow}). Finally we also need to be able to
simulate from the transition kernel $Q$. In important scenarios,
exact simulation of the event times can be performed using inversion
of the integrated rate function as in \citep{P_dW_12} or using adaptive
thinning procedures as in \citep{BC_D_V_15}. 

We now introduce the generator associated with the PDMP. For functions
in the domain of the generator, it is defined by 
\[
\mathit{\mathcal{L}}f\left(z\right)=\lim_{\epsilon\rightarrow0}\frac{\mathbb{E}\left[\left.f\left(z_{t+\epsilon}\right)\right|z_{t}=z\right]-f\left(z\right)}{\epsilon}.
\]
Under suitable regularity conditions \citep[Theorem 26.14]{D_93},
it can be shown that this generator is given by 
\begin{equation}
\mathit{\mathcal{L}}f\left(z\right)=\left\langle \phi\left(z\right),\nabla f\left(z\right)\right\rangle +\lambda\left(z\right)\int\left[f\left(z'\right)-f\left(z\right)\right]Q\left(z,\rd z'\right),\label{eq:GeneratorPDMP}
\end{equation}
where $\left\langle a,b\right\rangle $ denotes the scalar product
between vectors $a,b$ and $|a|^{2}=\left\langle a,a\right\rangle $.
The first term on the right hand side of (\ref{eq:GeneratorPDMP})
arises from the deterministic dynamics while the second term corresponds
to the jump component of the process.

\subsection{From PDMP to PD-MCMC\label{subsec:Sufficient-conditions-for-invariance}}

Assume we are interested in sampling from a given target probability
distribution on the Borel space $\left(\mathcal{Z},\mathcal{B}\left(\mathcal{Z}\right)\right)$.
If we want to use a PDMP mechanism to sample this target distribution,
this PDMP needs at least to admit this distribution as invariant distribution.
We provide here sufficient conditions to ensure this is satisfied.
If additionally the PDMP is ergodic, this will allow us to estimate
consistently expectations with respect to the invariant distribution.

From now onward, the target distribution will be assumed to have a
strictly positive density $\rho\left(z\right)$ with respect to the
Lebesgue measure $\rd z$ where
\begin{equation}
\rho\left(z\right)=\exp\left(-H\left(z\right)\right).\label{eq:Pho(z)=00003Dexp(-H(z))}
\end{equation}
Invariance with respect to $\rho$ will be satisfied if 
\[
\mathit{\int\rho\left(\rd z\right)\mathcal{L}}f\left(z\right)=0
\]
 for all functions $f$ in the domain of the generator \citep[Proposition 34.7]{D_93}.
From (\ref{eq:GeneratorPDMP}), this means that we need
\[
\int\rho\left(\rd z\right)\left\langle \phi\left(z\right),\nabla f\left(z\right)\right\rangle +\int\rho\left(\rd z\right)\lambda\left(z\right)\int Q\left(z,\rd z'\right)\left[f\left(z'\right)-f\left(z\right)\right]=0.
\]
However, using integration by parts, we obtain %
\begin{align*}
\int\rho\left(\rd z\right)\left\langle \phi\left(z\right),\nabla f\left(z\right)\right\rangle  & =-\int\rho\left(\rd z\right)\left\{ \nabla\cdot\phi\left(z\right)-\left\langle \nabla H\left(z\right),\phi\left(z\right)\right\rangle \right\} f\left(z\right)
\end{align*}
where $\nabla\cdot\phi\left(z\right):=\sum_{i=1}^{n}\partial_{i}\phi_{i}\left(z\right)$
is the divergence of the vector field $\phi$. Hence, a sufficient
condition to ensure invariance of a PDMP with respect to $\rho$ is
to have 
\begin{equation}
\int\rho\left(\rd z\right)\left[\lambda\left(z\right)\int Q\left(z,\rd z'\right)\left[f\left(z'\right)-f\left(z\right)\right]-\left\{ \nabla\cdot\phi\left(z\right)-\left\langle \nabla H\left(z\right),\phi\left(z\right)\right\rangle \right\} f\left(z\right)\right]=0.\label{eq:PDMPinvariant}
\end{equation}

The following notation will prove useful to formulate sufficient conditions
to ensure invariance of a PDMP with respect to $\rho$. Suppose that
we are given a a measure $\nu$ on $\mathcal{Z},\mathcal{B}\left(\mathcal{Z}\right)$
and a measurable mapping $\mathcal{\varGamma}:\mathcal{Z}\mapsto\mathcal{Z}$.
Then the push-forward of the measure $\nu$ under the mapping $\mathcal{\varGamma}$,
often denoted by $\mathcal{\varGamma}_{\ast}\nu\left(\mathrm{d}z\right)$,
is the measure $A\mapsto\nu\left(\mathcal{\varGamma}^{-1}\left(A\right)\right)$
for any $A\in\mathcal{B}\left(\mathcal{Z}\right)$. We will use here
the notation $\nu\left(\mathcal{\varGamma}^{-1}\left(\mathrm{d}z\right)\right)$.
For any measurable $f:\mathcal{Z}\mapsto\mathbb{R}$, the following
identity holds
\[
\int_{\mathcal{Z}}f(z)\mathcal{\varGamma}_{\ast}\nu\left(\mathrm{d}z\right)=\int_{\mathcal{Z}}f\circ\mathcal{\varGamma}(z)\nu\left(\mathrm{d}z\right).
\]

\subsubsection{Sufficient conditions for global methods\label{subsec:Sufficient-conditions-for-Global}}

We provide here useful sufficient conditions on $\phi,$ $\lambda$,
and $Q$ to ensure $\rho$-invariance of the associated PDMP, without
making any structural assumptions on these objects.

\begin{condition}\label{hypA:Gassumptions}Conditions on $\phi,$
$\lambda$, and $Q$
\begin{enumerate}
\item There exists a $\rho$-preserving mapping $\mathcal{S}:\mathcal{Z}\to\mathcal{Z}$;
that is $\mathcal{S}$ is measurable and satisfies $\rho\left(\mathcal{S}^{-1}(\mathrm{d}z)\right)=\rho(\mathrm{d}z)$.
\item The event rate $\lambda$ satisfies 
\begin{equation}
\lambda\left(\mathcal{S}\left(z\right)\right)-\lambda\left(z\right)=\nabla\cdot\phi\left(z\right)-\left\langle \nabla H\left(z\right),\phi\left(z\right)\right\rangle .\label{eq:sufficientconditioninvariance}
\end{equation}
\item The kernel $Q$ satisfies
\begin{equation}
\int\rho\left(\rd z\right)\lambda\left(z\right)Q\left(z,\rd z'\right)=\rho\left(\mathcal{S}^{-1}\left(\rd z'\right)\right)\lambda\left(\mathcal{S}\left(z'\right)\right).\label{eq:Qpseudoinvariance}
\end{equation}
\end{enumerate}
\end{condition}

Based on these assumptions, straightforward calculations show that
the following result holds.
\begin{prop}
\label{prop:Ginvariance}Assume (A\ref{hypA:Gassumptions}). Then
the PDMP admits $\rho$ as invariant distribution.
\end{prop}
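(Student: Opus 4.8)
The plan is to verify the sufficient condition for $\rho$-invariance that has already been extracted in the text, namely equation~(\ref{eq:PDMPinvariant}), by feeding in the three parts of~(A\ref{hypA:Gassumptions}) one after another. Recall that~(\ref{eq:PDMPinvariant}) was obtained from the requirement $\int\rho(\rd z)\,\mathcal{L}f(z)=0$ together with the integration-by-parts identity for the drift term, so it suffices to establish~(\ref{eq:PDMPinvariant}) for every $f$ in the domain of $\mathcal{L}$.

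First I would use the second part of~(A\ref{hypA:Gassumptions}), equation~(\ref{eq:sufficientconditioninvariance}), to eliminate the divergence term: since $\nabla\cdot\phi(z)-\langle\nabla H(z),\phi(z)\rangle=\lambda(\mathcal{S}(z))-\lambda(z)$, expanding~(\ref{eq:PDMPinvariant}) and cancelling the two copies of $\int\rho(\rd z)\,\lambda(z)f(z)$ reduces the claim to the single identity
\[
\int\rho(\rd z)\,\lambda(z)\int Q(z,\rd z')\,f(z')=\int\rho(\rd z)\,\lambda(\mathcal{S}(z))\,f(z).
\]
Next I would work on the left-hand side: interchanging the order of integration and invoking the pseudo-invariance relation~(\ref{eq:Qpseudoinvariance}) for $Q$ turns it into $\int f(z')\,\rho(\mathcal{S}^{-1}(\rd z'))\,\lambda(\mathcal{S}(z'))$. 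Finally, the first part of~(A\ref{hypA:Gassumptions}) says exactly that $\rho(\mathcal{S}^{-1}(\rd z'))=\rho(\rd z')$ as measures, so this integral equals $\int f(z')\,\rho(\rd z')\,\lambda(\mathcal{S}(z'))$, which is precisely the right-hand side; this completes the verification.

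The computation is short, and the delicate points are more about bookkeeping than about mathematics. One should check that Fubini's theorem applies when swapping the $z$ and $z'$ integrals, which holds for the bounded $f$ forming a core of the generator since $\lambda$ and $Q$ are non-negative; and one relies, exactly as in the derivation of~(\ref{eq:PDMPinvariant}) itself, on the regularity conditions on $\phi$, $H$ and on the domain of $\mathcal{L}$ invoked via~\citep{D_93}, so that $\int\rho\,\mathcal{L}f=0$ is genuinely sufficient for invariance and the boundary terms in the integration by parts vanish. The main thing to be careful about is aligning the push-forward notation $\rho(\mathcal{S}^{-1}(\rd z))$ across the first and third parts of the assumption; once that is done correctly, no extra structure on $\mathcal{S}$ — in particular, it need not be an involution — is required.
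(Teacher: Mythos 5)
Your proof is correct and follows essentially the same route as the paper's: both verify (\ref{eq:PDMPinvariant}) by applying A\ref{hypA:Gassumptions}.3 together with the $\rho$-preservation of $\mathcal{S}$ to rewrite $\int\rho(\rd z)\lambda(z)\int Q(z,\rd z')f(z')$ as $\int\rho(\rd z)\lambda(\mathcal{S}(z))f(z)$, and then invoke A\ref{hypA:Gassumptions}.2 to cancel the remaining terms. The only difference is the order in which the three assumptions are deployed, which is immaterial; your remark that $\mathcal{S}$ need not be an involution here is also consistent with the paper.
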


\subsubsection{Sufficient conditions for local methods\label{subsec:Sufficient-conditions-for-Local}}

Assume that $H\left(z\right)$ can be decomposed as follows 
\begin{equation}
H\left(z\right)=\sum_{i=1}^{n}H_{i}\left(z\right),\label{eq:Hlocalcsum}
\end{equation}
where potentially each $H_{i}\left(z\right)$ only depends on a subset
of the components of $z$. In this context, like in standard MCMC,
we might be interested in using a transition kernel which is a mixture
of $n$ kernels performing local updates. This can be achieved in
the PDMP framework by introducing an event rate of the form 
\begin{equation}
\lambda\left(z\right)={\displaystyle {\displaystyle \sum_{i=1}^{n}\lambda_{i}\left(z\right)}}\label{eq:Ratelocal}
\end{equation}
and a transition kernel of the form 
\begin{equation}
Q\left(z,\rd z'\right)=\sum_{i=1}^{n}\frac{\lambda_{i}\left(z\right)}{\lambda\left(z\right)}Q_{i}\left(z,\rd z'\right)\label{eq:Kernellocal}
\end{equation}
where $Q_{i}$ are Markov transition kernels. Let us write $\left[n\right]:=\left\{ 1,2,...,n\right\} $.
To simulate the event times of the resulting PDMP, one can associate
a clock to each index $i\in[n]$ and use a priority queue \citep{P_dW_12,M_K_K_14,BC_D_V_15}.
When it is possible to bound $\{\lambda_{i};i\in\left[n\right]\}$
locally in time, more elaborate thinning strategies have been developed
in \citep[Section 3.3.2]{BC_D_V_15} and \citep{K_K_16}.

Based on these structural assumptions on $\lambda$ and $Q$, we can
provide useful sufficient ``local'' conditions on $\phi,$ $\{\lambda_{i}:i\in\left[n\right]\}$
and $\{Q_{i}:i\in[n]\}$ to ensure that invariance of the associated
PDMP with respect to $\rho$ is satisfied. 

\begin{condition}\label{hypA:Lassumptions}Conditions on $\phi,$
$\{\lambda_{i}:i\in\left[n\right]\}$ and $\{Q_{i}:i\in[n]\}$
\begin{enumerate}
\item There exists a $\rho$-preserving mapping $\mathcal{S}:\mathcal{Z}\to\mathcal{Z}$.
\item The event rates $\{\lambda_{i}:i\in\left[n\right]\}$ satisfy
\begin{equation}
\sum_{i=1}^{n}\left\{ \lambda_{i}\left(\mathcal{S}\left(z\right)\right)-\lambda_{i}\left(z\right)\right\} =\nabla\cdot\phi\left(z\right)-\left\langle \nabla H\left(z\right),\phi\left(z\right)\right\rangle .\label{eq:LocalRateBalance}
\end{equation}
\item For all $i\in\left[n\right]$, the transition kernel $Q_{i}$ satisfies
\begin{equation}
\int\rho\left(\rd z\right)\lambda_{i}\left(z\right)Q_{i}\left(z,\rd z'\right)=\rho(\mathcal{S}^{-1}\left(\rd z'\right))\lambda_{i}(\mathcal{S}\left(z'\right)).\label{eq:Qlocalpseudoinvariance}
\end{equation}
\end{enumerate}
\end{condition}

If the functions $\{H_{i}:i\in\left[n\right]\}$ are differentiable
then Assumption A\ref{hypA:Lassumptions}.2 is satisfied for a divergence-free
vector field, i.e.~$\nabla\cdot\phi=0$\emph{, }if for all $i\in\left[n\right]$
\begin{equation}
\lambda_{i}\left(\mathcal{S}\left(z\right)\right)-\lambda_{i}\left(z\right)=-\left\langle \nabla H_{i}\left(z\right),\phi\left(z\right)\right\rangle .\label{eq:LocalLocalRateBalance}
\end{equation}

\begin{prop}
\label{prop:Linvariance}Assume (A\ref{hypA:Lassumptions}). Then
the PDMP admits $\rho$ as invariant distribution.
\end{prop}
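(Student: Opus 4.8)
The plan is to verify the sufficient condition (\ref{eq:PDMPinvariant}) directly, using the additive structure of $\lambda$ and $Q$ to split the problem into $n$ per-index identities, each of which is proved exactly as in the global argument behind Proposition~\ref{prop:Ginvariance}. The integration-by-parts manipulation turning $\int\rho(\rd z)\,\mathcal{L}f(z)=0$ into (\ref{eq:PDMPinvariant}) has already been carried out above, so it suffices to show that the left-hand side of (\ref{eq:PDMPinvariant}) vanishes for every $f$ in the domain of $\mathcal{L}$.

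First I would substitute (\ref{eq:Ratelocal}) and (\ref{eq:Kernellocal}), so that $\lambda(z)\,Q(z,\rd z')=\sum_{i=1}^{n}\lambda_i(z)\,Q_i(z,\rd z')$, and replace $\nabla\cdot\phi(z)-\langle\nabla H(z),\phi(z)\rangle$ by $\sum_{i=1}^{n}\{\lambda_i(\mathcal{S}(z))-\lambda_i(z)\}$ via A\ref{hypA:Lassumptions}.2. Since each $Q_i$ is a Markov kernel, $\int Q_i(z,\rd z')=1$, so in every summand the term $-\lambda_i(z)f(z)$ coming from the jump part cancels the $+\lambda_i(z)f(z)$ produced by $-\{\nabla\cdot\phi-\langle\nabla H,\phi\rangle\}f$. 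The left-hand side of (\ref{eq:PDMPinvariant}) then reduces to $\sum_{i=1}^{n}I_i$ with
\[
I_i:=\int\rho(\rd z)\,\lambda_i(z)\int Q_i(z,\rd z')\,f(z')-\int\rho(\rd z)\,\lambda_i(\mathcal{S}(z))\,f(z).
\]

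Next I would show $I_i=0$ for each $i\in[n]$. Integrating the measure identity (\ref{eq:Qlocalpseudoinvariance}) against $f$ rewrites the first integral in $I_i$ as $\int f(z')\,\lambda_i(\mathcal{S}(z'))\,\rho(\mathcal{S}^{-1}(\rd z'))$; by A\ref{hypA:Lassumptions}.1 the push-forward $\mathcal{S}_{\ast}\rho$ equals $\rho$, i.e.\ $\rho(\mathcal{S}^{-1}(\rd z'))=\rho(\rd z')$, so this is exactly $\int\rho(\rd z')\,\lambda_i(\mathcal{S}(z'))\,f(z')$, which is the second integral in $I_i$ after renaming $z'$ as $z$. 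Hence $I_i=0$, the sum vanishes, and $\rho$-invariance of the PDMP follows from \citep[Proposition 34.7]{D_93}.

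No genuine obstacle arises: the computation is bookkeeping, and the only points requiring care are keeping the push-forward notation $\rho(\mathcal{S}^{-1}(\rd z'))=\mathcal{S}_{\ast}\rho(\rd z')$ straight when moving $f$ under the integral, and noting that the $\rho$-preserving property of $\mathcal{S}$ alone suffices, with no need for $\mathcal{S}$ to be an involution. If one replaces A\ref{hypA:Lassumptions}.2 by the differentiable-$H_i$, divergence-free-$\phi$ hypothesis (\ref{eq:LocalLocalRateBalance}), one first sums that identity over $i$ and uses $\sum_i\nabla H_i=\nabla H$ together with $\nabla\cdot\phi=0$ to recover A\ref{hypA:Lassumptions}.2, and then proceeds as above.
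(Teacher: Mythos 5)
Your proof is correct and follows essentially the same route as the paper's: decompose $\lambda Q=\sum_i\lambda_i Q_i$, apply condition A\ref{hypA:Lassumptions}.3 and the $\rho$-preservation of $\mathcal{S}$ to each summand, and then invoke A\ref{hypA:Lassumptions}.2 to make the remaining integrand vanish. The only difference is cosmetic ordering (you invoke A\ref{hypA:Lassumptions}.2 first and cancel termwise, the paper substitutes the jump term first and uses A\ref{hypA:Lassumptions}.2 last), which does not change the argument.
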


\subsubsection{Sufficient conditions for doubly stochastic methods\label{subsec:Sufficient-conditions-for-Stochastic}}

Consider now a slight generalization of the previous scenario where
the target distribution cannot even be evaluated pointwise up to a
normalizing constant but there exists a measure $\mu$ on some measurable
space $\left(\Omega,\mathcal{G}\right)$ and a function $H_{\omega}\left(z\right):\Omega\times\mathcal{Z}\rightarrow\mathbb{R}$
which can be evaluated pointwise up to an additive constant such that 

\begin{equation}
H\left(z\right)=\int H_{\omega}\left(z\right)\thinspace\mu\left(\rd\omega\right).\label{eq:Unbiasedestimate}
\end{equation}
In this context, we consider an event rate of the form 
\begin{equation}
\lambda\left(z\right)=\int\lambda_{\omega}\left(z\right)\mu\left(\rd\omega\right)\label{eq:ratemixture}
\end{equation}
where $\lambda_{\omega}:\Omega\rightarrow\mathbb{R}^{+}$ and a transition
kernel of the form 
\begin{equation}
Q\left(z,\rd z'\right)=\frac{\int\lambda_{\omega}\left(z\right)\mu\left(\rd\omega\right)Q_{\omega}\left(z,\rd z'\right)}{\int\lambda_{\omega}\left(z\right)\mu\left(\rd\omega\right)},\label{eq:kernelmixture}
\end{equation}
where $Q_{\omega}$ is a Markov transition kernel from $\mathcal{Z}$
to $\mathcal{Z}$. In Section \ref{subsec:Sufficient-conditions-for-Local},
(\ref{eq:Hlocalcsum}), (\ref{eq:Ratelocal}) and (\ref{eq:Kernellocal})
simply correspond to (\ref{eq:Unbiasedestimate}), (\ref{eq:ratemixture})
and (\ref{eq:kernelmixture}) if we select $\mu$ as the measure such
that $\mu\left(\left\{ i\right\} \right)=1$ for all $i\in\Omega=[n]$.
The sufficient conditions of the previous section can be directly
generalized.

\begin{condition}\label{hypA:Sassumptions}Conditions on $\phi,$
$\{\lambda_{\omega}:\omega\in\Omega\}$ and $\{Q_{\omega}:\omega\in\Omega\}$ 
\begin{enumerate}
\item There exists a $\rho$-preserving mapping $\mathcal{S}:\mathcal{Z}\to\mathcal{Z}$.
\item The event rates $\{\lambda_{\omega}:\omega\in\Omega\}$ satisfy
\begin{equation}
\int\left\{ \lambda_{\omega}\left(\mathcal{S}\left(z\right)\right)-\lambda_{\omega}\left(z\right)\right\} \mu\left(\rd\omega\right)=\nabla\cdot\phi\left(z\right)-\left\langle \nabla H\left(z\right),\phi\left(z\right)\right\rangle .\label{eq:Ssufficientconditioninvariance}
\end{equation}
\item For all $\omega\in\Omega$, the transition kernel $Q_{\omega}$ satisfies
\begin{equation}
\int\rho\left(\rd z\right)\lambda_{\omega}\left(z\right)Q_{\omega}\left(z,\rd z'\right)=\rho\left(\mathcal{S}^{-1}\left(\rd z'\right)\right)\lambda_{\omega}\left(\mathcal{S}\left(z'\right)\right).\label{eq:SQpseudoinvariance}
\end{equation}
\end{enumerate}
\end{condition}

If $\mu$ is a probability measure and the derivative $\nabla H_{\omega}\left(z\right)$
is well-defined for almost all $\omega\in\Omega$ then under weak
regularity conditions, it follows from (\ref{eq:Unbiasedestimate})
that $\nabla H_{\omega}\left(z\right)$ is an unbiased estimate of
$\nabla H\left(z\right)$ when $\omega\sim\mu$ and\emph{ }Assumption
A\ref{hypA:Sassumptions}.2 will be satisfied for a divergence-free
field if\emph{ }
\begin{equation}
\lambda_{\omega}\left(\mathcal{S}\left(z\right)\right)-\lambda_{\omega}\left(z\right)=-\left\langle \nabla H_{\omega}\left(z\right),\phi\left(z\right)\right\rangle .\label{eq:SLocalLocalRateBalance}
\end{equation}
We will refer to this class of PD-MCMC as ``doubly stochastic''
in reference to doubly-stochastic Poisson processes.
\begin{prop}
\label{prop:Sinvariance}Assume (A\ref{hypA:Sassumptions}). Then
the PDMP admits $\rho$ as invariant distribution.
\end{prop}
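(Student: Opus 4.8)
The strategy is to verify that $\rho$ satisfies the generator criterion $\int\rho(\rd z)\,\mathcal{L}f(z)=0$ for every $f$ in the domain of $\mathcal{L}$; as recorded in the discussion leading to (\ref{eq:PDMPinvariant}), this amounts to checking (\ref{eq:PDMPinvariant}) itself. The computation is the direct ``continuum'' counterpart of the proofs of Propositions~\ref{prop:Ginvariance} and~\ref{prop:Linvariance}: the only structural change is that the finite sum $\sum_{i=1}^{n}$ is replaced throughout by the integral $\int\mu(\rd\omega)$, with Assumptions~A\ref{hypA:Sassumptions}.1--A\ref{hypA:Sassumptions}.3 playing the roles of A\ref{hypA:Lassumptions}.1--A\ref{hypA:Lassumptions}.3.

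Concretely, I would first insert the mixture representations (\ref{eq:ratemixture})--(\ref{eq:kernelmixture}) into the jump part of the generator (\ref{eq:GeneratorPDMP}), the factor $\lambda(z)$ cancelling the denominator of (\ref{eq:kernelmixture}), so that
\[
\lambda(z)\int Q(z,\rd z')\,[f(z')-f(z)]=\int\mu(\rd\omega)\,\lambda_{\omega}(z)\int Q_{\omega}(z,\rd z')\,[f(z')-f(z)].
\]
Integrating against $\rho(\rd z)$ and splitting the bracket into a ``gain'' term and a ``loss'' term, the loss term equals $\int\mu(\rd\omega)\int\rho(\rd z)\,\lambda_{\omega}(z)f(z)$ because each $Q_{\omega}$ is Markovian. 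For the gain term I would apply (\ref{eq:SQpseudoinvariance}): the measure $\int\rho(\rd z)\lambda_{\omega}(z)Q_{\omega}(z,\rd z')$ equals $\rho(\mathcal{S}^{-1}(\rd z'))\lambda_{\omega}(\mathcal{S}(z'))$, and since $\mathcal{S}$ is $\rho$-preserving (Assumption~A\ref{hypA:Sassumptions}.1) we have $\rho(\mathcal{S}^{-1}(\rd z'))=\rho(\rd z')$; hence the gain term is $\int\mu(\rd\omega)\int\rho(\rd z')\,\lambda_{\omega}(\mathcal{S}(z'))f(z')$. After relabelling $z'$ as $z$, the jump contribution to (\ref{eq:PDMPinvariant}) becomes $\int\mu(\rd\omega)\int\rho(\rd z)\,[\lambda_{\omega}(\mathcal{S}(z))-\lambda_{\omega}(z)]\,f(z)$.

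It then remains to exchange the order of integration and apply the rate-balance identity (\ref{eq:Ssufficientconditioninvariance}), which turns $\int[\lambda_{\omega}(\mathcal{S}(z))-\lambda_{\omega}(z)]\,\mu(\rd\omega)$ into $\nabla\cdot\phi(z)-\langle\nabla H(z),\phi(z)\rangle$; the jump contribution is thus $\int\rho(\rd z)\{\nabla\cdot\phi(z)-\langle\nabla H(z),\phi(z)\rangle\}f(z)$. Combining this with the integration-by-parts identity recorded just before (\ref{eq:PDMPinvariant}), namely $\int\rho(\rd z)\langle\phi(z),\nabla f(z)\rangle=-\int\rho(\rd z)\{\nabla\cdot\phi(z)-\langle\nabla H(z),\phi(z)\rangle\}f(z)$, the drift and jump terms cancel, so (\ref{eq:PDMPinvariant}) holds and $\rho$-invariance follows from \citep[Proposition~34.7]{D_93}.

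The algebra is routine; the points needing care are purely measure-theoretic. The main obstacle is justifying the two interchanges of integration -- splitting the jump term into its gain and loss pieces, and applying Fubini to swap $\mu(\rd\omega)$ with $\rho(\rd z)$ -- which require an integrability hypothesis such as $\int\int\rho(\rd z)\,\lambda_{\omega}(z)\,|f(z)|\,\mu(\rd\omega)<\infty$ (automatic when $f$ is bounded and $\lambda$ is $\rho$-integrable). Together with the regularity underlying the generator formula (\ref{eq:GeneratorPDMP}) (\citep[Theorem~26.14]{D_93}) and the integration by parts, this is part of what restricting to $f$ in the domain of $\mathcal{L}$ entails, and I would state these standing assumptions explicitly rather than dwell on them.
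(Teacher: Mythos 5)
Your argument is correct and is exactly what the paper intends: the paper omits the proof of Proposition~\ref{prop:Sinvariance}, stating only that it is ``similar to the proof of Proposition~\ref{prop:Linvariance},'' and your write-up is precisely that proof with the finite sum replaced by the integral over $\mu(\rd\omega)$ and Assumptions~A\ref{hypA:Sassumptions}.1--3 used in place of A\ref{hypA:Lassumptions}.1--3. The additional remarks on Fubini and integrability are sensible housekeeping that the paper leaves implicit.
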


\subsection{Existing PD-MCMC algorithms\label{subsec:Existing-PDMCMC-algorithms}}

All the existing algorithms we are aware of are based on the following
framework. The target distribution admits a density with respect to
Lebesgue measure on $\mathcal{X}=\mathbb{R}^{d}$ equal to $\pi\left(x\right)=\exp\left(-U\left(x\right)\right)$.
Letting $z=(x,v)$, an extended target distribution $\rho\left(\rd z\right)$
on $\mathcal{Z}=\mathcal{X}\times\mathcal{V}$ is then defined as
\begin{equation}
\rho\left(\rd z\right)=\pi\left(\rd x\right)\psi\left(\rd v\right),\label{eq:extendedtarget}
\end{equation}
where $\psi$ is an auxiliary distribution on $\mathcal{V}$, where
$\mathcal{V}$ can be for example either $\mathbb{R}^{d}$ or the
unit hypersphere $\mathbb{S}^{d-1}$ so that $n=2d$. The following
linear dynamics is then considered 
\[
\phi\left(z\right)=\left(v,0_{d}\right),
\]
so the resulting flow is analytically tractable and given by
\begin{equation}
\Phi_{t}\left(z\right)=\left(x+vt,v\right).\label{eq:LINEARDYNAMICS}
\end{equation}
In this case, we have $\nabla\cdot\phi=0$. Additionally, all these
algorithms rely on ~$\mathcal{S}(x,v)=(x,-v)$ which can be viewed
as a time reversal, so (\ref{eq:sufficientconditioninvariance}) becomes
\begin{equation}
\lambda\left(\mathcal{S}\left(z\right)\right)-\lambda\left(z\right)=\lambda\left(x,-v\right)-\lambda\left(x,v\right)=-\left\langle \nabla U\left(x\right),v\right\rangle .\label{eq:RateinvarianceLineardynamics}
\end{equation}
These algorithms differ in the way the event rate and the transition
kernels are specified. We just give a few examples here and refer
the reader to the list of references for other examples.

\subsubsection{Bouncy particle sampler }

This algorithm proposed in \citep{P_dW_12} exploits any additive
decomposition of the potential $U$, i.e. 
\begin{equation}
U\left(x\right)={\displaystyle \sum_{i=1}^{m}U_{i}\left(x\right).}\label{eq:localBPSintensity}
\end{equation}

For $\lambda_{\textrm{ref}}>0$, it uses the event rate
\[
\lambda\left(z\right)=\lambda_{\textrm{ref}}+{\displaystyle \sum_{i=1}^{m}}\left\langle \nabla U_{i}\left(x\right),v\right\rangle _{+}
\]
where $x_{+}:=\mathrm{max}\left(0,x\right)$. It also relies on the
transition kernel
\begin{equation}
Q\left(z,\rd z'\right)=\frac{\lambda_{\textrm{ref}}}{\lambda\left(z\right)}\delta_{x}(\rd x')\psi(\rd v')+{\displaystyle \sum_{i=1}^{m}}\frac{\left\langle \nabla U_{i}\left(x\right),v\right\rangle _{+}}{\lambda\left(z\right)}\delta_{x}(\rd x')\delta_{R_{\nabla U_{i}}(x)v}(\rd v'),\label{eq:localBPSkernel}
\end{equation}
where, for any vector field $\nabla W:\mathbb{R}^{d}\rightarrow\mathbb{R}^{d}$,
we define $R_{\nabla W}(x)$ as 
\begin{equation}
R_{\nabla W}(x)v:=v-2\frac{\langle\nabla W(x),v\rangle}{|\nabla W(x)|{}^{2}}\nabla W(x).\label{eq:localbounces}
\end{equation}
We note that (\ref{eq:localbounces}) corresponds to a bounce as it
can be interpreted as a Newtonian collision with the plane perpendicular
to $\nabla W$ at $x$. In \citep{P_dW_12}, a normal distribution
is used for $\psi$ but the uniform distribution on $\mathbb{S}^{d-1}$
can also been used \citep{M_16,D_BC_D_17}. We are in the scenario
where $\lambda$ and $Q$ are of the form (\ref{eq:Ratelocal}) and
(\ref{eq:Kernellocal}) with $n=m+1$, $\lambda_{i}\left(z\right)=\frac{1}{m}\left\langle \nabla U_{i}\left(x\right),v\right\rangle _{+}$
and $Q_{i}\left(z,\rd z'\right)=\delta_{x}(\rd x')\delta_{R_{\nabla U_{i}}(x)v}(\rd v')$
for $i\in\left[m\right]$ and $\lambda_{n}\left(z\right)=\lambda_{\textrm{ref}}$,
$Q_{n}\left(z,\rd z'\right)=\delta_{x}(\rd x')\psi(\rd v')$. It can
be checked that Assumption A\ref{hypA:Lassumptions} holds in this
scenario. In particular, Assumption A\ref{hypA:Lassumptions}.2 can
be verified by checking the stronger condition (\ref{eq:LocalLocalRateBalance}).
Indeed, if we write $\nabla H_{i}:=(\nabla_{x}H_{i},\nabla_{v}H_{i})$
then (\ref{eq:LocalLocalRateBalance}) becomes $\lambda_{i}\left(x,-v\right)-\lambda_{i}\left(x,v\right)=-\left\langle \nabla_{x}H_{i},v\right\rangle $
which is satisfied for $H_{i}\left(z\right):=U_{i}\left(x\right)$
for $i\in\left[m\right]$ and $H_{n}\left(z\right):=0$. For $m=1$,
we refer to this algorithm as the global BPS and for $m>1$ as the
local BPS. The local BPS is computationally advantageous compared
to BPS when either $U_{i}\left(x\right)$ only depends of a subset
of the components of $x$, as for sparse graphical models, and/or
when $m$ is very large, as for big data applications.

The BPS algorithm has been further extended to the scenario where
one has access to an unbiased estimate of $\nabla U$; see \citep{P_G_C_P_17}
and \citep[Section 4.4.2]{F_B_P_R_16}. The validity of this algorithm
can be established as an application of the results of Section \ref{subsec:Sufficient-conditions-for-Stochastic}.
We are not aware of any implementation of this algorithm in scenarios
where $\mu$ is not an atomic measure with finite support, in which
case the algorithm is the local BPS.

\subsubsection{Zig-Zag sampler}

This algorithm proposed in \citep{B_F_R_16,B_R_16} uses for $\psi$
the uniform distribution on $\left\{ -1,1\right\} ^{d}$\footnote{In this scenario, $\rho\left(\rd z\right)$ does not admit a density
with respect to Lebesgue measure but the results discussed previously
can be directly extended to this scenario. }. It relies on the following event rates 
\[
\lambda_{i}\left(z\right)=\lambda_{\textrm{ref},i}+\left\langle \nabla_{i}U\left(x\right),v_{i}\right\rangle _{+},
\]
while the transition kernel is selected as
\[
Q_{i}\left(z,\rd z'\right)=\delta_{x}(\rd x')\delta_{-v_{i}}(\rd v'_{i})\prod_{j\neq i}\delta_{v_{j}}(\rd v'_{j}).
\]

It is also possible to further exploit any additive decomposition
of $U\left(x\right)$ within this framework and this has been used
to develop an efficient sampling algorithm for big data \citep{B_F_R_16}.
Again, it is easy to show that Assumption A\ref{hypA:Lassumptions}
is satisfied. 

\subsubsection{BPS sampler with randomized bounces\label{subsec:Randomizedbounces}}

Alternatives to bounces of the form (\ref{eq:localbounces}) have
been proposed where one uses
\begin{equation}
Q\left(z,\rd z'\right)=\delta_{x}(\rd x')Q_{x}\left(v,\rd v'\right)\label{eq:degeneratekernel}
\end{equation}
and $\psi\left(v\right)=g\left(|v|\right)$. In this case, Assumption
A\ref{hypA:Gassumptions}.3 is verified if 
\begin{equation}
\int\psi\left(\rd v\right)\lambda\left(x,v\right)Q_{x}\left(v,\rd v'\right)=\psi\left(\rd v'\right)\lambda\left(x,-v'\right).\label{eq:Qpseudodetailedbalancesimplified}
\end{equation}

Here $\psi$ will be the standard multivariate normal distribution.
We consider the scenario where $\lambda\left(x,v\right)=\left\langle \nabla U\left(x\right),v\right\rangle _{+}$
as in the global BPS.%
{} To present the various methods proposed in the literature, a decomposition
of the velocity similar to that adopted in \citep{M_S_17} is useful:
\begin{equation}
v=a_{\perp}\thinspace n_{\perp}+a_{\parallel}\thinspace n_{\parallel},\label{eq:Vdecomposition}
\end{equation}
where $n_{\perp}$ and $n_{\parallel}$ are unit norm vectors such
that
\begin{equation}
n_{\parallel}\propto-\nabla U\left(x\right),~n_{\perp}\propto v-\left\langle n_{\parallel},v\right\rangle n_{\parallel}.\label{eq:ParallelOrthogonal}
\end{equation}
All the randomized bounce procedures return a vector $v'$
\begin{equation}
v'=a_{\perp}'\thinspace n'_{\perp}+a_{\parallel}'\thinspace n{}_{\parallel},\label{eq:Vdecompositionafter}
\end{equation}
where $\langle n'_{\perp},n_{\parallel}\rangle=0$. With this notation,
we obtain $\lambda\left(x,-v'\right)=a_{\parallel}'{}_{+}|\nabla U\left(x\right)|.$

Let $\chi\left(k\right)$ and $\chi^{2}\left(k\right)$ be the $\chi$
and $\chi^{2}$ distributions respectively, with $k$ degrees of freedom.
Under $\psi$, the random variables $a_{\perp}$ and $a_{\parallel}$
are independent and satisfy
\begin{align}
a_{\perp}\sim\chi\left(d-1\right),~a_{\parallel}\sim\mathcal{N}\left(0,1\right).\label{eq:priorab}
\end{align}
Indeed, we have $a_{\perp}^{2}\sim\chi^{2}\left(d-1\right)$ and $a_{\perp}\geq0$.
We give below some examples of kernels $Q_{x}(v,\rd v')$ satisfying
Equation (\ref{eq:Qpseudodetailedbalancesimplified}).%

\begin{enumerate}
\item Independent sampling \citep{F_B_P_R_16}: \citep{F_B_P_R_16} proposes
using $Q_{x}\left(v,\rd v'\right)\propto\psi\left(\rd v'\right)\lambda\left(x,-v'\right)\propto a'_{\parallel+}\psi\left(\rd v'\right)$
which satisfies (\ref{eq:Qpseudodetailedbalancesimplified}) but a
scheme to sample this distribution was not given. Using the parameterization
(\ref{eq:Vdecomposition})-(\ref{eq:Vdecompositionafter}), (\ref{eq:priorab})
shows this can be achieved by sampling $a_{\parallel}'$ according
to a density proportional to $a_{\parallel+}'$ times the standard
normal density, which is equivalent to sampling $a_{\parallel}'\sim\chi\left(2\right)$.
Finally, sample $v^{*}\sim\psi$ and set $a_{\perp}'\thinspace n'_{\perp}=v^{*}-\left\langle v^{*},n{}_{\parallel}\right\rangle n_{\parallel}$.
\item Forward-event chain \citep{M_S_17}: In \citep{M_S_17}, $\psi$ is
the uniform distribution on $\mathbb{S}^{d-1},$ whereas we consider
the scenario where $\psi$ is the normal distribution. One uses $n'_{\perp}=n{}_{\perp}$,
set $a_{\parallel}'=-a_{\parallel}$ and $a_{\perp}'\sim\chi\left(d-1\right)$.
Alternatively, sample $a_{\parallel}'\sim\chi\left(2\right)$ and
set $a_{\perp}'=a_{\perp}$. For either scheme, we recover the method
of \citep{M_S_17} on $\mathbb{S}^{d-1}$ by normalizing $v',$ i.e.~setting
$\bar{v}'=v'/|v'|$. %
\item Autoregressive bounce: this is a new scheme where one samples $a_{\parallel}'\sim\chi\left(2\right)$
with probability $p_{b}$ and $a_{\parallel}'=-a_{\parallel}$ otherwise,
sample $v^{*}\sim\psi$ and set $a_{\perp}^{*}\thinspace n{}_{\perp}^{*}=v^{*}-\left\langle v^{*},n{}_{\parallel}\right\rangle n_{\parallel}$.
Finally, set $a_{\perp}'\thinspace n'_{\perp}=\rho\thinspace a_{\perp}\thinspace n{}_{\perp}+\sqrt{1-\rho^{2}}\thinspace a_{\perp}^{*}\thinspace n{}_{\perp}^{*}$
for $\rho\in[-1,1]$.
\end{enumerate}
The properties of these randomized bounces are not yet well understood.
In Section \ref{sec:Numerical-results}, we compare them experimentally
on a variety of models.

\subsection{Hamiltonian PD-MCMC\label{subsec:Hamiltonian-PDMP}}

Although all previously proposed methods rely on the linear flow (\ref{eq:LINEARDYNAMICS}),
the framework presented in Section \ref{subsec:Sufficient-conditions-for-invariance}
is much more flexible. We exploit here this generalization to provide
novel continuous-time PD-MCMC algorithms relying on Hamiltonian dynamics.\footnote{The first arXiv version of \citep{BC_D_V_15} proposed a version of
the BPS algorithm using Hamiltonian dynamics but uses a different
approach based on manifolds. The algorithm suggested therein does
not preserve the correct invariant distribution. } As in Section \ref{subsec:Existing-PDMCMC-algorithms}, we consider
targets of the form $\rho(z)=\pi\left(x\right)\psi\left(v\right)$
with $\pi\left(x\right)=\exp\left(-U\left(x\right)\right)$ being
the density of interest on $\mathcal{X}=\mathbb{R}^{d}$ and $\psi$
the standard multivariate normal on $\mathcal{V=\mathbb{R}}^{d}$.
We use here the Hamiltonian flow $\Phi_{t}$ associated with the Hamiltonian
\begin{equation}
\widehat{H}\left(z\right)=V\left(x\right)+K(v),\label{eq:HamiltonianApprox}
\end{equation}
where $K(v)=v^{T}v/2$ and $\mu\left(x\right)\propto\exp(-V\left(x\right))$
is an auxiliary probability density ensuring $\Phi_{t}$ is analytically
tractable, e.g., $V$ is quadratic or linear \citep{P_P_14}. For
example if $\pi\left(x\right)$ is a posterior density arising from
a Gaussian prior, then $\mu\left(x\right)$ could be this Gaussian
prior. Alternatively, $\mu\left(x\right)$ can always be selected
as a Gaussian approximation to $\pi\left(x\right)$. We can then rewrite
the target as $\rho(z)=\exp\left(-H\left(z\right)\right)$ where 
\[
H\left(z\right)=\widetilde{U}\left(x\right)+V\left(x\right)+K(v),
\]
where $\widetilde{U}\left(x\right):=U\left(x\right)-V\left(x\right)$.
This is the same rationale as in elliptical slice sampling-type algorithms
\citep{M_A_2010,B_C_2016}: both schemes use an exact Hamiltonian
dynamics associated with an approximation of $\pi$ to explore the
space. The difference with these algorithms and the method proposed
here is that we correct for the discrepancy between $\mu$ and $\pi$
by using a PDMP mechanism instead of slice sampling techniques. 

The Hamiltonian flow $\Phi_{t}$ is induced by the ODE of drift $\phi=\left(\phi_{x},\phi_{v}\right)$
where $\phi_{x}=\nabla_{v}\widehat{H}\left(z\right)=v$ and $\phi_{v}=-\nabla_{x}\widehat{H}\left(z\right)=-\nabla V\left(x\right)$.
Hence, we have $\nabla\cdot\phi\left(z\right)=0$ and 

\begin{eqnarray*}
\nabla\cdot\phi\left(z\right)-\langle\nabla H\left(z\right),\phi(z)\rangle & = & -\left\langle \nabla_{x}H\left(z\right),\phi_{x}\right\rangle -\left\langle \nabla_{v}H\left(z\right),\phi_{v}\right\rangle \\
 & = & -\left\langle \nabla\widetilde{U}\left(x\right),v\right\rangle -\left\langle \nabla V\left(x\right),v\right\rangle +\left\langle \nabla V\left(x\right),v\right\rangle \\
 & = & -\left\langle \nabla\widetilde{U}(x),v\right\rangle .
\end{eqnarray*}

One can check that Assumption A.\ref{hypA:Gassumptions} is thus verified
for $\mathcal{S}\left(z\right)=(x,-v)$ if we use an event rate and
transition kernel as in the `global' BPS but based on $\tilde{U}$
only\footnote{For $\widetilde{U}=0$, this algorithm corresponds to a continuous-time
HMC algorithm with momentum/velocity refreshment at Poisson times. }
\begin{eqnarray*}
\lambda\left(z\right) & := & \lambda_{\textrm{ref}}+\left\langle \nabla\widetilde{U}\left(x\right),v\right\rangle _{+},\\
Q\left(z,\rd z'\right) & := & \frac{\lambda_{\textrm{ref}}}{\lambda\left(z\right)}\delta_{x}(\rd x')\psi(\rd v')+\frac{\left\langle \nabla\widetilde{U}\left(x\right),v\right\rangle _{+}}{\lambda\left(z\right)}\delta_{x}(\rd x')\delta_{R_{\nabla\widetilde{U}}(x)v}(\rd v').
\end{eqnarray*}

We can alternatively use the randomized bounces described in Section
\ref{subsec:Randomizedbounces} substituting $\widetilde{U}$ for
$U$. Figure \ref{fig:Examples-of-paths} illustrates a sample path
obtained from the resulting Hamiltonian BPS algorithm. Local and doubly
stochastic versions of this algorithm as for BPS \citep{P_dW_12,BC_D_V_15,P_P_14}
can also be directly developed.%
{} 

In the big data examples considered in \citep{BC_D_V_15,B_F_R_16,P_G_C_P_17},
one could for example use for $\mu$ a Gaussian approximation of $\pi$.
A local algorithm can then be obtained using for $\nabla\widetilde{U}_{i}$
the difference of the gradient of the log-likelihood corresponding
to data $i$ and the properly rescaled gradient of the log-approximate
posterior, as in \citep{B_F_R_16}. If the terms $\nabla\widetilde{U}_{i}$
are locally bounded, we can simulate exactly the PDMP using thinning
techniques which boil down to data subsampling \citep{BC_D_V_15,B_F_R_16}.
This provides an alternative to \citep{C_F_G_14} which also exploits
Hamiltonian dynamics and subsampling but does not preserve $\pi$
as invariant distribution.

\begin{figure}
\includegraphics[width=0.9\columnwidth]{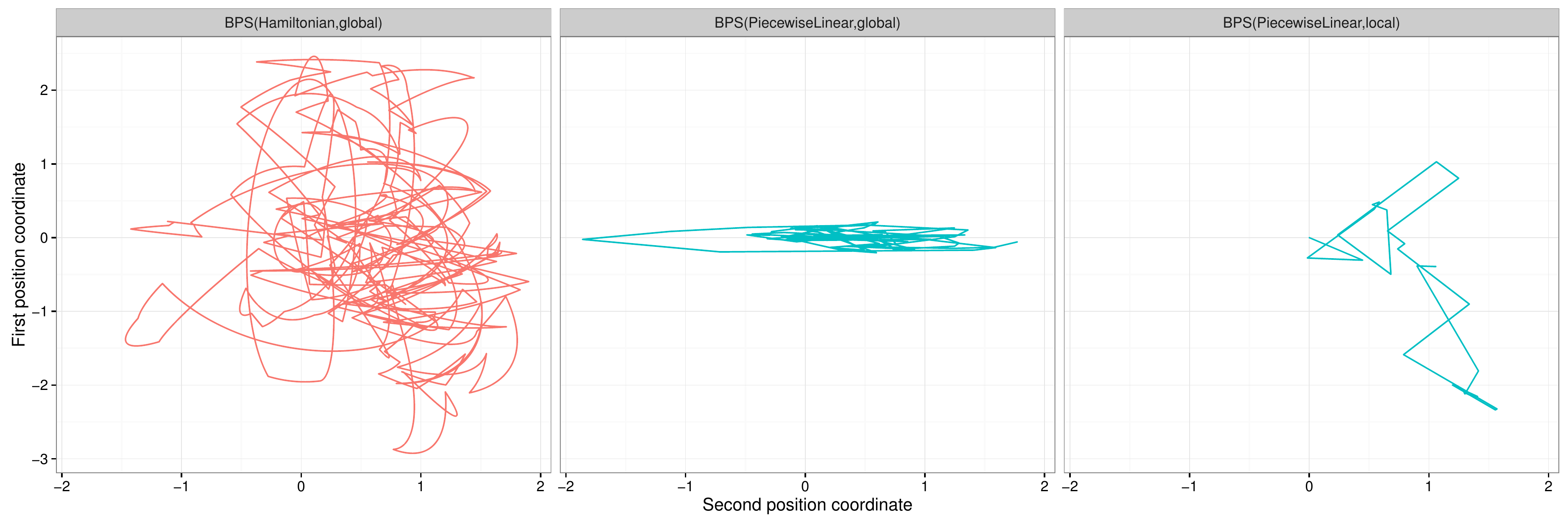}\centering\caption{\label{fig:Examples-of-paths}Examples of paths for the Hamiltonian
BPS (left), global BPS (middle) and local BPS (blue). All algorithms
are run for a wall clock time of 150ms on a 1000-dimensional Gaussian
latent field with sparsely observed Poisson distributed observations
(one observation for every 100 latent variables), see Section \ref{subsec:Hamiltonian-PDMP-results}
for details. The first two position coordinates are shown. }
\end{figure}

Finally, we also note that the methods introduced in this section
can be combined with the HMC algorithm of \citep{P_P_14} proposed
to perform exact simulation of constrained normal distributions. This
extends significantly the applicability of the work in \citep{P_P_14},
which can be viewed as a special case where $\widetilde{U}=0$. An
alternative approach to constrained problems is proposed in \citep{B_BC_D_D_F_R_J_16}
but it is limited to piecewise-linear dynamics.

\subsection{Using generalized Metropolis\textendash Hastings transitions at event
times\label{subsec:GMHalgorithm}}

All the algorithms we have considered so far are such that only a
part of the state $z=\left(x,v\right)$ is updated at event times,
i.e.,\ the transition kernel is of the form $Q\left(z,\rd z'\right)=\delta_{x}(\rd x')Q_{x}\left(v,\rd v'\right).$
We might be interested in designing more general transitions kernels
satisfying Assumption A\ref{hypA:Gassumptions}.3 and similarly Assumption
A\ref{hypA:Lassumptions}.3 or Assumption A\ref{hypA:Sassumptions}.3.

For sake of illustration, consider Assumption A\ref{hypA:Gassumptions}.3.
This can be rewritten as 
\begin{equation}
\int\bar{\rho}\left(\rd z\right)Q\left(z,\rd z'\right)=\bar{\rho}\left(\mathcal{S}^{-1}\left(\rd z'\right)\right)\label{eq:Qpseudoinvariancerewritten}
\end{equation}
for the probability measure $\bar{\rho}\left(\rd z\right)\propto\rho\left(\rd z\right)\lambda\left(z\right)$
assuming that $\int\rho\left(\rd z\right)\lambda\left(z\right)<\infty$,
a weak condition which we assume holds. If the mapping $\mathcal{S}$
is an involution, i.e., $\mathcal{S}^{-1}=\mathcal{S}$, and we can
design a kernel $Q$ satisfying the so-called \emph{skewed detailed
balance }condition 
\begin{equation}
\bar{\rho}\left(\rd z\right)Q\left(z,\rd z'\right)=\bar{\rho}\left(\mathcal{S}\left(\rd z'\right)\right)Q\left(\mathcal{S}\left(z'\right),\mathcal{S}\left(\rd z\right)\right),\label{eq:skeweddetailedbalancecondition}
\end{equation}
then it follows directly by integrating both terms in this equality
with respect to variable $z$ that it will satisfy (\ref{eq:Qpseudoinvariancerewritten}).

We present here a generic mechanism which can be used to achieve this
known as the Generalized Metropolis\textendash Hastings (GMH) algorithm.
The GMH algorithm is a simple extension of MH; see for example \citep[pp. 74--77]{lelievreroussetstoltz2010}.
For a probability measure $\nu\left(\rd z\right)=\nu\left(z\right)\rd z$
on $\mathcal{Z}$, let us consider the following GMH kernel defined
for a Markov proposal kernel $M$ by 
\begin{equation}
T\left(z,\rd z'\right)=\beta\left(z,z'\right)M\left(z,\rd z'\right)+\left\{ 1-\int\beta\left(z,w\right)M\left(z,\rd w\right)\right\} \delta_{\mathcal{S}\left(z\right)}\left(\rd z'\right)\label{eq:GMHkernel}
\end{equation}
where 
\begin{equation}
\beta\left(z,z'\right)=g\left(\frac{\nu\left(\mathcal{S}\left(\rd z'\right)\right)M\left(\mathcal{S}\left(z'\right),\mathcal{S}\left(\rd z\right)\right)}{\nu\left(\rd z\right)M\left(z,\rd z'\right)}\right).\label{eq:GMHkernelacceptanceproba}
\end{equation}
We make the following assumptions:

\begin{condition}\label{hypA:GMHassumptions}Conditions on $\nu,$
$\mathcal{S},$ $M$ and $g$
\begin{enumerate}
\item The mapping $\mathcal{S}$ is an involution, i.e., $\mathcal{S}^{-1}=\mathcal{S}$.
\item The Radon-Nikodym derivative 
\[
\frac{\nu\left(\mathcal{S}\left(\rd z'\right)\right)M\left(\mathcal{S}\left(z'\right),\mathcal{S}\left(\rd z\right)\right)}{\nu\left(\rd z\right)M\left(z,\rd z'\right)}
\]
is defined and positive for almost all $\left(z,z'\right)\in\mathcal{Z\times Z}.$
\item The function $g:\mathbb{R}^{+}\rightarrow\left[0,1\right]$ satisfies
$g\left(r\right)=rg\left(1/r\right)$.
\end{enumerate}
\end{condition}

Assumption A\ref{hypA:GMHassumptions}.1 is satisfied for $g\left(r\right)=\min\left(1,r\right)$.
For a deterministic proposal $M\left(z,\rd z'\right)=\delta_{\Psi\left(z\right)}\left(\rd z'\right),$
Assumption A\ref{hypA:GMHassumptions}.3 is satisfied if $\Psi$ admits
an inverse $\Psi^{-1}$ such that
\begin{equation}
\Psi^{-1}=\mathcal{S}\circ\Psi\circ\mathcal{S}\label{eq:MappingBalance}
\end{equation}
and then the acceptance probability is given by 
\begin{equation}
\beta\left(z,z'\right)=\beta\left(z\right)=g\left(\frac{\nu\left(\mathcal{S}\circ\Psi\left(\rd z\right)\right)}{\nu\left(\rd z\right)}\right).\label{eq:AcceptanceProbaGMHdeterministicmoves}
\end{equation}

\begin{prop}
\label{prop:GMHinvariance}Assume (A\ref{hypA:GMHassumptions}). Then
the GMH kernel $T$ defined by (\ref{eq:GMHkernel}) satisfies the
following skewed detailed balance condition 
\begin{equation}
\nu\left(\rd z\right)T\left(z,\rd z'\right)=\nu\left(\mathcal{S}\left(\rd z'\right)\right)T\left(\mathcal{S}\left(z'\right),\mathcal{S}\left(\rd z\right)\right).\label{eq:skeweddetailedbalanceconditiongeneral}
\end{equation}
If additionally $\mathcal{S}$ is a $\nu$-preserving mapping then
the GMH kernel is $\nu$-invariant.
\end{prop}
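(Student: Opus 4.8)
The plan is to verify the skewed detailed balance condition \eqref{eq:skeweddetailedbalanceconditiongeneral} by splitting the GMH kernel $T$ into its ``accept'' part and its ``reject'' part and checking the symmetry separately for each. Write $T(z,\rd z') = T_{\mathrm{acc}}(z,\rd z') + T_{\mathrm{rej}}(z,\rd z')$ where $T_{\mathrm{acc}}(z,\rd z') = \beta(z,z') M(z,\rd z')$ and $T_{\mathrm{rej}}(z,\rd z') = \left\{1 - \int \beta(z,w) M(z,\rd w)\right\} \delta_{\mathcal{S}(z)}(\rd z')$. The goal is to show that each of the two measures $\nu(\rd z) T_{\bullet}(z,\rd z')$ on $\mathcal{Z}\times\mathcal{Z}$ is invariant under the map $(z,z') \mapsto (\mathcal{S}(z'), \mathcal{S}(z))$; summing then gives \eqref{eq:skeweddetailedbalanceconditiongeneral}, and this suffices since the right-hand side is exactly the pushforward of $\nu(\rd z')T(z',\rd z)$ under that map (using $\mathcal{S}^{-1}=\mathcal{S}$ from Assumption A\ref{hypA:GMHassumptions}.1).

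For the accept part, the key identity is the functional equation $g(r) = r\, g(1/r)$ from Assumption A\ref{hypA:GMHassumptions}.3. Denoting by $r(z,z')$ the Radon--Nikodym derivative appearing in \eqref{eq:GMHkernelacceptanceproba} — which is well-defined and positive a.e.\ by Assumption A\ref{hypA:GMHassumptions}.2 — one checks that $r(\mathcal{S}(z'),\mathcal{S}(z)) = 1/r(z,z')$ as a consequence of $\mathcal{S}$ being an involution, so that $\beta(z,z') \,\nu(\rd z) M(z,\rd z') = g(r(z,z'))\, \nu(\rd z)M(z,\rd z')$ and, symmetrically, $\beta(\mathcal{S}(z'),\mathcal{S}(z))\,\nu(\mathcal{S}(\rd z'))M(\mathcal{S}(z'),\mathcal{S}(\rd z)) = g(1/r(z,z'))\, \nu(\mathcal{S}(\rd z'))M(\mathcal{S}(z'),\mathcal{S}(\rd z))$. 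Since by definition of $r$ we have $\nu(\mathcal{S}(\rd z'))M(\mathcal{S}(z'),\mathcal{S}(\rd z)) = r(z,z')\,\nu(\rd z)M(z,\rd z')$, multiplying by $g(1/r(z,z'))$ and invoking $r g(1/r) = g(r)$ shows the two sides agree; this is where the condition on $g$ does its work. Care is needed here to phrase the RN-derivative manipulations at the level of measures rather than densities, but since $\nu$ has a Lebesgue density and $M$ is a Markov kernel the formal computation is justified in the a.e.\ sense.

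For the reject part, the Dirac mass $\delta_{\mathcal{S}(z)}(\rd z')$ forces $z' = \mathcal{S}(z)$, hence $\mathcal{S}(z') = z$ and $\mathcal{S}(z) = z'$ (again using the involution property), so that $\delta_{\mathcal{S}(z')}(\rd z) = \delta_{\mathcal{S}(z)}(\rd z')$ as measures on $\mathcal{Z}\times\mathcal{Z}$ after the swap; it remains to show the scalar prefactor $1 - \int\beta(z,w)M(z,\rd w)$ is symmetric, i.e.\ that $\int\beta(z,w)M(z,\rd w)$ evaluated at $z$ equals the same quantity evaluated at $\mathcal{S}(z')=z$ weighted by $\nu$ appropriately — which reduces to the observation that on the support of $\delta_{\mathcal{S}(z)}$ the relevant $\nu$-masses $\nu(\rd z)$ and $\nu(\mathcal{S}(\rd z'))$ coincide. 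This part is essentially bookkeeping once the involution identities are in hand. The main obstacle is the first one: getting the Radon--Nikodym algebra in the accept term to interact correctly with the functional equation $g(r)=rg(1/r)$ while keeping everything measure-valued and a.e.-valid; the rejection term and the final ``additionally'' clause (apply \eqref{eq:skeweddetailedbalanceconditiongeneral}, integrate in $z$, and use $\mathcal{S}_*\nu = \nu$ to collapse the right-hand side to $\nu(\rd z')$) are then immediate.
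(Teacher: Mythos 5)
Your proposal is correct and follows essentially the same route as the paper's proof: decompose $T$ into its acceptance and rejection parts, use the involution property of $\mathcal{S}$ together with the push-forward identities to establish that the Radon--Nikodym derivative satisfies $r\left(\mathcal{S}(z'),\mathcal{S}(z)\right)=1/r(z,z')$, apply $g(r)=rg(1/r)$ to the acceptance term, handle the Dirac rejection term by the same involution bookkeeping, and obtain $\nu$-invariance by integrating the skewed detailed balance identity and using $\mathcal{S}_{\ast}\nu=\nu$. The only difference is one of detail: the paper carries out the chain of integral identities justifying the measure-level Radon--Nikodym manipulation explicitly, whereas you assert it with a brief justification, but the argument is the same.
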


The proof of this result follows from direct calculations given in
the Appendix and can also be found in \citep[pp. 74--77]{lelievreroussetstoltz2010}.
Using this result, it is possible to check easily Assumption A\ref{hypA:Gassumptions}.3
for the BPS and Zig-Zag processes. For example, for the BPS, $Q$
is of the form (\ref{eq:GMHkernel}) with $\nu=\bar{\rho}$, $\mathcal{S}^{-1}=\mathcal{S}$,
$g\left(r\right)=\min\left(1,r\right)$ as we use a deterministic
proposal $\Psi\left(z\right)=\left(x,R_{\nabla U}\left(x\right)v\right)$
which verifies $\Psi^{-1}=\mathcal{S}\circ\Psi\circ\mathcal{S}$ so
$\beta\left(z,z'\right)=1$ for all $z,z'$. Hence by Proposition
\ref{prop:GMHinvariance}, $Q$ satisfies the skewed detailed balance
(\ref{eq:skeweddetailedbalancecondition}), hence it satisfies (\ref{eq:Qpseudoinvariancerewritten}).

The benefit of the GMH approach is that it allows us to define much
more general kernels at event times. For example one could use a deterministic
proposal with $\Psi\left(z\right)=\left(x,R_{\nabla\widehat{U}}\left(x\right)v\right)$
where $\nabla\widehat{U}$ is a computationally cheap approximation
of $\nabla U$. It is valid to use such a deterministic proposal at
it satisfies $\Psi^{-1}\left(z\right)=\mathcal{S}\circ\Psi\circ\mathcal{S}(z)$.
In this case, there is a probability of the bounce being rejected
and setting $z'\leftarrow\mathcal{S}\left(z\right)$. We can also
use transition kernels which modify the component $x$ of $z$.%

\section{Discrete-time PDMP and PD-MCMC\label{sec:Discrete-Time-PDMP-PDMCMC}}

We introduce here the class of discrete-time PDMP and present general
conditions for such processes to ensure invariance w.r.t. a strictly
positive density $\rho\left(z\right)=\exp(-H\left(z\right))$. These
conditions parallel the conditions given Section \ref{subsec:Sufficient-conditions-for-invariance}
for continuous-time algorithms. 

\subsection{Discrete-time PDMP}

As in the continuous-time scenario, we assume for simplicity that
$\mathcal{Z}=\mathbb{R}^{n}$. A $\mathcal{Z}$-valued discrete-time
PDMP process $\left\{ z_{t};t\in\mathbb{N}\right\} $ involves a deterministic
dynamics altered by random jumps at random event times. It is defined
through 
\begin{enumerate}
\item a diffeomorphism $\Phi:\mathcal{Z}\rightarrow\mathcal{Z}$ with the
absolute value of the determinant of the Jacobian satisfying $|\nabla\Phi\left(z\right)|>0$
for all $z$,
\item an acceptance probability $\alpha:\mathcal{Z}\rightarrow[0,1]$ with
$1-\alpha\left(z\right)$ being the probability of having an event
at the next time step when the current state is $z$, and
\item a Markov transition kernel $Q$ from $\mathcal{Z}$ to $\mathcal{Z}$
where the state at event time $t$ is given by $z_{t}\sim Q\left(z_{t-1},\cdot\right)$.
\end{enumerate}
Algorithm \ref{alg:continuoustimePDMP-2} describes how to simulate
the path of a discrete-time PDMP. It will be convenient to use the
conventions $\prod_{i=0}^{-1}=1$, $\Phi^{0}\left(z\right)=z$ and
$\Phi^{r+1}\left(z\right)=\Phi^{r}\circ\Phi\left(z\right)$ for $r\in\mathbb{N}$. 

\begin{algorithm}[H]
\protect\caption{Simulation of discrete-time PDMP ~\label{alg:continuoustimePDMP-2}}

\begin{enumerate}
\item Initialize $z_{0}$ arbitrarily on $\mathcal{Z}$ and set $t_{0}\leftarrow0$. 
\item for $k=1,2,\ldots$ do
\begin{enumerate}
\item Sample inter-event time $\tau_{k}$, where $\tau_{k}$ is a non-negative
integer-valued random variable such that 
\begin{equation}
\mathbb{P}\left(\tau_{k}=j\right)=\left\{ 1-\alpha\left(\Phi^{j}\left(z_{t_{k-1}}\right)\right)\right\} \prod_{i=0}^{j-1}\alpha\left(\Phi^{i}\left(z_{t_{k-1}}\right)\right).\label{eq:sim_eventtimeDT}
\end{equation}
\item If $\tau_{k}\geq1$ then for $r\in\{1,...,\tau_{k}\}$$,~$set 
\begin{equation}
z_{t_{k-1}+r}\gets\Phi^{r}(z_{t_{k-1}}).\label{eq:discreteflow}
\end{equation}
\item Set $t_{k}\gets t_{k-1}+\tau_{k}+1$ and sample
\begin{equation}
z_{t_{k}}\sim Q(z_{t_{k}-1},\cdot).\label{eq:SimTransition-DT}
\end{equation}
\end{enumerate}
\end{enumerate}
\end{algorithm}

The process $\left\{ z_{t};t\in\mathbb{N}\right\} $ is nothing but
a Markov process of transition kernel%
\begin{equation}
K\left(z,\rd z'\right)=\alpha\left(z\right)\delta_{\Phi\left(z\right)}\left(\rd z'\right)+\left(1-\alpha\left(z\right)\right)Q\left(z,\rd z'\right).\label{eq:kerneldiscretetimePDMP}
\end{equation}

\subsection{From discrete-time PDMP to PD-MCMC\label{subsec:Sufficient-conditions-for-invariance-discretetime}}

Similarly to Section \ref{subsec:Sufficient-conditions-for-invariance},
assume we are interested in sampling a strictly positive density $\rho\left(z\right)$
given by (\ref{eq:Pho(z)=00003Dexp(-H(z))}) using a discrete-time
PDMP process. Invariance of the kernel $K$ with respect to $\rho$
is satisfied if, by definition, one has 
\begin{equation}
\mathit{\int\rho\left(\rd z\right)K\left(z,\rd z'\right)}=\rho\left(\rd z'\right).\label{eq:invariancePDMC-DT}
\end{equation}
From (\ref{eq:kerneldiscretetimePDMP}), (\ref{eq:invariancePDMC-DT})
can be rewritten as
\begin{equation}
\rho\left(\Phi^{-1}\left(z'\right)\right)\alpha\left(\Phi^{-1}\left(z'\right)\right)\left|\nabla\Phi^{-1}\left(z'\right)\right|\rd z'+\int\rho\left(\rd z\right)\left\{ 1-\alpha\left(z\right)\right\} Q\left(z,\rd z'\right)=\rho\left(\rd z'\right).\label{eq:invariancekerneldiscretetime}
\end{equation}
All the following developments could also be adapted to sample from
distributions on discrete spaces but this will not be discussed here.

\subsubsection{Sufficient conditions for global methods\label{subsec:Sufficient-conditions-for-Globa-discrete}}

We provide here useful sufficient conditions on $\Phi,$ $\alpha$,
and $Q$ to ensure $\rho$-invariance of the associated discrete-time
PDMP, without making any structural assumption on these objects.

\begin{condition}\label{hypA:Gassumptions-DT}Conditions on $\Phi,$
$\alpha$, and $Q$ 
\begin{enumerate}
\item There exists a $\rho$-preserving mapping $\mathcal{S}:\mathcal{Z}\to\mathcal{Z}$.
\item The acceptance probability $\alpha$ satisfies 
\begin{equation}
\left\{ -\log\alpha\left(\mathcal{S}\circ\Phi\left(z\right)\right)\right\} -\left\{ -\log\alpha\left(z\right)\right\} =\log\left|\nabla\Phi\left(z\right)\right|-\left\{ H\left(\Phi\left(z\right)\right)-H\left(z\right)\right\} .\label{eq:sufficientconditioninvariance-DT}
\end{equation}
\item The kernel $Q$ satisfies
\begin{equation}
\int\rho\left(\rd z\right)\left(1-\alpha\left(z\right)\right)Q\left(z,\rd z'\right)=\rho(\mathcal{S}^{-1}\left(\rd z'\right))\left(1-\alpha\left(\mathcal{S}\left(z'\right)\right)\right).\label{eq:Qpseudoinvariance-discrete}
\end{equation}
\end{enumerate}
\end{condition}

\begin{rem}
Conditions A\ref{hypA:Gassumptions-DT}.1 to A\ref{hypA:Gassumptions-DT}.3
parallel the conditions A\ref{hypA:Gassumptions}.1 to A\ref{hypA:Gassumptions}.3.
\end{rem}

\begin{prop}
\label{prop:Ginvariance-DT}Assume (A\ref{hypA:Gassumptions-DT}).
Then the discrete-time PDMP admits $\rho$ as invariant distribution.
\end{prop}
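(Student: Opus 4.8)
The plan is to verify the invariance identity (\ref{eq:invariancekerneldiscretetime}) directly — which, since $K$ has the form (\ref{eq:kerneldiscretetimePDMP}), is equivalent to (\ref{eq:invariancePDMC-DT}) — by treating its two summands (the ``flow'' term and the ``jump'' term) separately and showing that they recombine to $\rho(\rd z')$. This mirrors the continuous-time argument behind Proposition~\ref{prop:Ginvariance}.

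First I would dispatch the jump term $\int\rho(\rd z)\{1-\alpha(z)\}Q(z,\rd z')$. By Assumption A\ref{hypA:Gassumptions-DT}.3 it equals $\rho(\mathcal{S}^{-1}(\rd z'))\{1-\alpha(\mathcal{S}(z'))\}$, and since $\mathcal{S}$ is $\rho$-preserving (Assumption A\ref{hypA:Gassumptions-DT}.1) the pushforward $\rho(\mathcal{S}^{-1}(\rd z'))$ is just $\rho(\rd z')$, so this term is $\{1-\alpha(\mathcal{S}(z'))\}\rho(\rd z')$. Next I would treat the flow term $\rho(\Phi^{-1}(z'))\alpha(\Phi^{-1}(z'))\left|\nabla\Phi^{-1}(z')\right|\rd z'$, which is precisely the pushforward under the diffeomorphism $\Phi$ of the sub-probability measure $\alpha(z)\rho(\rd z)$. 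Testing against an arbitrary bounded measurable $f$ and changing variables $z'=\Phi(z)$ (legitimate because $\Phi$ is a diffeomorphism with $\left|\nabla\Phi\right|>0$, and using $\left|\nabla\Phi^{-1}(z')\right|=\left|\nabla\Phi(\Phi^{-1}(z'))\right|^{-1}$), the claim that this term equals $\alpha(\mathcal{S}(z'))\rho(\rd z')$ reduces to the pointwise identity $\alpha(z)\rho(z)=\alpha(\mathcal{S}\circ\Phi(z))\,\rho(\Phi(z))\,\left|\nabla\Phi(z)\right|$ for Lebesgue-a.e.\ $z$. Taking logarithms and using $\rho=\exp(-H)$, this identity rearranges exactly into Assumption A\ref{hypA:Gassumptions-DT}.2, namely $\{-\log\alpha(\mathcal{S}\circ\Phi(z))\}-\{-\log\alpha(z)\}=\log\left|\nabla\Phi(z)\right|-\{H(\Phi(z))-H(z)\}$.

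Adding the two pieces then gives $\alpha(\mathcal{S}(z'))\rho(\rd z')+\{1-\alpha(\mathcal{S}(z'))\}\rho(\rd z')=\rho(\rd z')$, which is (\ref{eq:invariancekerneldiscretetime}), so $\rho$ is invariant for $K$. I do not expect a genuine obstacle here: the only care required is bookkeeping — keeping straight which objects are measures and which are densities, and getting the direction of the Jacobian factor right in the change of variables for the flow term — after which everything is elementary algebra. (As in Proposition~\ref{prop:Ginvariance}, only invariance is asserted, so no ergodicity considerations enter.)
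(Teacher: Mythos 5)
Your proposal is correct and follows essentially the same route as the paper's proof: both dispatch the jump term via Assumptions A\ref{hypA:Gassumptions-DT}.3 and A\ref{hypA:Gassumptions-DT}.1, and reduce the flow term to the pointwise identity equivalent to Assumption A\ref{hypA:Gassumptions-DT}.2 by the change of variables $z'=\Phi(z)$ with the Jacobian relation $\left|\nabla\Phi^{-1}\left(z'\right)\right|=\left|\nabla\Phi\left(z\right)\right|^{-1}$. No gaps.
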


\begin{rem}
When $\mathcal{S}$ is an involution so that $\rho(\mathcal{S}^{-1}\left(\rd z'\right))=\rho(\mathcal{S}\left(\rd z'\right))$,
condition A\ref{hypA:Gassumptions-DT}.3 can be interpreted as a ``skewed''
invariance condition on $\nu(\rd z)\propto\rho(\rd z)(1-\alpha(z))$.
The quantity $\rho(\rd z)\left(1-\alpha\left(z\right)\right)$ is
proportional to the invariant distribution of the ``jump chain,''
i.e.\ the distribution of those states where the proposal $\Phi\left(z\right)$
is rejected. It has a clear analogue in the continuous-time scenario
where the jumps occur at states with distribution proportional to
$\rho(\rd z)\lambda(z)$.
\end{rem}

\subsubsection{Sufficient conditions for local methods\label{subsec:Sufficient-conditions-for-Local-DT}}

In scenarios where $H\left(z\right)$ can be decomposed as in (\ref{eq:Hlocalcsum}),
it will prove convenient to consider an acceptance probability of
the form 
\begin{equation}
\alpha\left(z\right)=\prod_{i=1}^{n}\alpha_{i}\left(z\right)\label{eq:acceptanceprobalocalDT}
\end{equation}
where $\alpha_{i}:\mathcal{Z}\rightarrow[0,1]$ are themselves acceptance
probabilities\footnote{The authors in \citep{M_K_K_14} derive a continuous-time local PD-MCMC
by using this `factorized' acceptance probability, using a mapping
$\Phi\left(z\right)=(x+\epsilon v,v)$ and taking the limit as $\epsilon\rightarrow0$.
However for a strictly positive $\epsilon>0$, they do not define
a discrete-time local PD-MCMC as proposed here.}. To sample an event of probability $\alpha\left(z\right)$, we can
sample independent Bernoulli variables $B_{i}$, such that $B_{i}\sim\mathrm{Ber}(1-\alpha_{i}(z))$
for $i\in[n]$ where $\mathrm{Ber}(p)$ is the Bernoulli distribution
of parameter $p$. Hence the probability of the event $B=(0,...,0)$
where $B=(B_{1},...,B_{n})$ is $\alpha\left(z\right)$. Thus if $B:=(0,...,0)$,
we will set $z'\leftarrow\Phi\left(z\right)$. Otherwise, that is
if $B\in\mathcal{\mathcal{B}}$ where $\mathcal{B}=\left\{ 0,1\right\} ^{n}\setminus\left\{ 0\right\} ^{n}$,
then we will sample $z'\sim Q(z,\cdot)$ where 
\begin{equation}
Q\left(z,\rd z'\right)=\sum_{b\in\mathcal{\mathcal{B}}}\mathbb{Q}_{|B|\geq1}\left(b|z\right)Q_{b}\left(z,\rd z'\right).\label{eq:mixturerepresentationkernellocalDT}
\end{equation}
In this expression $Q_{b}$ is a Markov kernel and $\mathbb{Q}_{|B|\geq1}\left(b|z\right)$
is the distribution of $B$ conditioned upon $|B|:=\sum_{i=1}^{n}B_{i}\geq1$
which is given by
\begin{equation}
\mathbb{Q}_{|B|\geq1}\left(b|z\right)=\frac{\prod_{i=1}^{n}\mathrm{Ber}(b_{i};1-\alpha_{i}(z))}{1-\alpha\left(z\right)}.\label{eq:conditionaldistributionatleastaneventlocalDT}
\end{equation}

Based on these structural assumptions on $\alpha$ and $Q$, we can
provide useful sufficient ``local'' conditions on $\Phi,$ $\{\alpha_{i}:i\in\left[n\right]\}$
and $\left\{ Q_{B}:b\in\mathcal{B}\right\} $ to ensure invariance
of the associated discrete-time PDMP w.r.t.~$\rho$ is satisfied. 

\begin{condition}\label{hypA:Lassumptions-DT}Conditions on $\phi,$
$\{\alpha_{i}:i\in\left[n\right]\}$, and $\left\{ Q_{b}:b\in\mathcal{B}\right\} $
\begin{enumerate}
\item There exists a $\rho$-preserving mapping $\mathcal{S}:\mathcal{Z}\to\mathcal{Z}$.
\item The acceptance probabilities $\{\alpha{}_{i}:i\in\left[n\right]\}$
satisfy
\begin{equation}
\sum_{i=1}^{n}\left\{ -\log\alpha_{i}\left(\mathcal{S\circ}\Phi\left(z\right)\right)\right\} -\left\{ -\log\alpha_{i}\left(z\right)\right\} =\log\left|\nabla\Phi\left(z\right)\right|-\left\{ H\left(\Phi\left(z\right)\right)-H\left(z\right)\right\} .\label{eq:LocalRateBalanceDT}
\end{equation}
\item For all $b\in\mathcal{B}$, the transition kernel $Q_{b}$ satisfies
\begin{equation}
\int\rho\left(\rd z\right)\left(1-\alpha\left(z\right)\right)\mathbb{Q}_{|B|\geq1}\left(b|z\right)Q_{b}\left(z,\rd z'\right)=\rho(\mathcal{S}^{-1}\left(\rd z'\right))\left(1-\alpha\left(\mathcal{S}\left(z'\right)\right)\right)\mathbb{Q}_{|B|\geq1}\left(b|\mathcal{S}\left(z'\right)\right).\label{eq:QlocalpseudoinvarianceDT}
\end{equation}
\end{enumerate}
\end{condition}

For a mapping such that $\left|\nabla\Phi\right|=1$, then Assumption
A\ref{hypA:Lassumptions-DT}.2 is satisfied if for all $i\in\left[n\right]$
\begin{equation}
\left\{ -\log\alpha_{i}\left(\mathcal{S\circ}\Phi\left(z\right)\right)\right\} -\left\{ -\log\alpha_{i}\left(z\right)\right\} =-\left\{ H_{i}\left(\Phi\left(z\right)\right)-H_{i}\left(z\right)\right\} .\label{eq:LocalLocalRateBalance-DT}
\end{equation}

\begin{prop}
\label{prop:Linvariance-DT}Assume (A\ref{hypA:Lassumptions-DT}).
Then the discrete-time PDMP admits $\rho$ as invariant distribution.
\end{prop}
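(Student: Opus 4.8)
The plan is to deduce Proposition \ref{prop:Linvariance-DT} from Proposition \ref{prop:Ginvariance-DT} by checking that the structural ingredients of Section \ref{subsec:Sufficient-conditions-for-Local-DT}, namely $\Phi$, the product acceptance probability $\alpha$ of (\ref{eq:acceptanceprobalocalDT}) and the mixture kernel $Q$ of (\ref{eq:mixturerepresentationkernellocalDT}), satisfy Assumption (A\ref{hypA:Gassumptions-DT}) whenever Assumption (A\ref{hypA:Lassumptions-DT}) holds. Part 1 of the two assumptions is literally the same statement, so nothing is required there. For part 2, the key observation is that (\ref{eq:acceptanceprobalocalDT}) gives, pointwise, $-\log\alpha(z)=\sum_{i=1}^{n}\{-\log\alpha_i(z)\}$ (both sides possibly $+\infty$ on the set where some $\alpha_i$ vanishes); substituting this identity, evaluated at $z$ and at $\mathcal{S}\circ\Phi(z)$, into the summands of (\ref{eq:LocalRateBalanceDT}) turns that equation into exactly (\ref{eq:sufficientconditioninvariance-DT}), i.e.\ Assumption A\ref{hypA:Gassumptions-DT}.2.

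The substance of the argument is part 3. I would start from $\int\rho(\rd z)(1-\alpha(z))Q(z,\rd z')$, insert the mixture representation (\ref{eq:mixturerepresentationkernellocalDT}), and interchange the finite sum over $b\in\mathcal{B}$ with the integral (no convergence issue since $|\mathcal{B}|=2^{n}-1<\infty$). Each resulting summand is precisely the left-hand side of (\ref{eq:QlocalpseudoinvarianceDT}), so Assumption A\ref{hypA:Lassumptions-DT}.3 rewrites it as $\rho(\mathcal{S}^{-1}(\rd z'))(1-\alpha(\mathcal{S}(z')))\,\mathbb{Q}_{|B|\geq1}(b\,|\,\mathcal{S}(z'))$. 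Factoring out the $b$-independent quantity $\rho(\mathcal{S}^{-1}(\rd z'))(1-\alpha(\mathcal{S}(z')))$ and using that, by (\ref{eq:conditionaldistributionatleastaneventlocalDT}), $\sum_{b\in\mathcal{B}}\mathbb{Q}_{|B|\geq1}(b\,|\,w)=1$ for every $w$ with $1-\alpha(w)>0$, the sum collapses to $\rho(\mathcal{S}^{-1}(\rd z'))(1-\alpha(\mathcal{S}(z')))$, which is the right-hand side of (\ref{eq:Qpseudoinvariance-discrete}); on the set $\{\alpha=1\}$ both sides of (\ref{eq:Qpseudoinvariance-discrete}) vanish identically because of the $1-\alpha$ prefactor, so the degenerate case needs no separate treatment beyond this remark. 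With all three parts of (A\ref{hypA:Gassumptions-DT}) verified, Proposition \ref{prop:Ginvariance-DT} applies and yields that the discrete-time PDMP with kernel (\ref{eq:kerneldiscretetimePDMP}) admits $\rho$ as invariant distribution.

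I expect the only delicate bookkeeping to be the handling of the set where $\alpha(z)=1$ (where $Q$ and the conditional law $\mathbb{Q}_{|B|\geq1}(\cdot\,|\,z)$ are not defined and not used) and the measurability of $z\mapsto\mathbb{Q}_{|B|\geq1}(b\,|\,z)$, which follows from measurability of the $\alpha_i$ and of $\alpha=\prod_i\alpha_i$; everything else is the routine linearity-of-integral manipulation just described. An alternative, more pedestrian route would be to verify the invariance identity (\ref{eq:invariancekerneldiscretetime}) directly for the local $\alpha$ and $Q$, but it amounts to re-deriving Proposition \ref{prop:Ginvariance-DT} and is not worth the duplication.
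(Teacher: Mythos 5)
Your proposal is correct and amounts to the same argument as the paper's: the core computation in both is the interchange of the finite sum over $b\in\mathcal{B}$ with the integral, the application of (A\ref{hypA:Lassumptions-DT}.3) term by term, and the collapse of $\sum_{b\in\mathcal{B}}\mathbb{Q}_{|B|\geq1}(b\,|\,\mathcal{S}(z'))$ to $1$ (equivalently $\sum_{b\in\mathcal{B}}\prod_{i:b_{i}=0}\alpha_{i}\prod_{i:b_{i}=1}(1-\alpha_{i})=1-\prod_{i}\alpha_{i}$), together with the observation that $\alpha=\prod_{i}\alpha_{i}$ telescopes the local rate-balance condition into the global one. The only difference is packaging: you route the conclusion through Proposition~\ref{prop:Ginvariance-DT}, whereas the paper re-verifies the invariance identity (\ref{eq:invariancekerneldiscretetime}) directly, which is the ``pedestrian route'' you correctly identify as equivalent.
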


\subsubsection{Sufficient conditions for doubly stochastic methods\label{subsec:Sufficient-conditions-for-Stochastic-DT}}

Consider finally the scenario where $H\left(z\right)$ is given by
(\ref{eq:Unbiasedestimate}). In this context, we consider an acceptance
probability of the form 
\begin{equation}
\alpha\left(z\right)=\exp\left\{ \int\log\alpha_{\omega}\left(z\right)\mu\left(\rd\omega\right)\right\} \label{eq:acceptancedoublystochasticDT}
\end{equation}
where $\alpha_{\omega}:\mathcal{Z}\rightarrow[0,1]$ which is a generalization
of (\ref{eq:acceptanceprobalocalDT}) from the measure $\mu\left(\left\{ i\right\} \right)=1$
on a finite space $\Omega=[n]$ to an arbitrary measure on a general
space. Obviously when $\Omega$ is not finite, the strategy previously
adopted to simulate an event of probability $\alpha\left(z\right)$
is not applicable. However, this can be achieved by simulating a Poisson
process $P$ on $\Omega$ of rate $\Lambda\left(\rd\omega\right)=-\log\alpha_{\omega}\left(z\right)\,\mu\left(\rd\omega\right)$,
the law of which we denote with $\mathbb{Q}\left(\rd P|z\right)$,
and noticing that $\alpha\left(z\right)$ is the void probability
of $P$. A similar idea was used in a different context in \citep{B_P_R_F_06}.
Hence if the number of points is null, i.e.~$|P|=0$, then we will
set $z'\leftarrow\Phi\left(z\right).$ If $|P|\geq1$, that is $P\in\mathcal{\mathcal{\mathscr{\mathcal{\mathcal{P}}}}}$
where $\mathcal{P}$ is the set of configurations of the Poisson process
having at least one point, then we will sample $z'\sim Q(z,\cdot)$
where 
\begin{equation}
Q\left(z,\rd z'\right)=\int_{\mathcal{P}}\mathbb{Q}_{|P|\geq1}\left(\rd P|z\right)Q_{P}\left(z,\rd z'\right).\label{eq:kernelmixture-2}
\end{equation}
In this expression $Q_{P}$ is a Markov kernel and $\mathbb{Q}_{|P|\geq1}\left(\rd P|z\right)$
is the law of the Poisson process $P$ conditioned upon the event
$|P|\geq1$ which is given by
\begin{equation}
\mathbb{Q}_{|P|\geq1}\left(\rd P|z\right)=\frac{\mathbb{I}\left(|P|\geq1\right)}{1-\alpha\left(z\right)}\mathbb{Q}\left(\rd P|z\right).\label{eq:conditionaldistributionatleastaneventdoublystochasticlDT}
\end{equation}

\begin{condition}\label{hypA:Sassumptions-DT}Conditions on $\phi,$
$\{\alpha_{\omega}:\omega\in\Omega\}$ and $\left\{ Q_{P}:P\in\mathcal{\mathcal{\mathscr{\mathcal{\mathcal{P}}}}}\right\} $
\begin{enumerate}
\item There exists a $\rho$-preserving mapping $\mathcal{S}:\mathcal{Z}\to\mathcal{Z}$.
\item The acceptance probabilities $\{\alpha_{\omega}:\omega\in\Omega\}$
satisfy
\begin{equation}
\int\left[\left\{ -\log\alpha_{\omega}\left(\mathcal{S}\circ\Phi\left(z\right)\right)\right\} -\left\{ -\log\alpha_{\omega}\left(z\right)\right\} \right]\mu\left(\rd\omega\right)=\log\left|\nabla\Phi\left(z\right)\right|-\left\{ H\left(\Phi\left(z\right)\right)-H\left(z\right)\right\} .\label{eq:LocalRateBalanceDT-1}
\end{equation}
\item For all $P\in\mathcal{\mathcal{\mathscr{\mathcal{\mathcal{P}}}}}$,
the transition kernel $Q_{P}$ satisfies
\begin{equation}
\int\rho\left(\rd z\right)\left(1-\alpha\left(z\right)\right)\mathbb{Q}_{|P|\geq1}\left(\rd P|z\right)Q_{P}\left(z,\rd z'\right)=\rho(\mathcal{S}^{-1}\left(\rd z'\right))\left(1-\alpha\left(\mathcal{S}\left(z'\right)\right)\right)\mathbb{Q}_{|P|\geq1}\left(\rd P|\mathcal{S}\left(z'\right)\right).\label{eq:QlocalpseudoinvarianceDT-1}
\end{equation}
\end{enumerate}
\end{condition}

Assumption A.\ref{hypA:Sassumptions-DT}.3 is an informal expression
meaning that we assume that for $\mathbb{Q}_{|P|\geq1}\left(\rd P|z\right)$-almost
all $P\in\mathcal{\mathcal{\mathscr{\mathcal{\mathcal{P}}}}}$
\[
\int\rho\left(\rd z\right)\left(1-\alpha\left(z\right)\right)\frac{\mathbb{\rd Q}_{|P|\geq1}\left(P|z\right)}{\rd\mathbb{Q}_{|P|\geq1}\left(P|\mathcal{S}\left(z'\right)\right)}Q_{P}\left(z,\rd z'\right)=\rho(\mathcal{S}^{-1}\left(\rd z'\right))\left(1-\alpha\left(\mathcal{S}\left(z'\right)\right)\right),
\]
and the Radon-Nikodym derivative in the expression above is well-defined
and strictly positive for $Q_{P}\left(z,\rd z'\right)$ almost all
$z'$.

For a mapping such that $\left|\nabla\Phi\right|=1$, Assumption A\ref{hypA:Sassumptions-DT}.2
is satisfied if for all $\omega\in\Omega$
\begin{equation}
\left\{ -\log\alpha_{\omega}\left(\mathcal{S}\circ\Phi\left(z\right)\right)\right\} -\left\{ -\log\alpha_{\omega}\left(z\right)\right\} =-\left\{ H_{\omega}\left(\Phi\left(z\right)\right)-H_{\omega}\left(z\right)\right\} .\label{eq:LocalLocalRateBalance-SDT}
\end{equation}

\begin{prop}
\label{prop:Sinvariance-DT}Assume (A\ref{hypA:Sassumptions-DT}).
Then the discrete-time PDMP admits $\rho$ as invariant distribution.
\end{prop}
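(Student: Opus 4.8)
The plan is to verify the invariance identity (\ref{eq:invariancekerneldiscretetime}) directly, splitting its left-hand side into the ``accept'' contribution coming from $\alpha(z)\delta_{\Phi(z)}$ and the ``jump'' contribution coming from $(1-\alpha(z))Q(z,\cdot)$, showing that the first equals $\rho(\rd z')\,\alpha(\mathcal{S}(z'))$ and the second equals $\rho(\rd z')\,(1-\alpha(\mathcal{S}(z')))$; adding them gives $\rho(\rd z')$, which is (\ref{eq:invariancePDMC-DT}). This is the same two-step pattern as in Proposition \ref{prop:Linvariance-DT}, the only new feature being that the finite family of Bernoulli outcomes is replaced by the configuration space $\mathcal{P}$ of the Poisson process $P$ and the finite sum in (\ref{eq:mixturerepresentationkernellocalDT}) by the integral in (\ref{eq:kernelmixture-2}).

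For the accept term I would first note that, since $\alpha(z)=\exp\{\int\log\alpha_\omega(z)\,\mu(\rd\omega)\}$, Assumption A\ref{hypA:Sassumptions-DT}.2, i.e.\ (\ref{eq:LocalRateBalanceDT-1}), is equivalent to $-\log\alpha(\mathcal{S}\circ\Phi(z))+\log\alpha(z)=\log|\nabla\Phi(z)|-H(\Phi(z))+H(z)$. Exponentiating and using $\rho=e^{-H}$ yields $\rho(z)\,\alpha(z)=|\nabla\Phi(z)|\,\rho(\Phi(z))\,\alpha(\mathcal{S}\circ\Phi(z))$. Setting $z=\Phi^{-1}(z')$ and using $|\nabla\Phi^{-1}(z')|=|\nabla\Phi(\Phi^{-1}(z'))|^{-1}$, which holds because $\Phi$ is a diffeomorphism with $|\nabla\Phi|>0$, shows that the first term of (\ref{eq:invariancekerneldiscretetime}), namely $\rho(\Phi^{-1}(z'))\alpha(\Phi^{-1}(z'))|\nabla\Phi^{-1}(z')|\rd z'$, equals $\rho(\rd z')\,\alpha(\mathcal{S}(z'))$.

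For the jump term I would use (\ref{eq:kernelmixture-2}) to write $\int\rho(\rd z)(1-\alpha(z))Q(z,\rd z')=\int\rho(\rd z)(1-\alpha(z))\int_{\mathcal{P}}\mathbb{Q}_{|P|\geq1}(\rd P|z)\,Q_P(z,\rd z')$, then integrate the (informal) identity in Assumption A\ref{hypA:Sassumptions-DT}.3 against $\mathbb{Q}_{|P|\geq1}(\rd P|\mathcal{S}(z'))$ over $P\in\mathcal{P}$. On the left the Radon--Nikodym factor collapses $\mathbb{Q}_{|P|\geq1}(\rd P|\mathcal{S}(z'))$ down to $\mathbb{Q}_{|P|\geq1}(\rd P|z)$, and a Fubini--Tonelli exchange reconstitutes $\int\rho(\rd z)(1-\alpha(z))Q(z,\rd z')$ via (\ref{eq:kernelmixture-2}); on the right $\mathbb{Q}_{|P|\geq1}(\cdot|\mathcal{S}(z'))$ is a probability measure so the $P$-integral is trivial, leaving $\rho(\mathcal{S}^{-1}(\rd z'))(1-\alpha(\mathcal{S}(z')))$, which equals $\rho(\rd z')(1-\alpha(\mathcal{S}(z')))$ since $\mathcal{S}$ is $\rho$-preserving by A\ref{hypA:Sassumptions-DT}.1. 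Summing the two contributions gives $\rho(\rd z')[\alpha(\mathcal{S}(z'))+1-\alpha(\mathcal{S}(z'))]=\rho(\rd z')$.

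The algebra for the accept term and the final addition are routine; the delicate part, already flagged in the discussion following Assumption A\ref{hypA:Sassumptions-DT}.3, is the measure-theoretic justification of the Poisson-process step: one must argue that $\mathbb{Q}_{|P|\geq1}(\rd P|z)$ and $\mathbb{Q}_{|P|\geq1}(\rd P|\mathcal{S}(z'))$ are mutually absolutely continuous so the Radon--Nikodym derivative exists and is strictly positive $Q_P(z,\rd z')$-a.e.\ in $z'$, and that the integrand on $\mathcal{Z}\times\mathcal{P}$ is nonnegative so Tonelli's theorem licenses the exchange of integration order. I expect this bookkeeping, rather than any genuinely new step, to be the main obstacle; everything else runs in parallel with Propositions \ref{prop:Sinvariance} and \ref{prop:Linvariance-DT}.
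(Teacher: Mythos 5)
Your proof is correct and follows essentially the same route as the paper's: the jump term is handled identically via the mixture representation (\ref{eq:kernelmixture-2}), an exchange of integration order, Assumption A\ref{hypA:Sassumptions-DT}.3 and then A\ref{hypA:Sassumptions-DT}.1, while the accept term reduces to A\ref{hypA:Sassumptions-DT}.2 through the substitution $z=\Phi^{-1}(z')$ and the Jacobian identity. The only difference is presentational — you run the accept-term algebra forward to exhibit $\rho(\rd z')\alpha(\mathcal{S}(z'))$ explicitly, whereas the paper states the residual condition and checks it equals A\ref{hypA:Sassumptions-DT}.2 — and your closing remarks on the Radon--Nikodym and Tonelli bookkeeping match the caveats the paper itself attaches to Assumption A\ref{hypA:Sassumptions-DT}.3.
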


\subsection{Existing PD-MCMC algorithms\label{subsec:Existing-PDMCMC-algorithms-DT}}

A few algorithms proposed in the literature can be considered as special
instances of discrete-time PD-MCMC algorithms. They all rely on the
same framework discussed in Section \ref{subsec:Existing-PDMCMC-algorithms},
that is they sample an extended target density $\rho\left(z\right)=\exp(-H\left(z\right))=\pi\left(x\right)\psi\left(v\right)$
defined (\ref{eq:extendedtarget}) on $\mathcal{Z}=\mathbb{R}^{d}\times\mathbb{R}^{d}$
where $\pi$ is the target distribution of interest and $\psi$ is
a standard multivariate normal. They use a mapping such that $\left|\nabla\Phi\right|=1$,
$\Phi^{-1}=\mathcal{S}\circ\Phi\circ\mathcal{S}$ with $\mathcal{S}\left(z\right)=\left(x,-v\right)$
and $\alpha\left(z\right)=\min\left\{ 1,\rho\left(\Phi\left(z\right)\right)/\rho\left(z\right)\right\} $.
A fairly generic scheme is detailed in Algorithm \ref{alg:discretetimePD-MCMC}. 

\begin{algorithm}[H]
\caption{Discrete-time PD-MCMC~\label{alg:discretetimePD-MCMC}}

\begin{enumerate}
\item With probability $\min\left\{ 1,\rho\left(\Phi\left(z\right)\right)/\rho\left(z\right)\right\} ,$
set $z'\leftarrow\Phi\left(z\right)$.
\item Otherwise, sample $z^{*}\sim M\left(z,\cdot\right).$
\item With probability 
\[
\beta\left\{ \left(x,v\right),\left(x^{*},v^{*}\right)\right\} =\min\left\{ 1,\frac{\left[\rho\left(x^{*},v^{*}\right)-\rho\left(\Phi\left(x^{*},-v^{*}\right)\right)\right]_{+}M\left(\left(x^{*},-v^{*}\right),\left(x,-v\right)\right)}{\left[\rho\left(x,v\right)-\rho\left(\Phi\left(x,v\right)\right)\right]_{+}M\left(\left(x,v\right),\left(x^{*},v^{*}\right)\right)}\right\} ,
\]
set $z'\leftarrow z^{*}$, otherwise set $z'\leftarrow(x,-v)$.
\end{enumerate}
\end{algorithm}

This scheme satisfies Assumption A\ref{hypA:Gassumptions-DT}.1 to
Assumption A\ref{hypA:Gassumptions-DT}.3 and is thus $\rho$-invariant.
In particular Assumption A\ref{hypA:Gassumptions-DT}.3 is satisfied
as Steps 2 and 3 correspond to using for the event kernel $Q$ a GMH
kernel satisfying the skewed-detailed balance condition (\ref{eq:skeweddetailedbalanceconditiongeneral})
for $\nu\left(\rd z\right)\propto\rho\left(\rd z\right)\left(1-\alpha\left(z\right)\right)$. 
\begin{rem}
Algorithm~\ref{alg:discretetimePD-MCMC} can be alternatively viewed
as a composition of reversible kernels. First, a delayed-rejection
algorithm proposing $\Phi$ and, in case of rejection, then proposing
$M(z,\mathcal{S}^{-1}(\cdot))$. Second, the involution $\mathcal{S}$
is applied unconditionally. In the delayed-rejection framework, we
can view condition A\ref{hypA:Gassumptions-DT}.3 as a condition on
delayed-rejection kernels expressed in a sort of ``remainder'' form.
While our algorithm uses two proposals, extending this remainder condition
to multiple proposals would require that each $Q_{k}$ satisfies $\int\rho(\rd z)\prod_{i=1}^{k-1}(1-\alpha_{i}(z))Q_{k}(z,\rd z')=\rho(\rd z')\prod_{i=1}^{k-1}(1-\alpha_{i}(z'))$.
\end{rem}

\subsubsection{Guided random walk}

This algorithm was proposed in \citep{gustafson1998}. It is a special
case of Algorithm \ref{alg:discretetimePD-MCMC} which uses $\Phi\left(z\right)=(x+v\epsilon,v)$
for some $\epsilon>0$ and a proposal $M\left(z,\rd z'\right)=\delta_{\mathcal{S}\left(z\right)}\left(\rd z'\right)$
which is accepted with probability 1.%

\subsubsection{Hamiltonian Monte Carlo}

The celebrated HMC algorithm proposed in \citep{duane1987} is also
a special case of Algorithm \ref{alg:discretetimePD-MCMC} which uses
a proposal $M\left(z,\rd z'\right)=\delta_{\mathcal{S}\left(z\right)}\left(\rd z'\right)$.
However, contrary to guided random walk, it is using for $\Phi$ a
symplectic integrator targeting the Hamiltonian $H$. This deterministic
proposal satisfies indeed $\left|\nabla\Phi\right|=1$ and $\Phi^{-1}=\mathcal{S}\circ\Phi\circ\mathcal{S}$
(see, e.g., \citep{neal2011,leimkuhler2015}). The resulting PD-MCMC
kernel $K$ is usually combined with a momentum refreshment step $v\sim\psi$. 

\subsubsection{Reflective Slice Sampling: discrete-time BPS schemes}

Several versions of slice sampling, known as reflective slice sampling,
are based on bounces similar to the BPS and are also a special case
of Algorithm \ref{alg:discretetimePD-MCMC}; see \citep[Section 7]{neal2003}.
They rely $\Phi\left(z\right)=\left(x+v\epsilon,v\right)$ for some
$\epsilon>0$ and a deterministic proposal $M\left(z,\rd z'\right)=\delta_{\Psi\left(z\right)}\left(\rd z'\right)$.
Reflective slice sampling with inner reflections is using $\Psi\left(z\right)=\left(x^{*},v^{*}\right)=\left(x,R_{\nabla U}(x)v\right)$
while reflective slice sampling with outer reflections is using $\Psi\left(z\right)=\left(x^{*},v^{*}\right)=\left(x+v\epsilon+R_{\nabla U}(x+v\epsilon)v\epsilon,R_{\nabla U}(x+v\epsilon)v\right)$.
Both proposals satisfy $\Psi^{-1}=\mathcal{S}\circ\Psi\circ\mathcal{S}$.
The outer version of the algorithm has been recently proposed independently
in \citep{S_T_17}; see also \citep{S_12} for a related proposal
in the context of nested sampling. In either case, the acceptance
probability simplifies to 
\begin{align*}
\beta\left(x,v\right)=\min\left\{ 1,\frac{[\pi\left(x^{*}\right)-\pi(x^{*}-v^{*}\epsilon)]_{+}}{[\pi\left(x\right)-\pi(x+v\epsilon)]_{+}}\right\} .
\end{align*}

Intuitively, these algorithms can be interpreted as discrete-time
versions of the BPS process. Elementary calculations show indeed that
in both cases $\alpha\left(z\right)\rightarrow1-\epsilon\left\langle \nabla U\left(x\right),v\right\rangle _{+}$
and $\beta\left(z\right)\rightarrow1$ as $\epsilon\rightarrow0$\textbf{
}under regularity assumptions. We provide here a weak convergence
result for the resulting Markov chain where $\psi$ is the uniform
distribution on $\mathbb{S}^{d-1}$ to limit technicalities.
\begin{prop}
\label{prop:WeakCVBPS}Under regularity conditions, reflective slice
sampling with inner reflections converges weakly to the BPS for $\lambda_{\textrm{ref}}=0$
as $\epsilon\rightarrow0$. 
\end{prop}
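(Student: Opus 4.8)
The plan is to establish weak convergence of the Markov chain generated by reflective slice sampling with inner reflections, run on a suitably accelerated time scale, to the BPS process with $\lambda_{\mathrm{ref}}=0$, by the standard route: (i) identify the discrete-time chain with time step $\epsilon$, rescale time by $1/\epsilon$ so that $\lfloor t/\epsilon\rfloor$ steps correspond to continuous time $t$; (ii) show that the generator of the rescaled chain converges pointwise, on a suitable core of smooth functions, to the BPS generator $\mathcal{L}f(z) = \langle v,\nabla_x f(z)\rangle + \langle\nabla U(x),v\rangle_+\,[f(x,R_{\nabla U}(x)v) - f(x,v)]$; and (iii) invoke a standard generator-convergence theorem (e.g.\ Ethier--Kurtz, Theorem 4.2.6, or Theorem 19.25 in Kallenberg) together with tightness to conclude weak convergence in the Skorokhod space $D_{\mathcal{Z}}[0,\infty)$. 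Here $\mathcal{Z} = \mathbb{R}^d\times\mathbb{S}^{d-1}$, and $\psi$ is chosen uniform on the sphere precisely so that $\Phi(z)=(x+v\epsilon,v)$ maps $\mathcal{Z}$ to itself and the invariant measure is clean.

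First I would set up the one-step transition. With $\Phi(z) = (x + \epsilon v, v)$ and $\Psi(z) = (x, R_{\nabla U}(x)v)$, the one-step kernel is $K_\epsilon(z,\cdot) = \alpha_\epsilon(z)\,\delta_{\Phi(z)} + (1-\alpha_\epsilon(z))\,T_\epsilon(z,\cdot)$, where $\alpha_\epsilon(z) = \min\{1, \pi(x+\epsilon v)/\pi(x)\}$ and $T_\epsilon$ is the GMH kernel from Section~\ref{subsec:GMHalgorithm} that proposes $\Psi$ with acceptance probability $\beta(z)=\min\{1, [\pi(x^*)-\pi(x^*-v^*\epsilon)]_+/[\pi(x)-\pi(x+v\epsilon)]_+\}$ and otherwise flips $v\mapsto -v$. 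The key Taylor expansions, under the regularity assumption that $U$ is $C^2$ with bounded derivatives (or locally so), are $1-\alpha_\epsilon(z) = \epsilon\langle\nabla U(x),v\rangle_+ + o(\epsilon)$ and $\beta(z) = 1 + o(1)$, both already asserted in the text just before the proposition. Consequently, for $f$ smooth and compactly supported, one computes
\begin{align*}
\frac{K_\epsilon f(z) - f(z)}{\epsilon} &= \frac{\alpha_\epsilon(z)\big(f(\Phi(z)) - f(z)\big)}{\epsilon} + \frac{(1-\alpha_\epsilon(z))\big(T_\epsilon f(z) - f(z)\big)}{\epsilon}.
\end{align*}
The first term converges to $\langle v,\nabla_x f(z)\rangle$ since $\alpha_\epsilon(z)\to 1$ and $(f(\Phi(z))-f(z))/\epsilon \to \langle v,\nabla_x f(z)\rangle$. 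In the second term, $(1-\alpha_\epsilon(z))/\epsilon \to \langle\nabla U(x),v\rangle_+$, while $T_\epsilon f(z) - f(z) \to f(x,R_{\nabla U}(x)v) - f(x,v)$ because $\beta\to 1$ (so with probability tending to one the reflection proposal is accepted and the $v\mapsto -v$ fallback is negligible) and $\Psi$ itself does not depend on $\epsilon$. Summing gives $\mathcal{L}f(z)$, the BPS generator with no refreshment.

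The main obstacle will be making the generator convergence and the accompanying tightness rigorous rather than formal: one must show the convergence $\big(K_\epsilon f - f\big)/\epsilon \to \mathcal{L}f$ is \emph{uniform} over $z$ (or uniform on compacts plus a containment/compact-containment condition), control the error terms $o(\epsilon)$ uniformly, and verify that the limiting martingale problem for $\mathcal{L}$ is well-posed — i.e.\ that the BPS has a unique solution and that the class of test functions used is a core for $\mathcal{L}$. A subtlety worth flagging is the non-smoothness of $z\mapsto\langle\nabla U(x),v\rangle_+$ and of $R_{\nabla U}(x)$ where $\nabla U(x)=0$; the cleanest fix is to restrict to targets $\pi$ with $\nabla U$ nonvanishing on the relevant region, or to note the exceptional set is Lebesgue-null and handle it via the regularity conditions alluded to in the statement. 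I would also need the compact-containment condition for tightness in $D_{\mathcal{Z}}[0,\infty)$, which follows easily here since the velocity component lives on the compact sphere $\mathbb{S}^{d-1}$ and the position component drifts at unit speed, so excursions out of a ball on $[0,T]$ have vanishing probability as the ball grows. Once tightness and pointwise generator convergence on a core are in hand, Ethier--Kurtz yields convergence of the rescaled chains to the BPS in the Skorokhod topology, completing the proof.
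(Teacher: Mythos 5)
Your proposal follows essentially the same route as the paper's proof in Appendix~\ref{sec:Weak-convergence-of}: identify the three-branch one-step kernel, Taylor-expand to show $(K_\epsilon f - f)/\epsilon \to \mathcal{L}f$ with the flip branch negligible, establish uniqueness of the limiting martingale problem via the Feller property, and conclude by Ethier--Kurtz with compact containment. The only notable difference is that the paper starts the chain at stationarity and verifies the generator-convergence conditions in $L^1(\rho)$ (using a H\"older condition on $\nabla U$ and a second-order cancellation showing the flip probability is $O(\epsilon^2)$), rather than requiring the uniform-on-compacts convergence you suggest; this is a technical variant of the same argument.
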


A precise mathematical statement, Theorem~\ref{thm:weakconv}, and
its proof are given in Appendix~\ref{sec:Weak-convergence-of}. We
can modify this algorithm to include a refreshment, i.e.\ by sampling
$v'\sim\psi$ with probability $\lambda_{\textrm{ref}}\epsilon$.
This weak convergence result of Proposition \ref{prop:WeakCVBPS}
can be directly extended to this case to show that the resulting discrete-time
process converges weakly to the BPS process with refreshment rate
$\lambda_{\textrm{ref}}$. Note that the kernel $K$ would still be
$\rho$-invariant if $\Phi$ were using a computationally cheap approximation
$\nabla\widehat{U}$ of $\nabla U$ to bounce. However, this discrete-time
algorithm does not converge to the BPS process as the probability
of accepting $z^{\prime}=\mathcal{S}\left(z\right)$ does not vanish
as $\epsilon\rightarrow0$ in this scenario. Under regularity conditions,
it will instead converge towards the algorithm described at the end
of Section \ref{subsec:GMHalgorithm}.

\subsection{Extensions\label{subsec:Existing-PDMCMC-algorithms-DT-1}}

\subsubsection{Discrete-time BPS with randomized bounces\label{par:Discrete-time-PD-MCMC-with-randomizedbounces}}

As discussed in Section \ref{subsec:Randomizedbounces}, a variety
of randomized bounces has been proposed for continuous PD-MCMC. We
show here how to generalize these ideas to discrete-time. Let $\psi$
denote the standard normal distribution on $\mathbb{R}^{d}$, $\Phi\left(z\right)=\left(x+v\epsilon,v\right)$,
$\alpha\left(z\right)=\min\left\{ 1,\rho\left(\Phi\left(z\right)\right)/\rho\left(z\right)\right\} $
and $\mathcal{S}\left(z\right)=\left(x,-v\right)$ satisfying Assumptions
A\ref{hypA:Gassumptions-DT}.1 and A\ref{hypA:Gassumptions-DT}.2
and we select an event kernel of the form $Q\left(z,\rd z'\right)=\delta_{x}(\rd x')Q_{x}\left(v,\rd v'\right)$
based on a proposal $M_{x}\left(v,\rd v'\right)=M_{x}\left(v,v'\right)\rd v'$.
This leads to Algorithm \ref{alg:discretetimeBouncyPDMCMCrandomized}. 

\begin{algorithm}[H]
\caption{Discrete-time BPS with randomized bounces~\label{alg:discretetimeBouncyPDMCMCrandomized}}

\begin{enumerate}
\item With probability $\min\left\{ 1,\pi\left(x+v\epsilon\right)/\pi\left(x\right)\right\} $,
set $z'\leftarrow\left(x+v\epsilon,v\right)$. 
\item Otherwise 
\begin{enumerate}
\item Sample $v^{*}\sim M_{x}\left(v,\cdot\right)$.
\item With probability
\[
\min\left\{ 1,\frac{\psi\left(v^{*}\right)[\pi\left(x\right)-\pi(x-v^{*}\epsilon)]_{+}M_{x}\left(-v^{*},-v\right)}{\psi\left(v\right)[\pi\left(x\right)-\pi(x+v\epsilon)]_{+}M_{x}\left(v,v^{*}\right)}\right\} ,
\]
 set $z^{\prime}\leftarrow\left(x,v^{*}\right)$. 
\item Otherwise set $z^{\prime}\leftarrow(x,-v)$. 
\end{enumerate}
\end{enumerate}
\end{algorithm}

For the kernel $M_{x}\left(v,\cdot\right)$, we can use the randomized
bounces developed in Section \ref{subsec:Randomizedbounces} as well
as $M_{x}\left(v,\cdot\right)=\psi\left(\cdot\right).$ The forward-event
\citep{M_S_17}, generalized BPS \citep{W_R_17}, and autoregressive
bouncing procedures discussed in Section \ref{subsec:Randomizedbounces}
induce a transition kernel $M_{x}$ satisfying $\psi(v)\langle\nabla U(x),v\rangle_{+}M_{x}(v,v')=\psi(-v')\langle\nabla U(x),-v'\rangle_{+}M_{x}(-v',-v)$,
for which we would expect that the acceptance ratio in Step 2.b of
Algorithm \ref{alg:discretetimeBouncyPDMCMCrandomized} will be close
to 1 for small $\epsilon$.

The invariance with respect to $\rho$ of the transition kernel is
easy to check. Assumption A\ref{hypA:Gassumptions-DT}.1 is clearly
satisfied. Assumption A\ref{hypA:Gassumptions-DT}.2 follows from
direct calculations using $\left|\nabla\Phi\right|=1$ and $\Phi^{-1}=\mathcal{S}\circ\Phi\circ\mathcal{S}$.
Finally Assumption A\ref{hypA:Gassumptions-DT}.3 follows from the
fact that the event kernel corresponding to steps 2.a to 2.c of Algorithm
\ref{alg:discretetimeBouncyPDMCMCrandomized} is a GMH kernel with
$\nu\left(z\right)\propto\rho\left(z\right)\left(1-\alpha\left(z\right)\right)$
with a proposal kernel $M_{x}\left(v,\rd v'\right)$.

\subsubsection{Discrete-time Hamiltonian BPS}

We consider here the discrete-time version of the Hamiltonian BPS
proposed in Section \ref{subsec:Hamiltonian-PDMP}. This is achieved
by setting $\psi$ as the standard normal distribution on $\mathbb{R}^{d}$,
$\alpha\left(z\right)=\min\left\{ 1,\rho\left(\Phi\left(z\right)\right)/\rho\left(z\right)\right\} $
and $\mathcal{S}\left(z\right)=\left(x,-v\right)$. We also consider
an approximation $\hat{H}\left(z\right)$ defined in (\ref{eq:HamiltonianApprox})
of the Hamiltonian $H\left(z\right)$ and recall that $\widetilde{U}\left(x\right):=U\left(x\right)-V\left(x\right)$
and denote $\Psi\left(z\right)=(x,R_{\nabla\widetilde{U}}\left(x\right)v)$
. In Section \ref{subsec:Hamiltonian-PDMP}, we were considering for
$\Phi_{t}$ the exact Hamiltonian flow associated with $\hat{H}\left(z\right)$.
In discrete time we can select for $\Phi$ either this exact flow
$\Phi_{\epsilon}$ for some $\epsilon>0$ or a leapfrog integrator
with $L$ steps which we will denote $\Phi_{\mathrm{HD}}$. The crucial
difference is thus that it is not necessary to restrict ourselves
to a Hamiltonian $\hat{H}\left(z\right)$ for which the Hamiltonian
equations can be solved exactly. The resulting algorithm then proceeds
as follows.

\begin{algorithm}[H]
\caption{Discrete-time Hamiltonian BPS~\label{alg:discretetimeBouncyPDMCMC}}

\begin{enumerate}
\item With probability $\min\left\{ 1,\rho\left(\Phi_{\mathrm{HD}}\left(z\right)\right)/\rho\left(z\right)\right\} $,
set $z'\leftarrow\Phi_{\mathrm{HD}}\left(z\right)$. 
\item Otherwise 
\begin{enumerate}
\item With probability 
\begin{align*}
\hspace{-1cm}\min\left\{ 1,\frac{[\rho\left(x,-R_{\nabla\widetilde{U}}\left(x\right)v\right)-\rho\left(\Phi_{\mathrm{HD}}\left(x,-R_{\nabla\widetilde{U}}\left(x\right)v\right)\right)]_{+}}{[\rho\left(x,v\right)-\rho\left(\Phi_{\mathrm{HD}}\left(x,v\right)\right)]_{+}}\right\}  & =\min\left\{ 1,\frac{[\rho\left(x,v\right)-\rho\left(\Phi_{\mathrm{HD}}\left(x,-R_{\nabla\widetilde{U}}\left(x\right)v\right)\right)]_{+}}{[\rho\left(x,v\right)-\rho\left(\Phi_{\mathrm{HD}}\left(x,v\right)\right)]_{+}}\right\} ,
\end{align*}
set $z^{\prime}\leftarrow\left(x,R_{\nabla\widetilde{U}}\left(x\right)v\right)$. 
\item Otherwise set $z^{\prime}\leftarrow\left(x,-v\right)$. 
\end{enumerate}
\end{enumerate}
\end{algorithm}

The invariance with respect to $\rho$ of the transition kernel is
easy to check. Assumption A\ref{hypA:Gassumptions-DT}.1 is obviously
satisfied. Assumption A\ref{hypA:Gassumptions-DT}.2 follows from
direct calculations using $\left|\nabla\Phi\right|=1$ and $\Phi^{-1}=\mathcal{S}\circ\Phi\circ\mathcal{S}$.
Finally Assumption A\ref{hypA:Gassumptions-DT}.3 follows from the
fact that the event kernel corresponding to step (a) and (b) of Algorithm
\ref{alg:discretetimeBouncyPDMCMC} is a GMH kernel with $\nu\left(z\right)\propto\rho\left(z\right)\left(1-\alpha\left(z\right)\right)$
with a deterministic transition kernel satisfying $\Psi^{-1}=\mathcal{S}\circ\Psi\circ\mathcal{S}$.
If $\Phi$ is a leapfrog integrator of stepsize $\epsilon>0$ targeting
the Hamiltonian $H\left(z\right)$, then the strategy described above
is not directly applicable as $\widetilde{U}\left(x\right)=0$ for
all $x$ so $R_{\nabla\widetilde{U}}\left(x\right)$ is not defined.
However as $\Phi$ can be thought of as the exact time discretization
of a shadow Hamiltonian of the form $\hat{H}_{\epsilon}\left(z\right)=H\left(z\right)-\epsilon^{2}\widetilde{H}\left(z\right)+\mathcal{O}\left(\epsilon^{4}\right)$
\citep[p. 107]{leimkuhler2015}, it may be possible to build bounces
based on $\widetilde{H}\left(z\right)$ to correct for the discrepancy
between the true Hamiltonian dynamics and its leapfrog approximation. 

\subsubsection{Discrete-time gradient-free BPS }

The BPS-type algorithms given thus far all require computation of
the gradient of the potential $\nabla U(x)$ in order to update the
velocity $v$ when a bounce event occurs. However, we may wish to
target potential functions where this gradient cannot be computed
or is very expensive to compute. Additionally, the gradient may not
be informative in some models, such as certain embeddings of discrete
spaces where the gradient may be zero almost everywhere.

A scheme to approximate the gradient $\nabla U(x)$ by computing numerical
differences was advanced in \citep{S_T_17}. Here, some number $n_{cpt}$
of orthogonal unit vectors $\zeta_{i},i\in[n_{cpt}]$ are selected,
and the gradient approximated along each of these vectors by, e.g.,
\[
\Delta_{i}=\frac{U(x+h\zeta_{i})-U(x-h\zeta_{i})}{2h}
\]
for some small value $h$. The combination of these $n_{cpt}$ vectors
yields an approximation to the gradient

\[
\hat{g}=\sum_{i=1}^{n_{cpt}}\Delta_{i}\zeta_{i},
\]
which for $n_{cpt}=d$ is a typical numerical approximation to the
gradient. The new velocity is found by a reversible map from the old
velocity to the new velocity which preserves the magnitude of the
velocity and maintains the projection of the velocity on the gradient
vector. %

We may derive an algorithm which operates in the same spirit as that
of \citep{S_T_17}. By taking $n_{cpt}$ orthogonal unit vectors,
here selected randomly and independently of $v$, we can achieve a
reversible algorithm by simply taking the reflection off of the approximate
gradient
\[
v^{*}=v-2\frac{\langle\hat{g,}v\rangle}{|\hat{g}|^{2}}\hat{g,}
\]

and accepting this proposal in the same way we would accept a typical
bounce in the discrete-time BPS algorithm; specifically, by accepting
the bounce with probability

\[
\min\left\{ 1,\frac{[\pi\left(x\right)-\pi(x-v^{*}\epsilon)]_{+}}{[\pi\left(x\right)-\pi(x+v\epsilon)]_{+}}\right\} .
\]

Alternatively, we propose an algorithm which is related to the continuous-time
randomized bounces of Section \ref{subsec:Randomizedbounces}. We
had previously noted that the independent sampling algorithm proposed
in \citep{F_B_P_R_16} consists of sampling from the distribution
proportional to $\psi(v')\lambda(x,-v')$, independently of the current
value of $v$. Based on the discrete-time invariance condition (\ref{eq:Qpseudoinvariance-discrete}),
we may analogously sample from the distribution proportional to $\psi(v')\left[\pi(x)-\pi(x-v'\epsilon)\right]_{+}$.
This can be accomplished by using rejection sampling with instrumental
distribution $\psi$, noting that the ratio between the densities
is bounded above by $\pi(x)$; thus each rejection sampling proposal
$v^{\dagger}$ is accepted with probability $\left[\pi(x)-\pi(x-v^{\dagger}\epsilon)\right]_{+}/\pi(x)$,
and the first accepted proposal is also accepted as the new state
$v'$. See Algorithm \ref{alg:Discrete-time-gradient-free-BPS} for
details of this rejection-sampling scheme.

\begin{algorithm}

\caption{Discrete-time gradient-free BPS\label{alg:Discrete-time-gradient-free-BPS}}

\begin{enumerate}
\item With probability $\min\left\{ 1,\pi\left(x+v\epsilon\right)/\pi\left(x\right)\right\} $,
set $z'\leftarrow\left(x+v\epsilon,v\right)$. 
\item Otherwise 
\begin{enumerate}
\item Sample $v^{*}\sim\psi$.
\item With probability
\[
\frac{[\pi\left(x\right)-\pi(x-v^{*}\epsilon)]_{+}}{\pi(x)}
\]
 set $z^{\prime}\leftarrow\left(x,v^{*}\right)$. 
\item Otherwise go to Step 2.a.
\end{enumerate}
\end{enumerate}
\end{algorithm}

\subsubsection{Efficient Implementation of Discrete-time PD-MCMC\label{subsec:Efficient-Implementation-of-DT}}

All the implementations of discrete-time PD-MCMC schemes we are aware
of consist of simulating the algorithm using the kernel (\ref{eq:kerneldiscretetimePDMP}),
that is, at each%
{} time step it is checked whether an event occurs with probability
$1-\alpha\left(z\right)$ when in state $z$. However, it is possible
to improve over this implementation in some interesting scenarios.
Assume there exists $\bar{\alpha}:\mathcal{Z}\rightarrow\left[0,1\right]$
such that for $k\in\mathbb{N}$ we have $\alpha\left(\Phi^{k}\left(z\right)\right)\geq\bar{\alpha}\left(z,k\right)>0$
where $\bar{\alpha}\left(z,k\right)$ is computationally cheaper to
evaluate than $\alpha\left(\Phi^{k}\left(z\right)\right)$. It is
then possible to simulate an inter-event time of distribution (\ref{eq:sim_eventtimeDT})
by simulating a time from the instrumental distribution $\mathbb{\bar{P}}\left(\tau=j\right)=\left\{ 1-\bar{\alpha}\left(z,j\right)\right\} \prod_{i=0}^{j-1}\bar{\alpha}\left(z,i\right)$
which is then accepted with probability $\left\{ 1-\alpha\left(\Phi^{\tau}\left(z\right)\right)\right\} /\left\{ 1-\bar{\alpha}\left(z,\tau\right)\right\} $.
For a linear dynamics $\Phi\left(z\right)=\left(x+v\epsilon,v\right)$,
we can obtain such bounds by upper bounding the derivative of $t\mapsto U\left(x+vt\right)$.

If $\alpha\left(z\right)=\min\left\{ 1,\rho\left(\Phi\left(z\right)\right)/\rho\left(z\right)\right\} $,
we can also always use for example the lower bound $\bar{\alpha}\left(z,k\right)=\prod_{i=1}^{n}\bar{\alpha}_{i}\left(z,k\right)$
where $\bar{\alpha}_{i}\left(z,k\right)=\min\left\{ 1,\rho_{i}\left(\Phi^{k+1}\left(z\right)\right)/\rho_{i}\left(\Phi^{k}\left(z\right)\right)\right\} $
for $\rho\left(z\right)=\prod_{i=1}^{n}\rho_{i}\left(z\right)$. It
has the potential advantage that simulating an event of probability
$\bar{\alpha}\left(z,k\right)$ can be performed in parallel by simulating
independent Bernoulli random variables $B_{i}\sim\mathrm{Ber}(1-\bar{\alpha}_{i}(z,k))$
for $i\in[n]$.

Finally there are scenarios where it is possible to directly simulate
an event time from (\ref{eq:sim_eventtimeDT}). For example, assume
that $\pi\left(x\right)=\exp(-U\left(x\right))$ where $U$ is strictly
convex, $\Phi\left(z\right)=(x+v\epsilon,v)$ and $\alpha\left(z\right)=\min\left\{ 1,\rho\left(\Phi\left(z\right)\right)/\rho\left(z\right)\right\} =\min\left\{ 1,\exp\left(-\left(U\left(x+v\epsilon\right)-U\left(x\right)\right)\right)\right\} $
then it is easy to show that Algorithm \ref{alg:exactsimulationeventtimeDT}
returns a sample from (\ref{eq:sim_eventtimeDT}). This adapts the
approaches developed in \citep[Section 2.3.1]{BC_D_V_15} for the
continuous-time BPS algorithm to the discrete-time case. 

\begin{algorithm}[H]
\caption{Simulation inter-event time for discrete-time BPS~for strictly log-concave
targets\label{alg:exactsimulationeventtimeDT}}

\begin{enumerate}
\item Minimize the potential along the continuous trajectory 
\[
t^{*}=\arg\min\left\{ U\left(x+vt\right):t\in\mathbb{R^{\mathrm{+}}}\right\} .
\]
\item Set 
\[
k^{*}=\arg\min\left\{ U\left(x+vk\epsilon\right):k\in\left\{ \left\lfloor t^{*}/\epsilon\right\rfloor ,\left\lceil t^{*}/\epsilon\right\rceil \right\} \right\} .
\]
\item Solve for $t\geq t^{*}$
\[
U\left(x+vt\right)-U\left(x+vk^{*}\epsilon\right)=E,\text{\hspace{1cm}}E\sim\mathcal{E\mathrm{xp}}\left[0,1\right].
\]
\item Return $\tau=\left\lfloor t/\epsilon\right\rfloor $. %
\end{enumerate}
\end{algorithm}

All these strategies can be easily combined. For example, we can use
an upper bound $\bar{\alpha}\left(z,k\right)=\prod_{i=1}^{n}\bar{\alpha}_{i}\left(z,k\right)$
where $\rho_{i}\left(z\right)$ is strictly log-concave for some $i\in[n]$.

\section{Discrete-time local PD-MCMC\label{sec:Discrete-time-LocalPDMCMC}}

\subsection{Algorithm description}

Given the framework provided in Section \ref{subsec:Sufficient-conditions-for-Local-DT},
it is not difficult to obtain discrete-time local PD-MCMC schemes
for $\rho\left(z\right)=\exp(-\sum_{i=1}^{n}H_{i}\left(z\right))=\pi\left(x\right)\psi\left(v\right)=\exp(-U\left(x\right))\psi\left(v\right)$
on $\mathcal{Z}=\mathbb{R}^{d}\times\mathbb{R}^{d}$ where $\pi$
is the target distribution of interest with $\psi$ is a multivariate
normal. We can for example select a dynamics, involution and acceptance
probability satisfying $\left|\nabla\Phi\right|=1$, $\alpha_{i}\left(z\right)=\min\left\{ 1,\rho_{i}\left(\Phi\left(z\right)\right)/\rho_{i}\left(z\right)\right\} $
with $\rho_{i}\left(z\right)=\exp(-H_{i}\left(z\right))$, $\mathcal{S}\left(z\right)=\left(x,-v\right)$,
$\Phi^{-1}=\mathcal{S}\circ\Phi\circ\mathcal{S}$ and $\rho\circ\mathcal{S}=\rho$.
A rather generic local PD-MCMC scheme is presented in Algorithm \ref{alg:discretetimelocalPDMCMC}.

\begin{algorithm}[H]
\caption{Discrete-time local PD-MCMC~\label{alg:discretetimelocalPDMCMC}}

\begin{enumerate}
\item For $i\in[n]$, sample $B_{i}\sim\mathrm{Ber}\left\{ \left[\rho_{i}\left(z\right)-\rho_{i}\left(\Phi\left(z\right)\right)\right]_{+}/\rho_{i}\left(z\right)\right\} $.
\item If $B_{i}=0$ for all $i\in[n]$, set $z'\leftarrow\Phi\left(z\right)$. 
\item Otherwise, sample $z^{*}\sim M_{B}\left(z,\cdot\right).$
\item With probability 
\begin{align}
\min\left\{ 1,\frac{M_{B}\left(\mathcal{S}\left(z^{*}\right),\mathcal{S}\left(z\right)\right)}{M_{B}\left(z,z^{*}\right)}\prod_{i=1}^{n}\frac{\rho_{i}\left(\mathcal{S}\left(z^{*}\right)\right)\mathrm{Ber}\left(B_{i};1-\alpha_{i}\left(\mathcal{S}\left(z^{*}\right)\right)\right)}{\rho_{i}\left(z\right)\mathrm{Ber}\left(B_{i};1-\alpha_{i}\left(z\right)\right)}\right\} .\label{eq:acceptanceprobatotal}
\end{align}
set $z'\leftarrow z^{*}$. Otherwise, set $z'\leftarrow(x,-v)$.
\end{enumerate}
\end{algorithm}

Here Steps 3 and 4 of Algorithm \ref{alg:discretetimelocalPDMCMC}
corresponds to a GMH kernel satisfying the skewed-detailed balance
condition (\ref{eq:skeweddetailedbalanceconditiongeneral}) for $\nu_{b}\left(\rd z\right)\propto\rho\left(\rd z\right)\left(1-\alpha\left(z\right)\right)\mathbb{Q}_{|B|\geq1}\left(b|z\right)$
and a proposal $M_{B}\left(z,\rd z'\right)$ for any $b\in\mathcal{B}$. 

Consider a special case of Algorithm \ref{alg:discretetimelocalPDMCMC}
given in Algorithm \ref{alg:discretetimelocaBPS_DT} which corresponds
to a discrete-time version of local BPS. It is using $\Phi\left(z\right)=\left(x+v\epsilon,v\right)$,
$\mathcal{S}(z)=\left(x,-v\right)$ and a deterministic proposal $M_{b}\left(z,\rd z'\right)=\delta_{\Psi_{b}\left(z\right)}\left(\rd z'\right)$
satisfying $\Psi_{b}^{-1}=\mathcal{S}\circ\Psi_{b}\circ\mathcal{S}$.
We also use $\rho_{i}\left(z\right)=\exp(-U_{i}\left(x\right)):=\pi_{i}\left(x\right)$
so that $U\left(x\right)=\sum_{i=1}^{m}U_{i}\left(x\right)$ and $\rho_{n}\left(z\right)=\psi\left(v\right)$
with $n=m+1$. We could have selected $\alpha_{n}\left(z\right)=\alpha_{\mathrm{ref}}$
to refresh the velocity periodically but we omit it for ease of presentation.
The only difference with Algorithm \ref{alg:discretetimelocalPDMCMC}
is that we actually use here an alternative acceptance probability
which is lower than (\ref{eq:acceptanceprobatotal}) but has the advantages
that it factorizes across $i$. It will prove useful as it is then
possible to simulate an event with the required acceptance probability
by simulating independent events in parallel.

\begin{algorithm}[H]
\caption{Discrete-time local BPS \label{alg:discretetimelocaBPS_DT}}

\begin{enumerate}
\item For $i\in[m]$, sample $B_{i}\sim\mathrm{Ber}\left\{ \left[\pi_{i}\left(x\right)-\pi_{i}\left(x+v\epsilon\right)\right]_{+}/\pi_{i}\left(x\right)\right\} $.
\item If $B_{i}=0$ for all $i\in[n]$, set $z'\leftarrow\left(x+v\epsilon,v\right)$. 
\item Otherwise, 
\begin{enumerate}
\item Set $z^{*}\gets\Psi_{B}\left(z\right):=\left(x,v^{*}\right),$ where
$v^{*}\gets R_{\nabla\overline{U}}(x)v$ with $\nabla\overline{U}\left(x\right):=\sum_{i:B_{i}=1}\nabla U_{i}\left(x\right)$.
\item With probability 
\begin{align}
 & \prod_{i=1}^{m}\min\left\{ 1,\frac{\rho_{i}\left(\mathcal{S}\circ\Psi_{B}\left(z\right)\right)\mathrm{Ber}\left(B_{i};1-\alpha_{i}\left(\mathcal{S}\circ\Psi_{B}\left(z\right)\right)\right)}{\rho_{i}\left(z\right)\mathrm{Ber}\left(B_{i};1-\alpha_{i}\left(z\right)\right)}\right\} \nonumber \\
= & \prod_{i:B_{i}=0}\min\left\{ 1,\frac{\min\left(\pi_{i}(x),\pi_{i}(x-v^{*}\epsilon)\right)}{\min\left(\pi_{i}(x),\pi_{i}(x+v\epsilon)\right)}\right\} \prod_{i:B_{i}=1}\min\left\{ 1,\frac{[\pi_{i}(x)-\pi_{i}(x-v^{*}\epsilon)]_{+}}{[\pi_{i}(x)-\pi_{i}(x+v\epsilon)]_{+}}\right\} ,\label{eq:acceptanceprobalocalBP}
\end{align}
set $z'\gets\Psi_{B}\left(z\right)$.
\item Otherwise, set $z'\gets\left(x,-v\right)$.
\end{enumerate}
\end{enumerate}
\end{algorithm}

Note that $\nabla\overline{U}\left(x\right)$ depends on both $u$,
$v$ and $\epsilon$, we stress this dependence as it is omitted notationally.

Algorithms \ref{alg:discretetimelocalPDMCMC} and \ref{alg:discretetimelocaBPS_DT}
might appear of limited interest as they require to sample $n$ Bernoulli
random variables at each iteration. In the next sections, we show
how we can propose implementations that parallel the priority queue
implementation of the local BPS proposed in \citep{P_dW_12}, see
\citep[Section 3.3.1]{BC_D_V_15} for a detailed description, as well
as the subsampling algorithms proposed in \citep[Section 3.3.2]{BC_D_V_15,B_F_R_16,K_K_16}.

\subsection{Prefetching implementation\label{subsec:prefetchingDT}}

We first describe a priority queue type implementation of Algorithm
\ref{alg:discretetimelocaBPS_DT} based on parallel prefetching ideas
\citep{Br_2006,A_K_W_S_A_14} in scenarios where 
\[
U\left(x\right)=\sum_{i=1}^{m}U_{i}\left(x_{S_{i}}\right),
\]
$x_{S_{i}}$ being a subset of the components of $x$ and $\pi_{i}(x)=\exp\left(-U_{i}\left(x_{S_{i}}\right)\right)$.
There are many possible variations of this implementation.

\begin{algorithm}[H]
\caption{Discrete-time local BPS implementation via parallel prefetching ~\label{alg:discretetimelocalBPSprefetching}}

\begin{enumerate}
\item Initialization
\begin{enumerate}
\item For $i\in[m]$, sample non-negative event times $\tau_{i}$ with distribution
\[
\max\left(0,1-\frac{\pi_{i}(x+v(\tau_{i}+1)\epsilon)}{\pi_{i}(x+v\tau_{i}\epsilon)}\right)\prod_{k=0}^{\tau_{i}-1}\min\left(1,\frac{\pi_{i}(x+v(k+1)\epsilon)}{\pi_{i}(x+vk\epsilon)}\right).
\]
\end{enumerate}
\item Iteration $t,$ $t\geq1$
\begin{enumerate}
\item If $\min\tau_{i}>0$, then set $z'\leftarrow\left(x+\epsilon v,v\right)$.
Update $\tau_{i}\gets\tau_{i}-1$.
\item Otherwise,
\begin{enumerate}
\item Compute 
\begin{equation}
\nabla\overline{U}\left(x\right):=\sum_{i:\tau_{i}=0}\nabla U_{i}\left(x_{S_{i}}\right),\label{eq:graduestimate1-1-1-1}
\end{equation}
and let $v^{*}\gets R_{\nabla\overline{U}}\left(x\right)v$. 
\item With probability 
\begin{equation}
\prod_{i:\tau_{i}>0}\min\left\{ 1,\frac{\min\left(\pi_{i}(x),\pi_{i}(x-v^{*}\epsilon)\right)}{\min\left(\pi_{i}(x),\pi_{i}(x+v\epsilon)\right)}\right\} \prod_{i:\tau_{i}=0}\min\left\{ 1,\frac{[\pi_{i}(x)-\pi_{i}(x-v^{*}\epsilon)]_{+}}{[\pi_{i}(x)-\pi_{i}(x+v\epsilon)]_{+}}\right\} ,\label{eq:prefretching_acceptance_probability}
\end{equation}
set $z'\gets(x,v^{*})$. Sample again $\tau_{i}$ for all $i$ where
$v_{j}^{*}\neq v_{j}$ for some $j\in S_{i}$.
\item Otherwise set $z'\leftarrow\left(x,-v\right)$. Sample $\tau_{i}$
for all $i$.
\end{enumerate}
\end{enumerate}
\end{enumerate}
\end{algorithm}

The efficiency of Algorithm \ref{alg:discretetimelocalBPSprefetching}
relies on the capability of computing the $\tau_{i}$ efficiently.
This may be possible when, for example, this is done in parallel or
when we some property of $\pi_{i}$ allows it, such as in the case
of log-concave targets detailed as in Algorithm \ref{alg:exactsimulationeventtimeDT}
given above.

\subsection{Subsampling implementations\label{subsec:subsamplingDT}}

For sufficiently small $\epsilon$, we might expect that in Step 1
of Algorithm \ref{alg:discretetimelocaBPS_DT} would yield very few
indices for which $B_{i}=1$. This motivates an approach which can
sample these variables more efficiently by finding an upper bound
on the probability that $B_{i}=1$, essentially allowing us to bound
the number of indices for which $B_{i}=1$. We present Algorithm \ref{alg:discretetimelocalBPSthinningGeneric};
here, the acceptance of the bounce move (\ref{eq:acceptanceprobalocalBP})
is computed in two stages: in Step 4.b we simulate events of probability
$1-\min\left\{ 1,\frac{[\pi_{i}(x)-\pi_{i}(x-v^{*}\epsilon)]_{+}}{[\pi_{i}(x)-\pi_{i}(x+v\epsilon)]_{+}}\right\} $
for each $i$ where $B_{i}=1$, if these succeed then in Step 4.c
we simulate events of probability $1-\min\left\{ 1,\frac{\min\left(\pi_{i}(x),\pi_{i}(x-v^{*}\epsilon)\right)}{\min\left(\pi_{i}(x),\pi_{i}(x+v\epsilon)\right)}\right\} $
for each $i$ where $B_{i}=0$. We suggest that one can make use of
efficient procedures described in Algorithm \ref{alg:discretetimelocalBPSthinningBernoulli}
and Algorithm \ref{alg:discretetimelocalBPSthinningPoisson} to sample
multiple Bernoulli random variables in both Steps 1 and 4.c; in both
cases we expect few cases where the respective Bernoulli variables
are 1. While Step 4.b also samples a set of Bernoulli variables, our
assumption that $\epsilon$ is small suggests that the number of variables
sampled here will be small; as such this step may be inexpensive and
there is likely little to be gained by a more sophisticated simulation
scheme.

\begin{algorithm}[H]
\caption{Discrete-time local BPS implementation via Binomial sampling ~\label{alg:discretetimelocalBPSthinningGeneric}}

\begin{enumerate}
\item For $i\in[m]$, sample $B_{i}\sim\mathrm{Ber}\left\{ \left[\pi_{i}\left(x\right)-\pi_{i}\left(x+v\epsilon\right)\right]_{+}/\pi_{i}\left(x\right)\right\} $. 
\item Set $V\leftarrow\{i\in[m]:B_{i}=1\}$.
\item If $V=\emptyset$, then set $z'\leftarrow\left(x+v\epsilon,v\right)$.
\item If $V\neq\emptyset$, then
\begin{enumerate}
\item Compute
\begin{equation}
\nabla\overline{U}\left(x\right):=\sum_{i\in V}\nabla U_{i}\left(x_{S_{i}}\right)\label{eq:dbps_V_gradient-1}
\end{equation}
and let $v^{*}\gets R_{\nabla\overline{U}}(x)v$.
\item For $i\in V$, sample $B'_{i}\sim\mathrm{Ber}\left(1-\min\left\{ 1,\frac{[\pi_{i}(x)-\pi_{i}(x-v^{*}\epsilon)]_{+}}{[\pi_{i}(x)-\pi_{i}(x+v\epsilon)]_{+}}\right\} \right)$.%
\item For $i\in[m]\setminus V$, sample $B'_{i}\sim\mathrm{Ber}\left(1-\min\left\{ 1,\frac{\min\left(\pi_{i}(x),\pi_{i}(x-v^{*}\epsilon)\right)}{\min\left(\pi_{i}(x),\pi_{i}(x+v\epsilon)\right)}\right\} \right)$.
\item If $B'_{i}=1$ for any $i\in[m]$ then set $z'\leftarrow\left(x,-v\right)$
and otherwise set $z'\leftarrow\left(x,v^{*}\right)$.
\end{enumerate}
\end{enumerate}
\end{algorithm}

We suggest two possible alternatives for efficiently sampling a set
of Bernoulli variables. Here, use the notation $X_{i}\sim\mathrm{Ber}(p_{i})$
for all $i\in I$ to emphasize that these are general schemes not
necessarily associated with sampling either $B_{i}$ or $B'_{i}$.
First, consider the scenario where one has some uniform control over
the probability that $X_{i}=1$, that is we assume that there exists
$0\leq\bar{p}\leq1$ such that for all $i$
\[
\mathbb{P}\left(X_{i}=1\right):=p_{i}\leq\bar{p}.
\]
In this case, we can determine the set $\left\{ i:X_{i}=1\right\} $
using Algorithm \ref{alg:discretetimelocalBPSthinningBernoulli}.
This incurs a computational complexity $O(1+|I|\bar{p})$ compared
to $O(|I|)$ for the direct implementation \citep{Hormann2003}.%
{} This implementation can be thought of as the discrete-time version
of the thinning ideas leading to the ``naive'' subsampling techniques
presented in \citep{BC_D_V_15,B_F_R_16,B_BC_D_D_F_R_J_16}. 

\begin{algorithm}[H]
\caption{Efficient sampling of Bernoulli variables via Binomial sampling ~\label{alg:discretetimelocalBPSthinningBernoulli}}

Given a set of indices $I$, associated Bernoulli probabilities $\left\{ p_{i};i\in I\right\} $,
and bound $p_{i}\leq\bar{p}$,
\begin{enumerate}
\item Sample $S\sim\mathrm{Bin}\left(|I|,\bar{p}\right)$.
\item Sample $S$ indices $i_{1},...,i_{S}$ in $I$ uniformly at random
without replacement and denote $\mathcal{S}=\left(i_{1},...,i_{S}\right)$.
\item For $i\in\mathcal{S}$, sample $X_{i}\sim\mathrm{Ber}\left(p_{i}/\bar{p}\right).$
\item For $i\in I\setminus\mathcal{S}$, set $X_{i}\gets0$.
\end{enumerate}
\end{algorithm}

Second, if we instead have access to local bounds $0\leq\bar{p}_{i}\leq1$
such that 

\[
\mathbb{P}\left(X_{i}=1\right):=p_{i}\leq\bar{p}_{i},
\]

we could obviously use the previous strategy by setting $\bar{p}:=\max_{i\in I}\bar{p}_{i}$
but this strategy can be highly inefficient if, e.g., most bounds
$\bar{p}_{i}$ are very close to zero and a few are close to 1. In
this scenario, it is possible to use instead Algorithm \ref{alg:discretetimelocalBPSthinningPoisson}
which relies on the simulation of Poisson random variables. This algorithm
can be thought of as the discrete-time version of the thinning ideas
leading to the ``informed'' subsampling techniques presented in
\citep{BC_D_V_15,K_K_16}.

\begin{algorithm}[H]
\caption{Efficient sampling of Bernoulli variables via Poisson sampling ~\label{alg:discretetimelocalBPSthinningPoisson}}

Given a set of indices $I$, associated Bernoulli probabilities $\left\{ p_{i};i\in I\right\} $,
and local bounds $p_{i}\leq\bar{p}_{i}$,
\begin{enumerate}
\item Sample $S\sim\mathrm{Poi}\left(\kappa\right)$ where $\kappa=\sum_{i\in I}\kappa_{i}$
with $\kappa_{i}=-\log\left(1-\overline{p}_{i}\right)$.
\item Sample $\left\{ N_{i};i\in I\right\} $ from the multinomial distribution
of parameters $\left(S,\left\{ \frac{\kappa{}_{i}}{\kappa};i\in I\right\} \right)$
and denote $\mathcal{S}=\left\{ i:N_{i}\geq1\right\} $.
\item For $i\in\mathcal{S}$, sample $X_{i}\sim\mathrm{Ber}\left(p_{i}/\overline{p}_{i}\right)$.
\item For $i\in I\setminus\mathcal{S}$, set $X_{i}\gets0$.
\end{enumerate}
\end{algorithm}

For this algorithm to be of practical interest, the bounds $\overline{p}_{i}$
and the associated Poisson rates $\kappa_{i}$ should not have to
be recomputed at each time step as for the examples considered in
\citep{BC_D_V_15,K_K_16}. In this scenario, it is then possible to
use the alias method or ordered marginally uniform random variables
on $\left[0,1\right]$ to sample efficiently from the multinomial
distributions in complexity $O(S)$ \citep{Hormann2003}.%

The availability of an upper bound for Step 1, denoted here $\bar{p}_{i}(x,v)$,
can be seen as equivalent to a lower bound on $\alpha_{i}(z)$ as
discussed in Section \ref{subsec:Sufficient-conditions-for-Local-DT},
since

\[
\bar{p}_{i}(x,v)\geq\left[\pi_{i}\left(x\right)-\pi_{i}\left(x+v\epsilon\right)\right]_{+}/\pi_{i}\left(x\right)=1-\alpha_{i}(x,v).
\]

For Step 4.c, we would seek an upper bound 
\begin{align*}
\bar{p}'_{i}(x,v,v^{*}) & \geq1-\min\left\{ 1,\frac{\min\left(\pi_{i}(x),\pi_{i}(x-v^{*}\epsilon)\right)}{\min\left(\pi_{i}(x),\pi_{i}(x+v\epsilon)\right)}\right\} .
\end{align*}

This bound may be achieved, for example, when $|\nabla U_{i}(x')|<\delta$
for all $\left\{ x':|x'-x|<\epsilon|v^{*}|\right\} $. In this case,
an upper bound can be derived using 
\[
\frac{\min\left(\pi_{i}(x),\pi_{i}(x-v^{*}\epsilon)\right)}{\min\left(\pi_{i}(x),\pi_{i}(x+v\epsilon)\right)}\geq\min\left(1,\pi_{i}(x-v^{*}\epsilon)/\pi_{i}(x)\right)>1-\delta|v^{\ast}|\epsilon.
\]

\section{Discrete-time doubly stochastic PD-MCMC\label{sec:Discrete-time-doublystochasticPDMCMC}}

\subsection{Algorithm description}

By using the framework provided in Section \ref{subsec:Sufficient-conditions-for-Stochastic-DT},
we can obtain discrete-time stochastic PD-MCMC schemes for $\rho\left(z\right)=\exp(-\int H_{\omega}\left(z\right)\mu(\rd\omega))=\pi\left(x\right)\psi\left(v\right)$
on $\mathcal{Z}=\mathbb{R}^{d}\times\mathbb{R}^{d}$ where $\pi$
is the target distribution of interest with $\psi$ is a multivariate
normal. We will write $H_{\omega}\left(z\right)=U_{\omega}\left(x\right)+\frac{1}{2}v^{T}v$.
We can for example select a dynamics and an involution satisfying
$\left|\nabla\Phi\right|=1$, $\Phi^{-1}=\mathcal{S}\circ\Phi\circ\mathcal{S}$
and $\rho\circ\mathcal{S}=\rho$. A rather generic doubly-stochastic
PD-MCMC scheme for such dynamics is presented in Algorithm \ref{alg:discrete-time-doubly-stochastic-MCMC}.

The kernel $Q_{P}$, which must satisfy (\ref{eq:conditionaldistributionatleastaneventdoublystochasticlDT}),
may be implemented using a scheme similar to the GMH. Using standard
results on Poisson point processes and Assumption A\ref{hypA:Sassumptions-DT}.3,
the condition (\ref{eq:QlocalpseudoinvarianceDT-1}) can be simplified
as
\begin{align*}
\int\rho(\rd z)\exp\left\{ \int\left\{ \log\alpha_{\omega}(z)-\log\alpha_{\omega}(\mathcal{S}(z'))\right\} \mu(\rd\omega)\right\} \frac{\prod_{\omega\in P}\log\alpha_{\omega}(z)}{\prod_{\omega\in P}\log\alpha_{\omega}(\mathcal{S}(z'))}Q_{P}(z,\rd z') & =\rho(\mathcal{S}(\rd z')),
\end{align*}

suggesting a GMH kernel with deterministic proposal $\Psi_{P}(z)$
satisfying $\Psi_{P}^{-1}=\mathcal{S}\circ\Psi_{P}\circ\mathcal{S}$
and acceptance probability
\begin{align}
\beta(z,P)= & \exp\left(-\int\left[\log\alpha_{\omega}(z)-H_{\omega}(z)-\log\alpha_{\omega}(\mathcal{S}\circ\Psi_{P}(z))+H_{\omega}(\mathcal{S}\circ\Psi_{P}(z))\right]_{+}\mu(\rd\omega)\right)\label{eq:bounce-acceptance-doubly-stochastic-bps}\\
 & \times\min\left\{ 1,\prod_{\omega\in P}\frac{\log\alpha_{\omega}(\mathcal{S}\circ\Psi_{P}(z))}{\log\alpha_{\omega}(z)}\right\} \nonumber 
\end{align}

which arises by treating the integral terms and the product terms
as two factors, each with its own acceptance probability. Based on
this, we present Algorithm \ref{alg:discrete-time-doubly-stochastic-MCMC},
wherein we sample an event of probability (\ref{eq:bounce-acceptance-doubly-stochastic-bps})
using a two-stage acceptance procedure.

\begin{algorithm}
\caption{Discrete-time doubly stochastic PD-MCMC\label{alg:discrete-time-doubly-stochastic-MCMC}}

\begin{enumerate}
\item Sample a Poisson process $P$ with rate $-\log\alpha_{\omega}(z)\mu(\rd\omega)$.
\item If $P=\emptyset$, then set $z'\gets\Phi(z)$.
\item If $P\neq\emptyset$,
\begin{enumerate}
\item Sample a Poisson process $P'$ with rate $\left[\log\alpha_{\omega}(z)-H_{\omega}(z)-\log\alpha_{\omega}(\mathcal{S}\circ\Psi_{P}(z))+H_{\omega}(\mathcal{S}\circ\Psi_{P}(z))\right]_{+}\mu(\rd\omega)$.
\item If $P'=\emptyset$, then set $z'\gets\Psi_{P}(z)$ with probability
$\min\left\{ 1,\prod_{\omega\in P}\frac{\log\alpha_{\omega}(\mathcal{S}\circ\Psi_{P}(z))}{\log\alpha_{\omega}(z)}\right\} $.
\item Otherwise set $z'\gets\mathcal{S}(z)$.
\end{enumerate}
\end{enumerate}
\end{algorithm}

By selecting $\alpha_{\omega}(z)=\min\left(1,\frac{\rho_{\omega}(\Phi(z))}{\rho_{\omega}(z)}\right)=\exp\left(-\left[H_{\omega}(\Phi(z))-H_{\omega}(z)\right]_{+}\right)$
for $\rho_{\omega}\left(z\right)=\exp(-H_{\omega}\left(z\right))$,
the acceptance probability (\ref{eq:bounce-acceptance-doubly-stochastic-bps})
takes the form
\begin{align*}
\beta(z,P)= & \exp\left(-\int\left\{ \max\left[H_{\omega}(\mathcal{S}\circ\Psi_{P}(z)),H(\Phi\circ\mathcal{S}\circ\Psi_{P}(z))\right]-\max\left[H_{\omega}(z),H_{\omega}(\Phi(z))\right]\right\} _{+}\mu(\rd\omega)\right)\\
 & \times\min\left\{ 1,\prod_{\omega\in P}\frac{\left[H_{\omega}(\Phi\circ\mathcal{S}\circ\Psi_{P}(z))-H_{\omega}(\mathcal{S}\circ\Psi_{P}(z))\right]_{+}}{\left[H_{\omega}(\Phi(z))-H_{\omega}(z)\right]_{+}}\right\} .
\end{align*}
Further allowing $\mathcal{S}(z)=(x,-v)$, $\Phi(z)=(x+v\epsilon,v)$
and $\Psi_{P}(z)=(x,R_{\nabla\overline{U}}(x)v)$ with $\nabla\overline{U}(x)=\sum_{\omega\in P}\nabla U_{\omega}(x)$
yields

\begin{align*}
\beta(z,P) & =\exp\left(-\int\left\{ \left[U_{\omega}(x-R_{\nabla\overline{U}}(x)v\epsilon)-U_{\omega}(x)\right]_{+}-\left[U_{\omega}(x+v\epsilon)-U_{\omega}(x)\right]_{+}\right\} _{+}\mu(\rd\omega)\right)\\
 & \times\min\left\{ 1,\prod_{\omega\in P}\frac{\left[U_{\omega}(x-R_{\nabla\overline{U}}(x)v\epsilon)-U_{\omega}(x)\right]_{+}}{\left[U_{\omega}(x+v\epsilon)-U_{\omega}(x)\right]_{+}}\right\} .
\end{align*}

The first term of this acceptance ratio, viewed as a void probability
of a Poisson process, can be interpreted as the ``excess'' rate
of $\alpha(x,-R_{\nabla\overline{U}}(x)v)$ over $\alpha(x,v)$; in
other words, the probability that no extra points would be simulated
for $P$ when in state $(x,-R_{\nabla\overline{U}}(x)v)$.

In either case, the simulation of Poisson processes $P$ and $P'$
is possible when those rates can be bounded. If we have some lower
bound $\underline{\alpha_{\omega}}(z)\leq\alpha_{\omega}(z)$ for
which we can simulate a Poisson process of intensity $-\log\underline{\alpha_{\omega}}(z)\mu(\rd\omega)$,
then we can recover $P$ by thinning this process. This condition
is sufficient for simulation of $P'$ as the corresponding intensity
is bounded by $-\log\underline{\alpha_{\omega}}(\mathcal{S}\circ\Psi(z))$;
however, it may be possible to bound the intensity of $P'$ more tightly
in some situations.

The idea of introducing a Poisson process so as to deal with the intractability
of target distribution can also be exploited within a standard MCMC
setting. For simplicity, assume a symmetric proposal density $q\left(\left.z'\right|z\right)$
then it is easy to check that Algorithm \ref{alg:discretetimeStochaticMH}
corresponds to a transition kernel which is reversible with respect
to $\rho\left(z\right)=\exp(-\int H_{\omega}\left(z\right)\mu\left(\rd\omega\right))$.

\begin{algorithm}[H]
\caption{Noisy Metropolis\textendash Hastings using unbiased estimator of the
log-target ~\label{alg:discretetimeStochaticMH}}

\begin{enumerate}
\item Sample $z^{*}\sim q\left(\left.\cdot\right|z\right).$
\item Sample a Poisson process $P$ on $\Omega$ with rate $\left[H_{\omega}\left(z^{*}\right)-H_{\omega}\left(z\right)\right]_{+}\mu\left(\rd\omega\right)$.
\item If $P=\emptyset$, then set $z'\leftarrow z^{*}.$
\item Otherwise set $z'\leftarrow z.$
\end{enumerate}
\end{algorithm}

\subsection{For measures containing atoms}

In the previous section, we assumed that the measure $\mu$ was non-atomic.
Here we consider the case where $\mu$ may contain atoms; this extension
allows us to view the local algorithms as a special case of the doubly-stochastic
algorithm where $\Omega=[n]$. To avoid any issues that may arise
due to indistinguishable points, we simulate here a Poisson process
$P^{*}$ on $\Omega\times\mathbb{R}$ with rate $\mathbb{I}(0<y<-\log\alpha_{\omega}(z))\mu(\rd\omega)\mathrm{Leb}(\rd y)$,
which projected onto $\Omega$ is equivalent to the rate we used in
the non-atomic case. Whereas in the non-atomic case we would take
$\nabla\overline{U}(x)=\sum_{\omega\in P^{*}}\nabla U_{\omega}(x)$,
we propose to here instead take $\nabla\overline{U}^{*}(x)=\sum_{\omega\in P_{\omega}}\nabla U_{\omega}(x),$
where $P_{\omega}$ denotes the set of unique values of $\omega$
among the points in $P^{*}$. We define the projection $\upsilon(P^{*})=P_{\omega}$.
Denote the corresponding bounce proposal $\Psi_{P_{\omega}}(x,v)=(x,R_{\nabla\overline{U}^{*}}(x)v)$.

While it remains sufficient to use the acceptance probability (\ref{eq:bounce-acceptance-doubly-stochastic-bps}),
we note that a partition of $P^{*}$ into sets of equivalent $P_{\omega}$
(and therefore equivalent bounce proposals $\Psi_{P_{\omega}}(z)$)
will yield a sufficient condition which is ``integrated out'' in
the sense that the total density of the forward and reverse transitions
are captured.

Allow $\Omega^{*}$ to represent the set of atoms in $\Omega$. The
probability of an atom $\omega^{*}\in\Omega^{*}$ being absent in
the projected Poisson process $P_{\omega}$ is $\exp\left(-\int_{0}^{-\log\alpha_{\omega^{*}}(z)\mu(\{\omega^{*}\})}\mathrm{Leb}(\rd y)\right)=\alpha_{\omega^{*}}(z)\mu(\{\omega^{*}\})$.
From this, we can see that the void probability of $P^{*}$ (and equivalently
the void probability of $P_{\omega}$) can be written
\[
\alpha(z)=\exp\left(\int_{\Omega\setminus\Omega^{*}}\log\alpha_{\omega}(z)\mu(\rd\omega)\right)\times\prod_{\omega\in\Omega^{*}}\alpha_{\omega}(z)\mu(\{\omega\}),
\]

which is in some sense a hybrid of the local and doubly-stochastic
acceptance ratios. Define the measure on $P_{\omega}$ as the pushforward
of the measure $\mathbb{Q}_{|P^{*}|\geq1}$ for the mapping $\upsilon$;
the distribution of $P_{\omega}$, conditional on rejecting the forward
move $\Phi(z)$, is

\[
\mathbb{Q}_{|P_{\omega}|\geq1}^{*}(\rd P_{\omega}|z)=\mathbb{Q}_{|P|\geq1}\left(\upsilon^{-1}(\rd P_{\omega})|z\right).
\]

Similarly to Assumption A\ref{hypA:Sassumptions-DT}.3, it is sufficient
that the bounce transition kernel $Q_{P_{\omega}}^{*}$ satisfy for
$\mathbb{Q}_{|P_{\omega}|\geq1}^{*}(\rd P_{\omega}|z)$-almost all
$P_{\omega}\in\mathcal{P}$

\[
\int\rho(\rd z)(1-\alpha(z))\frac{\rd\mathbb{Q}_{|P_{\omega}|\geq1}^{*}(\rd P_{\omega}|z)}{\rd\mathbb{Q}_{|P_{\omega}|\geq1}^{*}(\rd P_{\omega}|\mathcal{S}(z'))}Q_{P_{\omega}}^{*}(z,\rd z')=\rho(\mathcal{S}(\rd z'))(1-\alpha(\mathcal{S}(z'))),
\]

and that the Radon-Nikodym derivative above is well-defined and strictly
positive for $Q_{P_{\omega}}^{*}(z,\rd z')$-almost all $z'$. The
above implies an algorithm similar to Algorithm \ref{alg:discrete-time-doubly-stochastic-MCMC}
but where $\Psi_{P_{\omega}}(z)$ would be accepted with a probability
of
\begin{align*}
\alpha(z)= & \exp\left(-\int_{\Omega\setminus\Omega^{*}}\left[\log\alpha_{\omega}(z)-H_{\omega}(z)-\log\alpha_{\omega}(\mathcal{S}\circ\Psi_{P_{\omega}}(z))+H_{\omega}(\mathcal{S}\circ\Psi_{P_{\omega}}(z))\right]_{+}\mu(\rd\omega)\right)\\
\times & \min\left(1,\prod_{\omega\in P_{\omega}\setminus\Omega^{*}}\frac{\log\alpha_{\omega}(\mathcal{S}\circ\Psi_{P_{\omega}}(z))}{\log\alpha_{\omega}(z)}\prod_{\omega^{*}\in\Omega^{*}}\frac{\exp\left(-H_{\omega}(\mathcal{S}\circ\Psi_{P_{\omega}}(z))\mu(\{\omega\})\right)\mathrm{Ber}\left(\mathbb{I}(\omega^{*}\in P_{\omega});1-\alpha_{\omega^{*}}(\mathcal{S}\circ\Psi(z))\right)}{\exp\left(-H_{\omega}(z)\mu(\{\omega\})\right)\mathrm{Ber}\left(\mathbb{I}(\omega^{*}\in P_{\omega});1-\alpha_{\omega^{*}}(z)\right)}\right).
\end{align*}

\section{Numerical results\label{sec:Numerical-results}}

\subsection{Hamiltonian BPS\label{subsec:Hamiltonian-PDMP-results}}

In \citep{BC_D_V_15}, the local BPS algorithm was shown to outperform
various state-of-the-art HMC algorithms in sparse precision Gaussian
random field models with Poisson observations. In this section, we
investigate the relative performance of local BPS and Hamiltonian
BPS in the same setting. We find that Hamiltonian BPS has a modest
advantage over local BPS when the number of observations is small
but the dimensionality of the latent variables is high. On the other
hand, when the number of observations is equal to the number of latent
variables, the situation is reversed. However in both regimes Hamiltonian
BPS outperforms global BPS, and it is worth keeping in mind that there
are situations where Hamiltonian BPS is applicable while the local
BPS is not computationally attractive, for example if a single variable
is connected to all factors. 

\subsubsection{Hamiltonian flow}

In the notation of Section \ref{subsec:Hamiltonian-PDMP}, we consider
an example where $V$ corresponds to the isotropic prior normal distribution
of a Bayesian model and so $\tilde{U}$ corresponds to the negative
log-likelihood. Under this assumption, the corresponding Hamiltonian
flow is given for $i\in[d]$ by
\[
\flow_{t}(z)=\exp\text{\ensuremath{\left(t\left[\begin{array}{cc}
0 & I\\
-I & 0
\end{array}\right]\right)z=\sin(t)\begin{bmatrix}0 & I\\
-I & 0
\end{bmatrix}z+\cos(t)z}}
\]

More generally, if $V$ is an arbitrary normal distribution, the situation
considered here can be used after a change of variables. The computational
trade-off results we present in this section are hence representative
of situations where we have a high-dimensional Gaussian prior with
a precision matrix admitting a Cholesky decomposition that can be
computed in time $O(d)$, which arises for example in certain time
series models and corresponds to a best case scenario for Hamiltonian
BPS. 

\subsubsection{Exact simulation of bounce times\label{subsec:Computation-of-bounce}}

Let $j\in\left\{ 1,2,\dots,k\right\} $ index the observations. Assume
that the negative log-likelihood $\widetilde{U}\left(x\right)$ can
be decomposed as $\widetilde{U}(x)=\sum_{j=1}^{k}\widetilde{U}_{j}(x_{i(j)})$
for some function $i(\cdot)$ mapping observation indices to the latent
variable indices. As a pre-processing step, we compute (numerically
or analytically) a bound $B_{j}(b)\ge\sup\left\{ |\nabla\widetilde{U}_{j}(x)|:|x|<b\right\} $. 

Let $x=x_{i(j)}$ and $v=v_{i(j)}$ denote the initial position and
velocity at the beginning of the current piecewise Hamiltonian segment
for the latent variable $i(j)$ associated with observation $j$.
From Section 2.3.3 of \citep{BC_D_V_15}, it is enough to simulate
the bounce time of a single factor $\widetilde{U}_{j}(x_{i(j)})$.
Using the methodology developed in \citep[Section 2.3.2]{BC_D_V_15},
we simulate the bounce time of each factor using thinning and the
following bound on the intensity $\chi(t)$:

\begin{eqnarray*}
\chi(t) & = & \max\left\{ 0,(-x\sin(t)+v\cos(t))\nabla\widetilde{U}_{j}(v\sin(t)+x\cos(t)\right\} \\
 & = & \max\left\{ 0,\sqrt{x^{2}+v^{2}}\cos(t-\alpha)\nabla\widetilde{U}_{j}\left(\sqrt{x^{2}+v^{2}}\cos(t-\beta)\right)\right\} \\
 & \le & \sqrt{x^{2}+v^{2}}B_{j}\left(\sqrt{x^{2}+v^{2}}\right),
\end{eqnarray*}
 where $\alpha=\arctan(-v/x)$, $\beta=\arctan(v/d)$.

\subsubsection{Results}

\begin{figure}
\includegraphics[width=1\columnwidth]{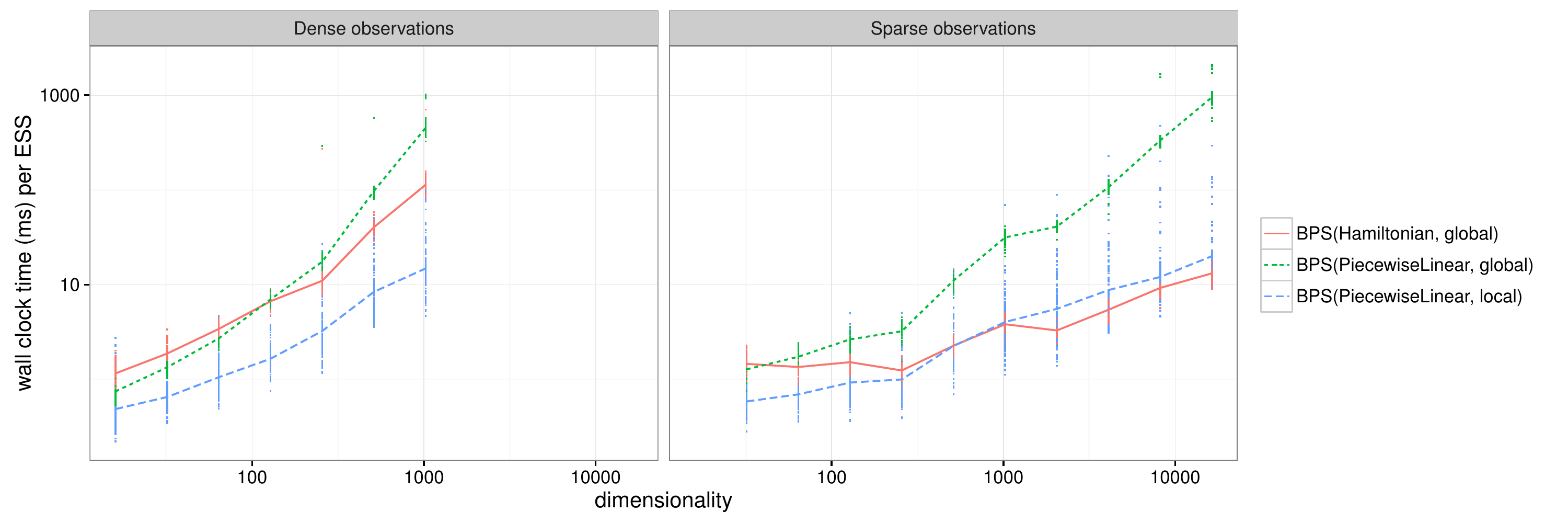}

\caption{\label{fig:Results-on-the}Results on the model described in Section
\ref{subsec:Hamiltonian-PDMP-results}. Dense observations (left)
correspond to the case where the number of observations grows linearly
with the dimensionality of the latent field, $k=d$, while sparse
observation (right) correspond to the case where the number of observations
is held fix ($k=16$).}
\end{figure}

We consider a likelihood given by conditionally independent Poisson
observations with observations $y_{i}$ having a natural exponential
family parameter given by the latent random variable $x_{i}$:

\[
\P(y_{i}=n|x)\propto\exp(-x_{i}n+\exp x_{i}).
\]

We compare three algorithms: local and global BPS with piecewise linear
trajectories, and Hamiltonian BPS. Computation of the bounce times
for the piecewise linear trajectories is done as in \citep{BC_D_V_15}.
For the bounce times of Hamiltonian BPS, we use the result from Section
\ref{subsec:Computation-of-bounce} with $B_{j}(b)=\exp(b)+y_{j}$. 

We show in Figure \ref{fig:Results-on-the} the scaling of the CPU
wall clock time required to obtain one effective sample size (ESS)
as a function of the dimensionality $d$ (log-log scale). %
The wall clock time is measured in milliseconds on a 2.8 GHz Intel
Core i7, and the ESS is computed using a batch mean estimator with
a test function given by $f(x)=x_{1}^{2}$. Expectations from piecewise-deterministic
trajectories are computed analytically as shown in \citep{BC_D_V_15}
and from piecewise Hamiltonian trajectories, using numerical integration.
For each dimension and algorithm, we run $100$ independent chains
and average the running times per ESS. 

\subsection{Empirical comparisons of local and global BPS to HMC and Standard
and Elliptical Slice Sampling}

\subsubsection{Setup}

We consider four models, built from two prior distributions: first,
a Brownian bridge prior, and second, a diagonal precision prior. For
each prior, we consider either a Poisson likelihood with synthetic
observations (with the same structure as described in the previous
section), or no likelihood function. We consider the following sampling
methods: the Elliptical Slice Sampler \citep{M_A_2010}, the ``Standard''
Slice Sampler (with exponential slice growing and slice shrinking)
\citep{neal2003}, HMC, or more precisely the NUTS algorithm implemented
in Stan, the local and global BPS algorithm with linear trajectories,
and the Hamiltonian BPS algorithm. For each combination, we run the
algorithms on latent fields of dimensionality $\left\{ 2^{0},2^{1},2^{2},\dots,2^{7}\right\} $,
and replicate the experiment 50 times with different random seeds.
We measure ESS and wall clock time. ESS is computed using a batch
mean estimator with a test function given by $f(x)=x_{1}^{2}$. 

\subsubsection{Results\label{subsec:slice-Results}}

\begin{figure}
\includegraphics[width=1\columnwidth]{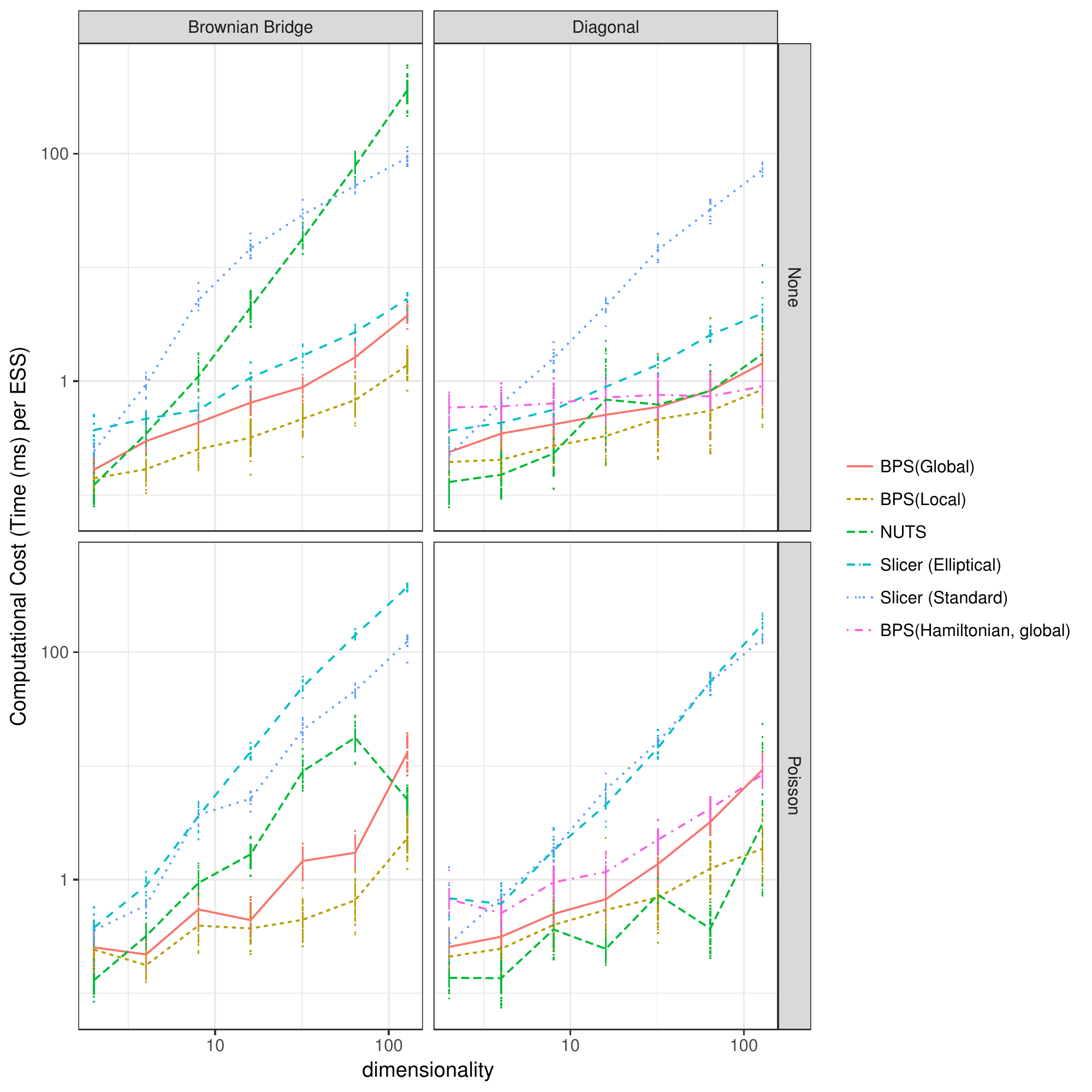}

\caption{\label{fig:dim-scaling-combined-1}Main results of Section \ref{subsec:slice-Results}.
The ordinate shows the empirical computational complexity (ms per
ESS) and the abscissa (lower is better), the dimensionality. Both
axes are in log-scale. The dots show the variability from 50 independent
runs with different random seeds, and the line, the averages. }
\end{figure}

We summarize the main results of this section in Figure \ref{fig:dim-scaling-combined-1},
where the empirical computational complexity (wall clock time (ms)
per ESS) is plotted in log-log scale against the dimensionality of
the field for the four models. For sufficiently high-dimensional scenarios
(> 10 dimensions), local BPS outperforms all other methods in 3 out
of the 4 settings. In the fourth setting, (Diagonal Precision + Poisson
Likelihood), NUTS (HMC) and Local BPS outperform the other methods,
but neither strictly dominate the other. Elliptic Slice Sampling is
competitive when there is no likelihood, but it is still not better
than Local BPS, presumably because the latter can use the full trajectory
when computing averages whereas Elliptical is discrete-time. However,
once the Poisson Likelihood is added, Elliptical Sampling seems to
have worse asymptotics, empirically roughly $O(n^{3/2})$ versus roughly
$O(n^{1+\epsilon})$ for the best performing methods. 

\subsection{Randomized bounces}

In this section, we compare the performance of several collision operators
on two collections of problems of increasing dimensionality. 

\subsubsection{Setup}

The first collection of target distributions we consider consists
in funnel distributions from \citep{neal2011}, namely multivariate
normals of varying dimension $d$ with diagonal covariance matrix
and standard deviations for each components given by $1,(d-1)/d,(d-2)/d,\dots,1/d$.
Since the algorithms considered are rotationally invariant, this is
representative of problems with averse conditioning. The second collection
consists in isotropic multivariate normal of increasing dimensionality
$d$. The isotropic examples are useful to identify cases where symmetries
create a clear imperative for refreshment as discussed in \citep{BC_D_V_15}.
For each class of target distributions, we look at problems of dimensionality
$2^{1},2^{2},\dots,2^{7}$. 

We compare 8 algorithms, corresponding to $4$ different bounce operators
and $2$ refreshment strategies (either independent refreshment at
times determined by a unit rate homogeneous Poisson process, or no
refreshment). The bounce operator labeled Flip corresponds to $Q_{x}(v,\rd v')=\delta_{-v}(\rd v')$,
Det-Rand corresponds to the forward-event chain algorithm of \citep{M_S_17},
and Rand-Rand corresponds to the independent sampling algorithm of
\citep{F_B_P_R_16}. We recorded the Monte Carlo averages $\hat{f_{i}}$
of the test function $f(x)=x_{1}^{2}$ for the trajectory up to event
time index $i=2^{0},2^{1},\dots,2^{14}$ and computed the errors $e_{i}=|\hat{f}_{i}-1|$
. We then averaged the errors over $20$ independent executions of
the algorithms using different random seeds. All experiments in this
section are performed on a global (continuous-time) BPS algorithm.
Both simulation of collision times and computation of Monte Carlo
averaged are performed using closed form expressions that can be found
in \citep{BC_D_V_15}. 

\subsubsection{Results}

We show in Figure \ref{fig:random-bounce-results} the average error
as a function of the event index (log-log scale). %

\begin{figure}
\includegraphics[width=1\columnwidth]{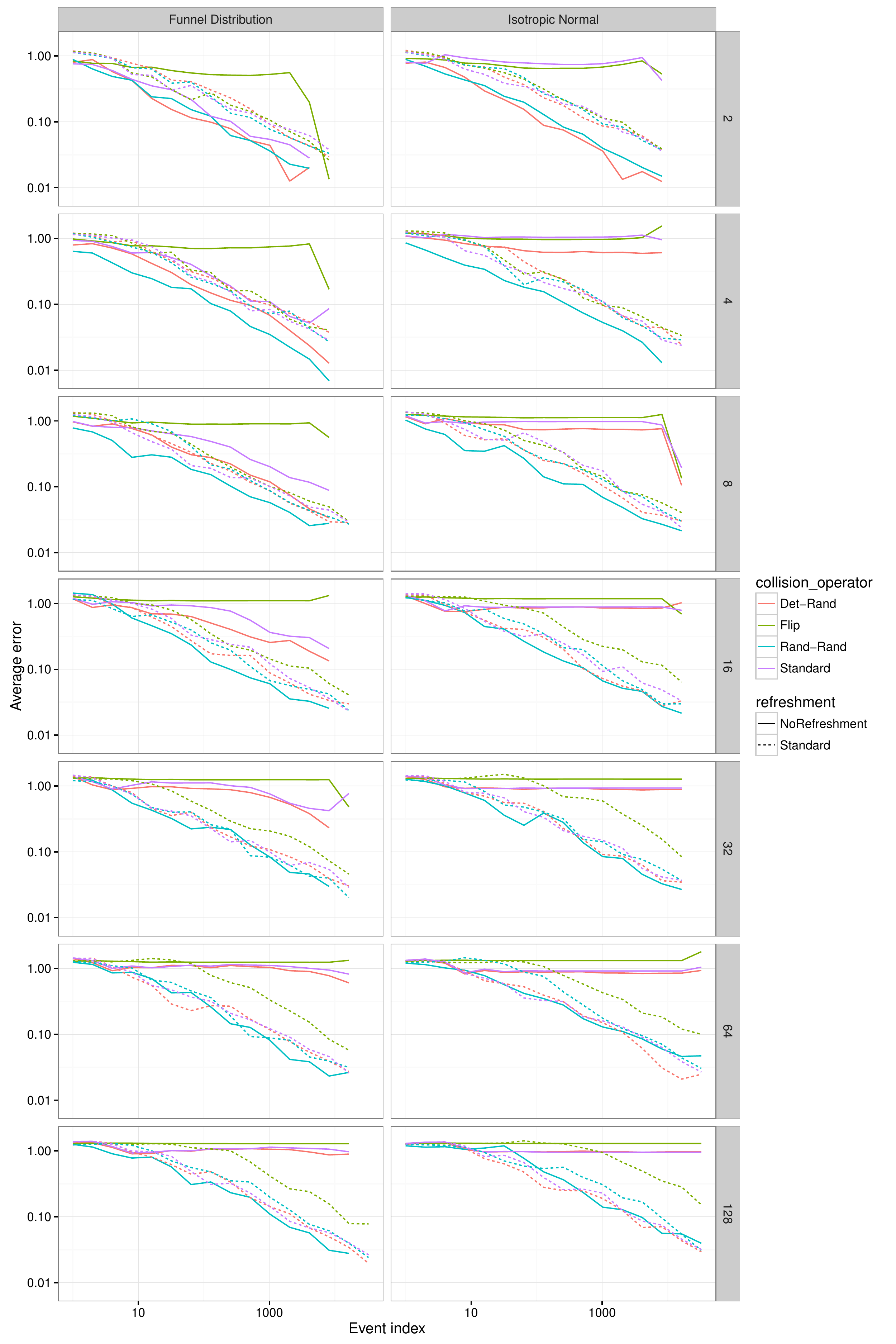}

\caption{\label{fig:random-bounce-results}Errors of Monte Carlo partial sums
averaged over 20 independent runs for different bounce operators,
models (columns) and dimensionalities (rows). }
\end{figure}

Our results show that in the low dimensional regime, at least two
randomized bounce operators (Det-Rand and Rand-Rand) combined with
no refreshment outperform the standard bounce with refreshment. However,
this advantage asymptotically vanishes as the dimensionality of the
problem increases. In fact, when refreshment is turned off, for all
the operators but Rand-Rand, performance dramatically collapses with
dimensionality. The performance drop-off is so pronounced that it
may not be detected by conventional estimators of effective sample
size. We can measure it here since the true value of the expectations
are known. 

We conjecture that this sharp drop in performance is due to a concentration
of measure phenomenon making the variance of the randomized operators
in the direction parallel to the gradient decrease with $d$, hence,
informally speaking, making certain randomized operators such as Det-Rand
more and more deterministic as $d$ increases. The lack of irreducibility
of deterministic bounce operators without refreshment is shown formally
in \citep{BC_D_V_15}. This conjecture is also supported by the fact
that reintroducing refreshment makes all methods behave similarly
in high-dimensional settings (except for the cruder Flip operator).

This is noteworthy as one of the motivations for previous work on
alternative bounce operators is that such operators may alleviate
the need for refreshment in certain scenarios%
. Our results provide a cautionary example that in certain high-dimensional
scenarios, it is still preferable to perform refreshment even when
randomized bounces are used. Interestingly, this happens not only
in the isotropic case but also in the non-isotropic, funnel distribution
case, where one might expect refreshment to play a more minor role
due to lack of symmetry.

\section{Discussion}

We have introduced a general framework which allows us to develop
novel continuous-time and discrete-time PD-MCMC algorithms addressing
some of the limitations of existing techniques. They allow to exploit
dynamics dependent on the target distribution. Moreover, contrary
to continuous-time algorithms, it is always possible to simulate exactly
the event times.

There are many possible methodological extensions of these algorithms.
To simplify presentation, we have presented our results for auxiliary
distributions of the form $\psi\left(v\right)=g(|v|)$ but, as in
the HMC context \citep{G_C_11}, it is possible to adapt these techniques
to the scenario where $\rho\left(z\right)=\pi\left(x\right)\psi_{x}\left(v\right)$
with $\psi_{x}\left(v\right)=g(|v^{T}M\left(x\right)v|^{1/2})$ for
$M\left(x\right)$ a positive definite matrix capturing the local
curvature of $U$ around $x$. From preliminary experiments, we observe
that using a position-dependent mass matrix $M(x)$ can provide significant
gains in complex scenarios. Even selecting simply a suitable constant
matrix $M$ can already improved substantially performance as already
demonstrated for the BPS \citep{P_dW_12,G_16,P_G_C_P_17}. Moreover,
the proposed framework is very flexible but all the algorithms proposed
so far in continuous-time are based on a divergence-free vector field
and in discrete-time on a deterministic mapping with unit Jacobian
determinant. There is conceptually no need to restrict ourselves to
such scenarios and it would be interesting to come up with useful
algorithms exploiting this degree of freedom. 

From a theoretical point of view, PD-MCMC techniques appear to provide
state-of-the-art performance on some interesting sampling problems
but there are only few theoretical results available \citep{B_R_16,D_BC_D_17,M_16}
and there is much work to be done to better understand their properties.

\appendix

\section{Proofs of invariance\label{sec:Proofofinvariance}}
\begin{proof}[Proof of Proposition~\ref{prop:Ginvariance}]
Using Assumption A\ref{hypA:Gassumptions}.3 then Assumption A\ref{hypA:Gassumptions}.1,
we obtain 
\begin{align*}
\iint\rho\left(\rd z\right)\lambda\left(z\right)Q\left(z,\rd z'\right)f\left(z'\right) & =\int\rho\left(\mathcal{S}^{-1}\left(\rd z'\right)\right)\lambda\left(\mathcal{S}\left(z'\right)\right)f\left(z'\right)\\
 & =\int\rho\left(\rd z'\right)\lambda\left(\mathcal{S}\left(z'\right)\right)f\left(z'\right).
\end{align*}
Hence, (\ref{eq:PDMPinvariant}) is equal to
\begin{align*}
 & \int\rho\left(\rd z\right)\left[\lambda\left(z\right)\int Q\left(z,\rd z'\right)\left[f\left(z'\right)-f\left(z\right)\right]-\left\{ \nabla\cdot\phi\left(z\right)-\left\langle \nabla H\left(z\right),\phi\left(z\right)\right\rangle \right\} f\left(z\right)\right]\\
= & \int\rho\left(\rd z\right)\left[\left\{ \lambda\left(\mathcal{S}\left(z\right)\right)-\lambda\left(z\right)\right\} -\left\{ \nabla\cdot\phi\left(z\right)-\left\langle \nabla H\left(z\right),\phi\left(z\right)\right\rangle \right\} \right]f\left(z\right)=0
\end{align*}
under Assumption A\ref{hypA:Gassumptions}.2. This establishes the
result.
\end{proof}
\begin{proof}[Proof of Proposition~\ref{prop:Linvariance}]
As $Q$ is given by (\ref{eq:Kernellocal}), we obtain 
\begin{align*}
\iint\rho\left(\rd z\right)\lambda\left(z\right)Q\left(z,\rd z'\right)f\left(z'\right) & =\iint\rho\left(\rd z\right)\lambda\left(z\right)\left\{ \sum_{i=1}^{n}\frac{\lambda_{i}\left(z\right)}{\lambda\left(z\right)}Q_{i}\left(z,\rd z'\right)\right\} f\left(z'\right)\\
 & =\sum_{i=1}^{n}\iint\rho\left(\rd z\right)\lambda_{i}\left(z\right)Q_{i}\left(z,\rd z'\right)f\left(z'\right)\\
 & =\sum_{i=1}^{n}\int\rho\left(\mathcal{S}^{-1}\left(\rd z'\right)\right)\lambda_{i}\left(\mathcal{S}\left(z'\right)\right)f\left(z'\right)\\
 & =\sum_{i=1}^{n}\int\rho\left(\rd z'\right)\lambda_{i}\left(\mathcal{S}\left(z'\right)\right)f\left(z'\right)
\end{align*}
where we have used Assumptions A\ref{hypA:Lassumptions}.3 and A\ref{hypA:Lassumptions}.1.
Hence, (\ref{eq:PDMPinvariant}) is equal to
\begin{align*}
 & \int\rho\left(\rd z\right)\left[\lambda\left(z\right)\int Q\left(z,\rd z'\right)\left[f\left(z'\right)-f\left(z\right)\right]-\left\{ \nabla\cdot\phi\left(z\right)-\left\langle \nabla H\left(z\right),\phi\left(z\right)\right\rangle \right\} f\left(z\right)\right]\\
= & \int\rho\left(\rd z\right)\left[\sum_{i=1}^{n}\left\{ \lambda_{i}\left(\mathcal{S}\left(z\right)\right)-\lambda_{i}\left(z\right)\right\} -\left\{ \nabla\cdot\phi\left(z\right)-\left\langle \nabla H\left(z\right),\phi\left(z\right)\right\rangle \right\} \right]f\left(z\right)=0
\end{align*}
under Assumption A\ref{hypA:Lassumptions}.2. The result follows.
\end{proof}
\begin{proof}[Proof of Proposition~\ref{prop:Sinvariance}]
The proof is similar to the proof of Proposition \ref{prop:Linvariance}
and is therefore omitted.
\end{proof}

\begin{proof}[Proof of Proposition~\ref{prop:GMHinvariance}]
We have 
\begin{equation}
\nu\left(\rd z\right)T\left(z,\rd z'\right)=\nu\left(\rd z\right)M\left(z,\rd z'\right)\beta\left(z,z'\right)+\nu\left(\rd z\right)\delta_{\mathcal{S}\left(z\right)}\left(\rd z'\right)\gamma\left(z\right)\label{eq:GMHexpression}
\end{equation}
where 
\[
\gamma\left(z\right)=1-\int\beta\left(z,z'\right)M\left(z,\rd z'\right).
\]
First notice that, if using Assumption A\ref{hypA:GMHassumptions}.2,
we define 
\[
r(z,z'):=\frac{\nu\left(\mathcal{S}\left(\rd z'\right)\right)M\left(\mathcal{S}\left(z'\right),\mathcal{S}\left(\rd z\right)\right)}{\nu\left(\rd z\right)M\left(z,\rd z'\right)},
\]
then using the properties of the push-forward measure and Assumption
A\ref{hypA:GMHassumptions}.1, we have for any measurable function
$h$

\begin{align*}
 & \iint h(z,z')r\left(\mathcal{S}(z'),\mathcal{S}(z)\right)\nu\left(\mathcal{S}\left(\rd z'\right)\right)M\left(\mathcal{S}\left(z'\right),\mathcal{S}\left(\rd z\right)\right)\\
 & =\int\nu\left(\mathcal{S}\left(\rd z'\right)\right)\int h(z,z')r\left(\mathcal{S}(z'),\mathcal{S}(z)\right)M\left(\mathcal{S}\left(z'\right),\mathcal{S}\left(\rd z\right)\right)\\
 & =\int\nu\left(\mathcal{S}\left(\rd z'\right)\right)\int h\left(\mathcal{S}\left(z\right),z'\right)r\left(\mathcal{S}(z'),z\right)M\left(\mathcal{S}\left(z'\right),\rd z\right)\\
 & =\iint\nu\left(\mathcal{S}\left(\rd z'\right)\right)M\left(\mathcal{S}\left(z'\right),\rd z\right)r\left(\mathcal{S}(z'),z\right)h\left(\mathcal{S}\left(z\right),z'\right)\\
 & =\iint\nu\left(\rd z'\right)M\left(z',\rd z\right)r\left(z',z\right)h\left(\mathcal{S}\left(z\right),\mathcal{S}\left(z'\right)\right)\\
 & =\iint\nu\left(\mathcal{S}\left(\rd z'\right)\right)M\left(\mathcal{S}\left(z'\right),\mathcal{S}\left(\rd z\right)\right)h\left(\mathcal{S}\left(z\right),\mathcal{S}\left(z'\right)\right)\\
 & =\iint\nu\left(\rd z'\right)M\left(z',\rd z\right)h\left(z,z'\right).
\end{align*}
This establishes that the measure $\nu\left(\rd z\right)M\left(z,\rd z'\right)$
is absolutely continuous w.r.t.\ $\nu\left(\mathcal{S}\left(\rd z'\right)\right)M\left(\mathcal{S}\left(z'\right),\mathcal{S}\left(\rd z\right)\right)$
with a Radon-Nikodym derivative given by 
\[
r\left(\mathcal{S}(z'),\mathcal{S}(z)\right)=\frac{\nu\left(\rd z\right)M\left(z,\rd z'\right)}{\nu\left(\mathcal{S}\left(\rd z'\right)\right)M\left(\mathcal{S}\left(z'\right),\mathcal{S}\left(\rd z\right)\right)}.
\]

For the first term on the r.h.s.\ of (\ref{eq:GMHexpression}), we
have 
\begin{align}
\nu\left(\rd z\right)M\left(z,\rd z'\right)\beta\left(z,z'\right)= & \nu\left(\rd z\right)M\left(z,\rd z'\right)g\left(\frac{\nu\left(\mathcal{S}\left(\rd z'\right)\right)M\left(\mathcal{S}\left(z'\right),\mathcal{S}\left(\rd z\right)\right)}{\nu\left(\rd z\right)M\left(z,\rd z'\right)}\right)\nonumber \\
= & \nu\left(\mathcal{S}\left(\rd z'\right)\right)M\left(\mathcal{S}\left(z'\right),\mathcal{S}\left(\rd z\right)\right)g\left(\frac{\nu\left(\rd z\right)M\left(z,\rd z'\right)}{\nu\left(\mathcal{S}\left(\rd z'\right)\right)M\left(\mathcal{S}\left(z'\right),\mathcal{S}\left(\rd z\right)\right)}\right)\nonumber \\
= & \nu\left(\mathcal{S}\left(\rd z'\right)\right)M\left(\mathcal{S}\left(z'\right),\mathcal{S}\left(\rd z\right)\right)\beta\left(\mathcal{S}\left(z'\right),\mathcal{S}\left(z\right)\right)\label{eq:term1GMH}
\end{align}
where we have used Assumption A\ref{hypA:GMHassumptions}.3 then Assumption
A\ref{hypA:GMHassumptions}.1. 

The second term on the r.h.s.\ of (\ref{eq:GMHexpression}) satisfies
\begin{align}
\nu\left(\rd z\right)\delta_{\mathcal{S}\left(z\right)}\left(\rd z'\right)\gamma\left(z\right)= & \nu\left(\mathcal{S}^{-1}\left(\rd z'\right)\right)\delta_{\mathcal{S}^{-1}\left(z'\right)}\left(\rd z\right)\gamma\left(\mathcal{S}^{-1}\left(z'\right)\right)\nonumber \\
= & \nu\left(\mathcal{S}\left(\rd z'\right)\right)\delta_{\mathcal{S}\left(z'\right)}\left(\rd z\right)\gamma\left(\mathcal{S}\left(z'\right)\right)\label{eq:term2GMH}
\end{align}
using Assumption A\ref{hypA:GMHassumptions}.1. The sum of the terms
(\ref{eq:term1GMH}) and (\ref{eq:term2GMH}) is equal to $\nu\left(\mathcal{S}\left(\rd z'\right)\right)T\left(\mathcal{S}\left(z'\right),\mathcal{S}\left(\rd z\right)\right)$.
Hence the GMH kernel satisfies the skewed detailed balance condition
(\ref{eq:skeweddetailedbalancecondition}).
\end{proof}
\begin{proof}[Proof of Proposition~\ref{prop:Ginvariance-DT}]
The proof follows from simple manipulations. We have from Assumption
A\ref{hypA:Gassumptions-DT}.3 then Assumption A\ref{hypA:Gassumptions-DT}.1
that the l.h.s.\ of (\ref{eq:invariancekerneldiscretetime}) satisfies
\begin{align*}
 & \rho\left(\Phi^{-1}\left(z'\right)\right)\alpha\left(\Phi^{-1}\left(z'\right)\right)\left|\nabla\Phi^{-1}\left(z'\right)\right|\rd z'+\int\rho\left(\rd z\right)\left(1-\alpha\left(z\right)\right)Q\left(z,\rd z'\right)\\
= & \rho\left(\Phi^{-1}\left(z'\right)\right)\alpha\left(\Phi^{-1}\left(z'\right)\right)\left|\nabla\Phi^{-1}\left(z'\right)\right|\rd z'+\rho\left(\mathcal{S}^{-1}\left(\rd z'\right)\right)\left(1-\alpha\left(\mathcal{S}\left(z'\right)\right)\right)\\
= & \rho\left(\Phi^{-1}\left(z'\right)\right)\alpha\left(\Phi^{-1}\left(z'\right)\right)\left|\nabla\Phi^{-1}\left(z'\right)\right|\rd z'+\rho\left(\rd z'\right)\left(1-\alpha\left(\mathcal{S}\left(z'\right)\right)\right).
\end{align*}
Hence the condition (\ref{eq:invariancekerneldiscretetime}) is satisfied
if for all $z'$
\begin{align*}
 & \rho\left(\Phi^{-1}\left(z'\right)\right)\alpha\left(\Phi^{-1}\left(z'\right)\right)\left|\nabla\Phi^{-1}\left(z'\right)\right|-\rho\left(z'\right)\alpha\left(\mathcal{S}\left(z'\right)\right)=0.
\end{align*}
By rewriting this expression for $z'=\Phi\left(z\right)$, and using
the fact that $\left|\nabla\Phi^{-1}\left(z'\right)\right|=\left|\nabla\Phi\left(\Phi^{-1}\left(z'\right)\right)\right|^{-1}$
so $\left|\nabla\Phi^{-1}\left(z'\right)\right|=\left|\nabla\Phi\left(z\right)\right|^{-1}$,
we obtain Assumption A\ref{hypA:Gassumptions-DT}.2.
\end{proof}
\begin{proof}[Proof of Proposition~\ref{prop:Linvariance-DT}]
The proof follows from simple manipulations. We consider first the
second term on the r.h.s.\ of (\ref{eq:invariancekerneldiscretetime}).
This satisfies
\begin{align*}
\int\rho\left(\rd z\right)\left\{ 1-\alpha\left(z\right)\right\} Q\left(z,\rd z'\right) & =\int\rho\left(\rd z\right)\left\{ 1-\alpha\left(z\right)\right\} \sum_{B\in\mathcal{B}}\frac{\prod_{i:B_{i}=0}\alpha_{i}\left(z\right)\prod_{i:B_{i}=1}\left(1-\alpha_{i}\left(z\right)\right)}{1-\alpha\left(z\right)}Q_{B}\left(z,\rd z'\right)\\
 & =\sum_{B\in\mathcal{B}}\int\rho\left(\rd z\right)\prod_{i:B_{i}=0}\alpha_{i}\left(z\right)\prod_{i:B_{i}=1}\left(1-\alpha_{i}\left(z\right)\right)Q_{B}\left(z,\rd z'\right)\\
 & =\sum_{B\in\mathcal{B}}\rho\left(\mathcal{S}^{-1}\left(\rd z'\right)\right)\prod_{i:B_{i}=0}\alpha_{i}\left(\mathcal{S}\left(z'\right)\right)\prod_{i:B_{i}=1}\left(1-\alpha_{i}\left(\mathcal{S}\left(z'\right)\right)\right)\\
 & =\rho\left(\mathcal{S}^{-1}\left(\rd z'\right)\right)\sum_{B\in\mathcal{B}}\prod_{i:B_{i}=0}\alpha_{i}\left(\mathcal{S}\left(z'\right)\right)\prod_{i:B_{i}=1}\left(1-\alpha_{i}\left(\mathcal{S}\left(z'\right)\right)\right)\\
 & =\rho\left(\mathcal{S}^{-1}\left(\rd z'\right)\right)\left(1-\prod_{i=1}^{n}\alpha_{i}\left(\mathcal{S}\left(z'\right)\right)\right)\\
 & =\rho\left(\rd z'\right)\left(1-\alpha\left(\mathcal{S}\left(z'\right)\right)\right),
\end{align*}
where we have used Assumption A\ref{hypA:Lassumptions-DT}.3 then
Assumption A\ref{hypA:Lassumptions-DT}.1. The first term on the l.h.s.\ of
(\ref{eq:invariancekerneldiscretetime}) is given by 
\[
\rho\left(\Phi^{-1}\left(z'\right)\right)\alpha\left(\Phi^{-1}\left(z'\right)\right)\left|\nabla\Phi^{-1}\left(z'\right)\right|=\rho\left(\Phi^{-1}\left(z'\right)\right)\left|\nabla\Phi^{-1}\left(z'\right)\right|\prod_{i=1}^{n}\alpha_{i}\left(\Phi^{-1}\left(z'\right)\right).
\]
Hence the condition (\ref{eq:invariancekerneldiscretetime}) is satisfied
if for all $z'$
\begin{align*}
\rho\left(\Phi^{-1}\left(z'\right)\right)\left|\nabla\Phi^{-1}\left(z'\right)\right|\prod_{i=1}^{n}\alpha_{i}\left(\Phi^{-1}\left(z'\right)\right)-\rho\left(z'\right)\prod_{i=1}^{n}\alpha_{i}\left(\mathcal{S}\left(z'\right)\right)=0.
\end{align*}
By rewriting this expression for $z'=\Phi\left(z\right)$, we obtain
Assumption A\ref{hypA:Lassumptions-DT}.2.\ which is also implied
by (\ref{eq:LocalLocalRateBalance-DT}) if $\left|\nabla\Phi\left(z\right)\right|=1$
for all $z$.
\end{proof}
\begin{proof}[Proof of Proposition~\ref{prop:Sinvariance-DT}]
The proof is very similar to the proof of Proposition \ref{prop:Linvariance-DT}.
We similarly consider the second term on the r.h.s. of (\ref{eq:invariancekerneldiscretetime})
which satisfies
\begin{align*}
\int\rho\left(\rd z\right)\left(1-\alpha\left(z\right)\right)Q\left(z,\rd z'\right) & =\int\rho\left(\rd z\right)\left(1-\alpha\left(z\right)\right)\int_{\mathcal{P}}\mathbb{Q}_{|P|\geq1}\left(\rd P|z\right)Q_{P}\left(z,\rd z'\right)\\
 & =\int_{\mathcal{P}}\int\rho\left(\rd z\right)\left(1-\alpha\left(z\right)\right)\mathbb{Q}_{|P|\geq1}\left(\rd P|z\right)Q_{P}\left(z,\rd z'\right)\\
 & =\int_{\mathcal{P}}\rho\left(\mathcal{S}^{-1}\left(\rd z'\right)\right)\left(1-\alpha\left(\mathcal{S}\left(z'\right)\right)\right)\mathbb{Q}_{|P|\geq1}\left(\rd P|\mathcal{S}\left(z'\right)\right)\\
 & =\rho\left(\mathcal{S}^{-1}\left(\rd z'\right)\right)\left(1-\alpha\left(\mathcal{S}\left(z'\right)\right)\right)\\
 & =\rho\left(\rd z'\right)\left(1-\alpha\left(\mathcal{S}\left(z'\right)\right)\right)
\end{align*}
where we have used Assumption A\ref{hypA:Sassumptions-DT}.3 then
Assumption A\ref{hypA:Sassumptions-DT}.1. The first term on the l.h.s.\ of
(\ref{eq:invariancekerneldiscretetime}) is given by 
\[
\rho\left(\Phi^{-1}\left(z'\right)\right)\alpha\left(\Phi^{-1}\left(z'\right)\right)\left|\nabla\Phi^{-1}\left(z'\right)\right|=\rho\left(\Phi^{-1}\left(z'\right)\right)\left|\nabla\Phi^{-1}\left(z'\right)\right|\exp\left\{ \int\log\alpha_{\omega}\left(\Phi^{-1}\left(z'\right)\right)\mu\left(\rd\omega\right)\right\} 
\]
Hence the condition (\ref{eq:invariancekerneldiscretetime}) is satisfied
if for all $z'$
\begin{align*}
\rho\left(\Phi^{-1}\left(z'\right)\right)\left|\nabla\Phi^{-1}\left(z'\right)\right|\exp\left\{ \int\log\alpha_{\omega}\left(\Phi^{-1}\left(z'\right)\right)\mu\left(\rd\omega\right)\right\} -\rho\left(z'\right)\exp\left\{ \int\log\alpha_{\omega}\left(\mathcal{S}\left(z'\right)\right)\mu\left(\rd\omega\right)\right\} =0
\end{align*}
By rewriting this expression for $z'=\Phi\left(z\right)$, we obtain
Assumption A\ref{hypA:Sassumptions-DT}.2.which is also implied by
(\ref{eq:LocalLocalRateBalance-SDT}) if $\left|\nabla\Phi\left(z\right)\right|=1$
for all $z$.
\end{proof}

\section{Weak convergence of discrete-time BPS\label{sec:Weak-convergence-of}}

\subsection{Main result}

Let $\rho(x,v)$ denote the target density on $\mathcal{Z}:=\mathbb{R}^{d}\times\mathbb{R}^{d}$,
where 
\[
\rho(x,v)=\pi(x)\psi(v)\propto\mathrm{e}^{-U(x)}\psi(v).
\]
We will establish our weak convergence results for $\psi$ an isotropic
distribution on $B_{R}(\mathbb{R}^{d})$ for some $R>0$, $B_{R}(\mathbb{R}^{d})$
being the Euclidean ball of radius $R$. This includes the uniform
distribution on $\mathbb{S}^{d-1}$. Let us define
\[
D_{\mathcal{Z}}[0,\infty):=\left\{ \left.f:[0,\infty)\mapsto\mathcal{Z}\right|\text{\ensuremath{f} is right continuous with left limits}\right\} .
\]

Let also $C_{0}(\mathcal{Z})$ be the space of continuous functions
$f:\mathcal{Z}\mapsto\mathbb{R}$ such that $f(x)\to0$ as $|x|\to\infty$,
in the sense that for all $\epsilon>0$ the set $\left\{ z\in\mathcal{Z}:\left|f(z)\right|\geq\epsilon\right\} $
is compact. Finally let $\left\{ P^{t}:t\geq0\right\} $ denote the
semigroup of transition kernels of BPS\textsf{ }and write 
\[
\mathcal{L}f(x,v)=\left.\frac{\mathrm{d}}{\mathrm{d}t}f(x+tv,v)\right|_{t=0}+\lambda(x,v)\left[f(x,R(x)v)-f(x,v)\right],
\]
for the infinitesimal generator of BPS where the domain will be discussed
later on.

For any $\epsilon>0$, we write $K^{(\epsilon)}$ for the transition
kernel of the discrete-time BPS, DPBS, with step size $\epsilon>0$.
This kernel satisfies 
\begin{align}
K^{(\epsilon)}\left(\left(x,v\right),\mathrm{d}\left(x^{\prime},v^{\prime}\right)\right) & =\min\left(1,\frac{\pi\left(x+v\epsilon\right)}{\pi\left(x\right)}\right)\delta_{\left(x+v\epsilon,v\right)}\left(\mathrm{d}\left(x^{\prime},v^{\prime}\right)\right)\nonumber \\
 & +\max\left(0,1-\max\left(\frac{\pi\left(x+v\epsilon\right)}{\pi\left(x\right)},\frac{\pi\left(x-R\left(x\right)v\epsilon\right)}{\pi\left(x\right)}\right)\right)\delta_{\left(x,R\left(x\right)v\right)}\left(\mathrm{d}(x^{\prime},v^{\prime})\right)\label{eq:kerneldef}\\
 & +\max\left(0,\min\left(1,\frac{\pi\left(x-R\left(x\right)v\epsilon\right)}{\pi\left(x\right)}\right)-\frac{\pi\left(x+v\epsilon\right)}{\pi\left(x\right)}\right)\delta_{\left(x,-v\right)}\left(\mathrm{d}\left(x^{\prime},v^{\prime}\right)\right).\nonumber 
\end{align}

To keep notation reasonably compact we will often write 
\begin{align*}
K^{(\epsilon)}f(x,v) & =:p_{1}^{(\epsilon)}f(x+v\epsilon,v)+p_{2}^{(\epsilon)}f(x,R(x)v)+[1-p_{1}^{(\epsilon)}(x,v)-p_{2}^{(\epsilon)}(x,v)]f(x,-v),
\end{align*}
with $p^{(i)}(x,v)$ for $i=1,2,3$ the probabilities appearing in
\eqref{eq:kerneldef}.

We will write $\{Z^{(\epsilon)}(k);k\geq0\}$ for the Markov chain
generated by the transition kernel $K^{(\epsilon)}$, with $Z^{(\epsilon)}(0)\sim\rho$.
We also define the càdlàg process $\{\zeta_{t}^{(\epsilon)}:t\in[0,\infty)\}$,
through 
\[
\zeta^{(\epsilon)}(s)=Z_{k}^{(\epsilon)},\qquad s\in\big[k\epsilon,(k+1)\epsilon\big).
\]

We will make the following assumptions. 
\begin{assumption}
\label{assumption:c2plus}The potential function $U:\mathbb{R}^{d}\to[0,\infty)$
is twice continuously differentiable, with absolutely continuous second
derivatives and 
\begin{equation}
\int\mathrm{\exp}\left(-U(x)\right)\left|\nabla U(x)\right|^{2}\mathrm{d}x<\infty.\label{assumption:gradsecondmoment}
\end{equation}
\end{assumption}

\begin{assumption}
\label{assumption:continuousrate}For any $z=(x,v)\in\mathcal{Z}$
the function $t\mapsto\lambda(x+tv,v)$ is continuous.
\end{assumption}

\begin{assumption}
\label{assumption:piboundedder}The probability density function $\pi:\mathbb{R}^{d}\to[0,\infty)$
has bounded, integrable derivatives up to order two and in addition
for some $\varepsilon>0$ we have
\begin{equation}
\int_{\mathbb{R}^{d}}\sup_{\left|y-x\right|<\varepsilon}\left\Vert \Delta\pi(y)\right\Vert \mathrm{d}y<\infty,\label{eq:integrablesuphessian}
\end{equation}
where $\|\Delta f\|$ denotes the operator norm of the Hessian matrix
of $\pi$. 
\end{assumption}

\begin{assumption}
\label{assumption:holder}There exists some $M:\mathbb{R}^{d}\to[0,\infty)$,
such that 
\[
\int\pi(\mathrm{d}x)M(x)<\infty,
\]
and for some $\varepsilon,\delta>0$ we have for all $|v|=1$ and
$r<\varepsilon$ 
\begin{equation}
\left|\nabla U(x+rv)-\nabla U(x)\right|\leq M(x)|r|^{\delta}.\label{ass:holder}
\end{equation}
\end{assumption}

We have the following result. 
\begin{thm}
\label{thm:weakconv} Let $(\epsilon_{n};n\geq1)$ be a positive sequence
such that \textup{$\epsilon_{n}\to0$ as $n\rightarrow\infty$.} Under
Assumptions \ref{assumption:c2plus}, \ref{assumption:continuousrate},
\ref{assumption:piboundedder} and \ref{assumption:holder} the law
of $\{\zeta^{\left(\epsilon_{n}\right)}(\cdot)\}$ converges weakly
to that of BPS as probability measures on $D_{\mathcal{Z}}[0,\infty)$\textup{\emph{
as}}\textup{ $n\rightarrow\infty$.} 
\end{thm}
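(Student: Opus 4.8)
The plan is to use the martingale-problem and semigroup-convergence machinery of \citep[Chapters 3--4]{EK_05}. Set $A^{(\epsilon)}:=\epsilon^{-1}(K^{(\epsilon)}-\mathrm{I})$, the generator of the jump chain $\{Z^{(\epsilon)}(k)\}$ run at rate $\epsilon^{-1}$. Since $Z^{(\epsilon)}(0)\sim\rho$ and $K^{(\epsilon)}$ is $\rho$-invariant (Proposition \ref{prop:Ginvariance-DT}, applied as in Section \ref{subsec:Sufficient-conditions-for-invariance-discretetime}), we have $\zeta^{(\epsilon)}(s)\sim\rho$ for all $s\ge0$; this stationarity drives the tightness argument and makes the initial laws trivially convergent. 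The three ingredients are then: (i) $A^{(\epsilon)}f\to\mathcal{L}f$ uniformly for $f$ in a core $D$ of $\mathcal{L}$; (ii) relative compactness of $\{\zeta^{(\epsilon_n)}(\cdot)\}$ in $D_{\mathcal{Z}}[0,\infty)$; (iii) well-posedness of the martingale problem for $\mathcal{L}$, so that every subsequential limit is BPS.

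For (i), take $D=C_c^2(\mathcal{Z})$; since $|v|\le R$ under $\psi$ and the bounce preserves the $x$-component, for $f\in D$ both $A^{(\epsilon)}f$ (restricted to $\mathbb{R}^d\times B_R$) and $\mathcal{L}f$ vanish outside a fixed compact set $C$, on which $\mathcal{L}f$ is bounded and continuous by continuity of $\nabla U$. Writing $\pi(x+v\epsilon)/\pi(x)=\exp(-[U(x+v\epsilon)-U(x)])$ and using the identity $\langle\nabla U(x),R(x)v\rangle=-\langle\nabla U(x),v\rangle$, a second-order Taylor expansion of $U$ (licensed by Assumption \ref{assumption:c2plus}, with remainders bounded via Assumptions \ref{assumption:piboundedder} and \ref{assumption:holder}) gives, uniformly on $C$,
\[
1-p_1^{(\epsilon)}(x,v)=\epsilon\,\lambda(x,v)+o(\epsilon),\qquad p_2^{(\epsilon)}(x,v)=\epsilon\,\lambda(x,v)+o(\epsilon),\qquad p_3^{(\epsilon)}(x,v)=o(\epsilon).
\]
Substituting into $A^{(\epsilon)}f(x,v)=\epsilon^{-1}p_1^{(\epsilon)}[f(x+v\epsilon,v)-f(x,v)]+\epsilon^{-1}p_2^{(\epsilon)}[f(x,R(x)v)-f(x,v)]+\epsilon^{-1}p_3^{(\epsilon)}[f(x,-v)-f(x,v)]$ and invoking Assumption \ref{assumption:continuousrate} to identify the jump term along the flow yields $A^{(\epsilon)}f\to\mathcal{L}f$ uniformly.

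For (ii), compact containment is immediate from stationarity: $\sup_{\epsilon}\sup_{s}\mathbb{P}(|\zeta^{(\epsilon)}(s)|>L)=\rho(\{|z|>L\})\to0$ as $L\to\infty$. The oscillation (Aldous--Rebolledo) bound follows from the Dynkin martingales $f(\zeta^{(\epsilon)}(t))-f(\zeta^{(\epsilon)}(0))-\int_0^tA^{(\epsilon)}f(\zeta^{(\epsilon)}(s))\,\rd s$, $f\in D$, together with control of the number of events of $\{Z^{(\epsilon)}(k)\}_{k\le t/\epsilon}$, whose mean $\sum_{k<t/\epsilon}\mathbb{E}[1-p_1^{(\epsilon)}(Z^{(\epsilon)}(k))]$ converges to $t\,\mathbb{E}_\rho[\lambda]<\infty$, finite because $\mathbb{E}_\rho[\lambda]\le R\,\mathbb{E}_\rho[|\nabla U|]\le R(\mathbb{E}_\rho[|\nabla U|^2])^{1/2}<\infty$ by \eqref{assumption:gradsecondmoment} (the same estimate also shows BPS is non-explosive). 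Combined with the bounded drift speed $|v|\le R$ and jump sizes bounded by $2R$, this gives relative compactness in $D_{\mathcal{Z}}[0,\infty)$ via the standard criteria of \citep[Chapter 3]{EK_05}.

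For (iii), along any subsequence with $\zeta^{(\epsilon_n)}(\cdot)\Rightarrow\zeta(\cdot)$, the uniform convergence $A^{(\epsilon)}f\to\mathcal{L}f$ and the boundedness of $\mathcal{L}f$ on $C$ give that $f(\zeta(t))-f(\zeta(0))-\int_0^t\mathcal{L}f(\zeta(s))\,\rd s$ is a martingale for every $f\in D$, so $\zeta$ solves the martingale problem for $\mathcal{L}$; its initial law is $\rho$ since $\zeta^{(\epsilon)}(0)\sim\rho$ for all $\epsilon$. Under Assumptions \ref{assumption:c2plus}--\ref{assumption:holder} this martingale problem is well posed, its unique solution being the BPS constructed and studied in \citep{BC_D_V_15,D_BC_D_17}; hence all subsequential limits have the law of BPS and, together with (ii), $\zeta^{(\epsilon_n)}(\cdot)\Rightarrow\mathrm{BPS}$ in $D_{\mathcal{Z}}[0,\infty)$, as claimed. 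The main obstacle is the uniform estimate $p_3^{(\epsilon)}=o(\epsilon)$: the first-order terms of the forward ratio $\pi(x+v\epsilon)/\pi(x)$ and the bounced ratio $\pi(x-R(x)v\epsilon)/\pi(x)$ coincide, so $p_3^{(\epsilon)}$, a positive part of their difference, is controlled only at second order, which is precisely why the bounded-second-derivative hypothesis \eqref{eq:integrablesuphessian} and the H\"older bound \eqref{ass:holder} are needed; carrying out these remainder estimates is the technically heaviest part of the argument.
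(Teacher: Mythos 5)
Your overall architecture (Ethier--Kurtz martingale-problem machinery: generator convergence on a nice class of test functions, relative compactness, and uniqueness of the limiting martingale problem) is the same as the paper's, and your expansions $1-p_1^{(\epsilon)}=\epsilon\lambda+o(\epsilon)$, $p_2^{(\epsilon)}=\epsilon\lambda+o(\epsilon)$, $p_3^{(\epsilon)}=o(\epsilon)$ agree with the paper's preliminary calculations. The main stylistic difference is that you claim \emph{uniform} convergence $A^{(\epsilon)}f\to\mathcal{L}f$ for compactly supported $f$, whereas the paper verifies the $L^{1}(\rho)$-type conditions \eqref{eq:833}--\eqref{eq:834} of \citep[Corollary~8.15, Chapter~4]{EK_05}; this also means you have slightly misattributed the role of the integrability hypotheses: on a compact set, $U\in C^{2}$ alone already gives $p_3^{(\epsilon)}=O(\epsilon^{2})$ uniformly, and the \emph{integrated} bounds \eqref{assumption:gradsecondmoment}, \eqref{eq:integrablesuphessian} and \eqref{ass:holder} are really needed to control expectations of $\sup_{k\le T/\epsilon_n}p_i^{(\epsilon_n)}$ over the whole (non-compact) state space under stationarity, which is where the paper actually uses them.

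There are, however, two genuine gaps. First, your compact containment argument is wrong as written: the displayed bound $\sup_{\epsilon}\sup_{s}\mathbb{P}(|\zeta^{(\epsilon)}(s)|>L)=\rho(\{|z|>L\})$ controls only the one-time marginals, while condition \eqref{cond:relcompact} requires the path to stay in a compact set simultaneously for \emph{all} $t\in[0,T]$; a pointwise-in-$s$ bound cannot be unioned over the $\lceil T/\epsilon_n\rceil\to\infty$ time steps. The correct (and easy) argument is the one the paper uses: the deterministic speed limit $|X_k|\le|X_0|+k\epsilon_n|V_0|$ together with $|V_0|\le R$ reduces containment of the whole path to containment of the initial condition. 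Second, you assert well-posedness of the martingale problem for $(\mathcal{L},\rho)$ by citing \citep{BC_D_V_15,D_BC_D_17}, but those references construct the process and study its ergodicity; they do not establish uniqueness of the martingale problem for the generator restricted to $C_c^{2}$ or $C_c^{\infty}$. This is precisely the content of the paper's Lemma~\ref{lem:feller}, which proves the Feller property of the BPS semigroup (a nontrivial argument via the iterated kernel $G^{n}g$ of \citep{D_93} and locally bounded event rates) and then invokes Hille--Yosida and \citep[Corollary~4.4.4]{EK_05} to get uniqueness. Without this step your item (iii) is unsupported, and it is the technically heaviest part of the actual proof rather than the $p_3^{(\epsilon)}$ estimate you single out.
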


Before we embark on the proof of Theorem~\ref{thm:weakconv} we prove
some useful properties for the semigroup and the generator. 

\subsection{The Feller property}

Recall that a Markov process taking values in $\mathcal{Z}\subseteq\mathbb{R}^{n}$,
with transition semigroup $\{P^{t}:t\geq0\}$, is called a Feller
process if 
\begin{description}
\item [{(F1)}] for all $t\geq0$ and $f\in C_{0}(\mathcal{Z})$ we have
$P^{t}f\in C_{0}(\mathcal{Z})$, and 
\item [{(F2)}] $P^{t}f(z)\to f(z)$ as $t\to0$ for $f\in C_{0}(\mathcal{Z})$
and $z\in\mathcal{Z}$. 
\end{description}
\begin{lem}
\label{lem:feller} Let Assumptions \ref{assumption:c2plus} and \ref{assumption:continuousrate}
hold. Then $\left\{ P^{t};t\geq0\right\} $ is a Feller semigroup
and the martingale problem for $(\mathcal{L},\rho)$ admits a unique
solution. 
\end{lem}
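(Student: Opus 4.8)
The plan is to construct BPS pathwise, verify the two Feller conditions (F1) and (F2) directly, and then deduce well-posedness of the martingale problem from the Feller property via a core argument.

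Two structural facts drive everything. Since $\psi$ is supported on the ball $B_R$ and each reflection $R(x)$ is a Euclidean isometry, every BPS trajectory satisfies $|v_t|\equiv|v_0|\le R$ and $|x_t-x_0|\le Rt$; hence up to a fixed time $t$ the process never leaves the compact set $C_{z,t}:=\{(x,v):|x-x_0|\le Rt,\ |v|\le R\}$, on which $|\nabla U|$ is bounded by continuity (Assumption~\ref{assumption:c2plus}), and so is the rate $\lambda(x,v)=\langle\nabla U(x),v\rangle_+$. Consequently the number of events $N_t$ on $[0,t]$ is stochastically dominated by a Poisson variable; BPS is therefore non-explosive, and
\[
P^tf(z)=\sum_{k\ge0}\mathbb{E}_z\bigl[f(Z_t)\mathbf 1\{N_t=k\}\bigr],
\]
each summand being a finite iterated integral, over ordered event times $0<T_1<\dots<T_k<t$, of $f$ evaluated at an explicit interleaving of straight-line flows and reflections. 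Moreover $U\in C^2$ makes $\nabla U$ locally Lipschitz, so the reflection $R(x)v=v-2\langle\nabla U(x),v\rangle\,|\nabla U(x)|^{-2}\nabla U(x)$ is jointly continuous in $(x,v)$ wherever $\nabla U(x)\neq0$, while on $\{\nabla U=0\}$ the rate vanishes and no event occurs.

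Condition (F2) then reduces to $P^tf(z)=f(\Phi_t z)\,\mathbb{P}_z(N_t=0)+\mathbb{E}_z[f(Z_t)\mathbf 1\{N_t\ge1\}]\to f(z)$ as $t\downarrow0$, which follows from $\mathbb{P}_z(N_t\ge1)\to0$, continuity of $t\mapsto\Phi_t z$, and $\|f\|_\infty<\infty$. For (F1), decay at infinity is immediate: given $f\in C_0(\mathcal Z)$ and $\varepsilon>0$, pick compact $K$ with $|f|<\varepsilon$ off $K$; once $|x_0|$ is so large that $C_{z,t}\cap K=\emptyset$ one has $|P^tf(z)|<\varepsilon$. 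Continuity of $z\mapsto P^tf(z)$ I would prove term by term in the series: the tail $\sum_{k\ge N}$ is small uniformly for $z$ in a compact set by the Poisson domination, whose parameter is locally bounded in $z$; and each fixed-$k$ term is continuous because the event times depend continuously on the initial point and the driving exponentials---$T_1=\inf\{s:\int_0^s\lambda(\Phi_u z)\,\mathrm{d}u\ge E_1\}$ with $u\mapsto\lambda(\Phi_u z)$ continuous by Assumption~\ref{assumption:continuousrate}---and the post-event states depend continuously on $(x,v)$ off the null set $\{\nabla U=0\}$, so dominated convergence applies.

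A Feller semigroup has a closed generator $(\overline{\mathcal L},\mathcal D(\overline{\mathcal L}))$ whose martingale problem is well posed for every initial law \citep{EK_05}. To transfer this to $(\mathcal L,\rho)$ one verifies that the domain of $\mathcal L$ contains a core for $\overline{\mathcal L}$ on which $\mathcal L$ and $\overline{\mathcal L}$ agree---e.g.\ $C_c^1(\mathcal Z)\subset\mathcal D(\overline{\mathcal L})$ with $\overline{\mathcal L}f=\mathcal L f$, the term $\langle v,\nabla_xf\rangle$ being in $C_0$ and $\lambda(x,v)[f(x,R(x)v)-f(x,v)]$ extending continuously by $0$ across $\{\nabla U=0\}$ since $\lambda\to0$ and $f$ is bounded; the hypothesis $\int\mathrm{e}^{-U}|\nabla U|^2\,\mathrm{d}x<\infty$ ensures $\lambda\in L^1(\rho)$, so that $f(Z_t)-\int_0^t\mathcal L f(Z_s)\,\mathrm{d}s$ is a genuine $\mathbb{P}_\rho$-martingale. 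Alternatively, the PDMP martingale-problem results in \citep{D_93} apply once non-explosion is in hand. The main obstacle is the continuity half of (F1): pushing continuity through the random event times and across the singular locus $\{\nabla U=0\}$ of the reflection map, while controlling the series over the number of jumps uniformly on compacta; a secondary technical point is verifying that $C_c^1(\mathcal Z)$ is a core for $\overline{\mathcal L}$.
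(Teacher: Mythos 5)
Your treatment of the Feller property is essentially the paper's argument in a different packaging. Both proofs rest on the same two structural facts: the velocity lives in $B_{R}(\mathbb{R}^{d})$ and reflections are isometries, so trajectories up to time $t$ stay in a compact set on which $\lambda$ is bounded, giving Poisson domination of the event times; \textbf{(F2)} follows from $\mathbb{P}_{z}(T_{1}\le t)\to0$ and continuity of the flow; decay at infinity follows from the finite speed of propagation. For the continuity half of \textbf{(F1)} the paper iterates Davis's first-jump kernel $G$, shows $G$ preserves $C_{b}(\mathbb{R}_{+}\times\mathcal{Z})$, and uses the uniform-on-compacts Poisson bound to pass $G^{n}g\to P^{t}f$; your series over $\{N_{t}=k\}$ is the expanded form of exactly that iteration, with the same tail control. (Both arguments are equally informal about the singular set $\{\nabla U=0\}$ of $R(x)$; your remark that the rate vanishes there is a reasonable patch.)

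The genuine divergence, and the one gap, is the martingale problem. The paper deduces uniqueness \emph{directly} from the Feller property: strong continuity of the semigroup \citep[Theorem~19.6]{K_02} plus the Hille--Yosida theorem give that $\mathcal{L}$ is dissipative, densely defined, and satisfies the range condition $\mathrm{Range}(\alpha I-\mathcal{L})=C_{0}(\mathcal{Z})$, whence \citep[Corollary~4.4.4]{EK_05} yields uniqueness with \emph{no core identification whatsoever} (the paper is explicit later that the weak-convergence corollary it invokes does not require $D$ to be a core). Your primary route instead hinges on showing that $C_{c}^{1}(\mathcal{Z})$ is a core for the closed generator $\overline{\mathcal{L}}$, which you flag as a ``secondary technical point'' but do not prove. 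For a transport-plus-jump generator this is not routine: the standard device is to exhibit a dense subspace invariant under the semigroup or its resolvent, and neither is obvious here because the jump part couples $v$ to $R(x)v$ through the possibly ill-behaved map $x\mapsto\nabla U(x)/|\nabla U(x)|$. As written, that step is a hole in your argument; your fallback of invoking the PDMP martingale-problem machinery of \citep{D_93} after establishing non-explosion is a legitimate alternative, but if you want a self-contained proof you should replace the core argument by the dissipativity-plus-range-condition route, which you already have all the ingredients for once the Feller property is established.
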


\begin{proof}
First we prove the uniqueness for the martingale problem assuming
the Feller property. Then we will prove the Feller property. 

Since the semigroup $\{P^{t}:t\geq0\}$ is Feller it follows from
\citep[Theorem~19.6]{K_02} that the semigroup is also strongly continuous,
whence by the Hille-Yosida Theorem (see for example \citep[Theorem~1.2.6]{EK_05})
if follows that $\mathcal{L}$ is \textsl{dissipative}, that is for
any $f\in C_{0}(\mathcal{Z})$ we have 
\[
\|\alpha f-\mathcal{L}f\|_{\infty}\geq\alpha\|f\|_{\infty},\qquad\alpha>0,
\]
that $\mathrm{Dom}(\mathcal{L})$ is dense in $C_{0}(\mathcal{Z})$
and that for some $\alpha>0$, $\mathrm{Range}(\alpha I-\mathcal{L})=C_{0}(\mathcal{Z})$.
Therefore, since $\mathrm{Range}(\alpha I-\mathcal{L})=\overline{\mathrm{Dom}(\mathcal{L})}=C_{0}(\mathcal{Z})$
is separating and $\mathcal{L}$ is dissipative, \citep[Corollary~4.4.4]{EK_05}
implies that uniqueness holds for the martingale problem for $(\mathcal{L},\rho)$,
after one notices that obviously $C_{0}(\mathcal{Z})\subseteq C_{b}(\mathcal{Z})$
the space of continuous bounded functions. 

To complete the proof we now show that BPS is Feller. For $t\geq0$
and $z=(x,v)\in\mathcal{Z}$, write $\Phi_{t}(z)=(x+tv,v)$. To prove
\textbf{(F2)} notice that for any $f\in C_{0}(\mathcal{Z})$ and $z=(x,v)\in\mathcal{Z}$
we have 
\begin{align*}
\left|P^{t}f(x,v)-f(x,v)\right| & \leq\left|f(x+tv,v)-f(x,v)\right|+\mathcal{E}_{t}(x,v)
\end{align*}
where it is clear that for any $z=(x,v)$ we have 
\[
\mathcal{E}_{t}(x,v)\leq2\|f\|_{\infty}\left|1-\exp\left\{ -\int_{0}^{t}\lambda(x+sv,v)\mathrm{d}s\right\} \right|\to0,
\]
as $t\downarrow0$ and \textbf{(F2)} follows easily by continuity
of $f$. 

To prove \textbf{(F1),} following the proof of \citep[Theorem 9.6]{D_93},
for $g\in C_{b}\left(\mathbb{R}_{+}\times\mathcal{Z}\right)$ and
$z=(x,v)$ we define the kernel
\[
Gg(t,z)=\mathbb{E}^{z}\left[f(Z_{t})\mathbb{I}\left\{ t<T_{1}\right\} +g(t-T_{1},Z_{T_{1}})\mathbb{I}\left\{ t\geq T_{1}\right\} \right],
\]
where $T_{1},T_{2},\dots$ are the event times of BPS. We can also
write 
\begin{align}
Gg(t,z) & :=f(\Phi_{t}(z))\exp\left\{ -\int_{0}^{t}\lambda(\Phi_{s}(z))\mathrm{d}s\right\} \nonumber \\
 & \quad+\int_{0}^{t}\int_{\mathcal{Z}}g(t-s,z')Q(\Phi_{s}(z);\mathrm{d}z')\lambda(\Phi_{s}(z))\exp\left\{ -\int_{0}^{s}\lambda(\Phi_{r}(z))\mathrm{d}r\right\} \mathrm{d}s,\label{eq:Gphsi}
\end{align}
where $Q\left((x,v),(\mathrm{d}x',\mathrm{d}v')\right)=\delta_{x}(\mathrm{d}x')\delta_{R(x)v}(\mathrm{d}v')$.
From \citep[Lemma 27.3]{D_93} we have that 
\[
G^{n}g(t,z)=\mathbb{E}^{z}\left[f(Z_{t})\mathbb{I}\left\{ t<T_{n}\right\} +g(t-T_{n},Z_{T_{n}})\mathbb{I}\left\{ t\geq T_{n}\right\} \right],
\]
and for each $t\geq0$, $z\in\mathcal{Z}$ and $g\in C_{b}\left(\mathbb{R}_{+}\times\mathcal{Z}\right)$
we have 
\[
\lim_{n\to\infty}G^{n}g(t,z)=P^{t}f(z).
\]
We will first prove that $Gg\in C_{b}\left(\mathbb{R}_{+}\times\mathcal{Z}\right)$
for any $g\in C_{b}\left(\mathbb{R}_{+}\times\mathcal{Z}\right)$.
Let $(t_{n},z_{n})\to(t,z)$ as $n\to\infty$. Then we have 
\begin{align}
\left|\int_{0}^{t}\lambda(\Phi_{s}(z))\mathrm{d}s-\int_{0}^{t_{n}}\lambda(\Phi_{s}(z_{n}))\mathrm{d}s\right| & \leq\int_{t\wedge t_{n}}^{t\vee t_{n}}\lambda(\Phi_{s}(z))\mathrm{d}s+\int_{0}^{t_{n}}\left|\lambda(\Phi_{s}(z_{n}))-\lambda(\Phi_{s}(z))\right|\mathrm{d}s.\label{eq:continuityoflambdaintegral}
\end{align}
Both integrals vanish by bounded convergence, since by continuity
of $\lambda$ and $\phi$ the second integrand vanishes pointwise,
while both integrands are bounded by boundedness of the flow $\Phi_{s}(z):s\in[0,t]\}$
and continuity of $\lambda$. On the other hand letting 
\[
\Psi(t,z):=\int_{0}^{t}\int_{\mathcal{Z}}g(t-s,z')Q(\Phi_{s}(z);\mathrm{d}z')\lambda(\Phi_{s}(z))\exp\left\{ -\int_{0}^{s}\lambda(\Phi_{r}(z))\mathrm{d}r\right\} \mathrm{d}s
\]
we have 
\begin{align*}
 & \left|\Psi(t_{n},z_{n})-\Psi(t,z)\right|\\
 & \leq\int_{t\wedge t_{n}}^{t\vee t_{n}}\int_{\mathcal{Z}}g(t-s,z')Q(\Phi_{s}(z);\mathrm{d}z')\lambda(\Phi_{s}(z))\exp\left\{ -\int_{0}^{s}\lambda(\Phi_{r}(z))\mathrm{d}r\right\} \mathrm{d}s\\
 & \quad+\left\Vert g\right\Vert _{\infty}\int_{0}^{t_{n}}\left|\lambda(\Phi_{s}(z))\exp\left\{ -\int_{0}^{s}\lambda(\Phi_{r}(z))\mathrm{d}r\right\} -\lambda(\Phi_{s}(z_{n}))\exp\left\{ -\int_{0}^{s}\lambda(\Phi_{r}(z_{n}))\mathrm{d}r\right\} \right|\mathrm{d}s\\
 & \quad+\int_{0}^{t_{n}}\lambda(\Phi_{s}(z))\exp\left\{ -\int_{0}^{s}\lambda(\Phi_{r}(z))\mathrm{d}r\right\} \left|\int_{\mathcal{Z}}g(t-s,z')Q(\Phi_{s}(z);\mathrm{d}z')-\int_{\mathcal{Z}}g(t-s,z')Q(\Phi_{s}(z_{n});\mathrm{d}z')\right|\mathrm{d}s.
\end{align*}
Again all three integrals vanish by bounded convergence, by the boundedness
of the orbits of the flow, where for the second we also use the continuity
of $z\mapsto\lambda(\Phi_{s}(z))$ and \eqref{eq:continuityoflambdaintegral},
whereas for the third one we use the continuity of the transition
kernel $Q(z,\mathrm{d}\cdot)$. 

We have thus shown that $Gg$ defined in (\ref{eq:Gphsi}) is continuous.
In addition since $g$ is bounded, it follows that 
\begin{align*}
\left|\Psi(t,z)\right| & \leq C\int_{0}^{t}\lambda(\Phi_{s}(z))\exp\left\{ -\int_{0}^{s}\lambda(\Phi_{r}(z))\mathrm{d}r\right\} \mathrm{d}s\leq C
\end{align*}
since the integrand defines a probability density function. Since
$f$ is also bounded, it thus follows that $Gg(t,z)\in C_{b}\left(\mathbb{R}_{+}\times\mathcal{Z}\right)$.
Therefore $G^{n}g(t,z)$ will also be continuous and bounded.

Finally, recall from the proof of \citep[Lemma 9.3]{D_93} that 
\[
\left|G^{n}g(t,z)-P^{t}f(z)\right|\leq2C\mathbb{P}_{z}\left(t\geq T_{n}\right)\to0,
\]
as $n\to\infty$, where $T_{n}$ is the time of $n$-th event, when
BPS starts from $z$. Suppose now that $z=(x,v)\in B_{R'}(0)\subset\mathcal{Z}$,
the ball of radius $R'$ around the origin. Then by construction of
BPS there will be a compact set $K_{R'}\subset\mathcal{Z}$, such
that $\left\{ Z_{s}=\left(X_{s},V_{s}\right);0\leq s\leq t\right\} \subset K_{R'}$.
Therefore, since $\lambda$ is locally bounded, we have that 
\[
\sup_{s\leq t}\lambda(Z_{s})\leq\sup_{w\in K_{R'}}\lambda(w)=:\bar{\lambda}<\infty.
\]
Thus $\mathbb{P}_{z}\left(T_{n}\leq t\right)\leq\mathbb{P}\left(T_{n}'\leq t\right)$
where $T_{n}'$ are the event times of a Poisson process with rate
$\bar{\lambda}$. Therefore 
\[
\sup_{z\in B_{R'}(0)}\left|G^{n}g(t,z)-P^{t}f(z)\right|\leq2C\sup_{z\in B_{R'}(0)}\mathbb{P}_{z}\left(t\geq T_{n}\right)\leq2C\mathbb{P}\left(T_{n}'\leq t\right)\to0,
\]
as $n\to\infty$. It follows that $G^{n}g(t,z)\to P^{t}f(z)$ as $n\to\infty$
uniformly on compact sets. Since the functions $z\mapsto G^{n}g(t,z)$
are continuous, it follows that $P^{t}f(z)$ is continuous on every
compact set and thus is continuous.

Finally let $f\in C_{0}(\mathcal{Z})$. Thus for any $\epsilon>0$,
there exists a compact set $K(\epsilon)\subset\mathcal{Z}$ such that
$|f(z)|<\epsilon$ for all $z\notin K(\epsilon)$. By assumption the
velocity component lives in $B_{R}(\mathbb{R}^{d})$ . The reason
for this restriction is that otherwise there is always the chance
of coming back from infinity at finite time which implies then that
$P^{t}$ does not leave $C_{0}(\mathcal{Z})$ invariant. Thus if $z^{(n)}\to\infty$
we must have $x^{(n)}\to\infty$. Then, given $\epsilon>0$ choose
$K$ large enough so that 
\[
\sup_{|x|>K,v}\left|f(x,v)\right|<\epsilon.
\]
Then choose $K'>K+Rt$ and $N$, such that for all $n\geq N$ we have
$\left|x^{(n)}\right|\geq K'$. Then since $X_{t}\in B\left(x^{(n)},Rt\right)$
it follows that $|X_{t}|>K$ and thus $|f(X_{t},V_{t})|\leq\epsilon$.
Since $\epsilon>0$ is arbitrary the result follows. 
\end{proof}

\subsection{Preliminary calculations}

We first need precise estimates for $p^{(i)}(x,v)$, $i=2,3$ and
small $\epsilon$. We will often use the formula 
\begin{align}
1-\frac{\pi(x+sv)}{\pi(x)} & =1-\exp\{-U(x+sv)-U(x)\}=\int_{0}^{s}\frac{\pi(x+rv)}{\pi(x)}\langle\nabla U(x+rv),v\rangle\mathrm{d}r.\label{eq:FTCformula}
\end{align}

\subsubsection*{The probability $p_{2}^{(\epsilon)}(x,v)$.}

Recall that 
\[
\langle\nabla U(x),-R(x)v\rangle=-\langle\nabla U(x),R(x)v\rangle=-\langle\nabla U(x),-v\rangle=\langle\nabla U(x),v\rangle.
\]
Therefore if $\langle U(x),v\rangle\leq0$, we will have $p_{\epsilon}^{(2)}(x,v)=0$,
for all $\epsilon$ small enough. Thus we can assume that $\langle U(x),v\rangle>0$
in which case we also have $\langle U(x),-R(x)v\rangle>0$ and therefore
for all $\epsilon>0$ small enough we have that $U(x+\epsilon v),U(x-R(x)v\epsilon)>U(x)$
and thus 
\[
\max\left\{ \frac{\pi(x+\epsilon v)}{\pi(x)},\frac{\pi(x-R(x)\epsilon v)}{\pi(x)}\right\} <1.
\]
In this case, using \eqref{eq:FTCformula}, we estimate 
\begin{align*}
p_{2}^{(\epsilon)}(x,v) & =1-\max\left\{ \frac{\pi(x+\epsilon v)}{\pi(x)},\frac{\pi(x-R(x)\epsilon v)}{\pi(x)}\right\} \\
 & =\min\left\{ 1-\exp\left[-\int_{0}^{\epsilon}\langle\nabla U(x+sv),v\rangle\mathrm{d}s\right],1-\exp\left[-\int_{0}^{\epsilon}\langle\nabla U(x-sR(x)v),v\rangle\mathrm{d}s\right]\right\} \\
 & =\min\left\{ \int_{0}^{\epsilon}\frac{\pi(x+sv)}{\pi(x)}\langle\nabla U(x+sv),v\rangle\mathrm{d}s,\int_{0}^{\epsilon}\frac{\pi(x-sR(x)v)}{\pi(x)}\langle\nabla U(x-sR(x)v),v\rangle\mathrm{d}s\right\} \\
 & =\min\bigg\{\epsilon\langle\nabla U(x),v\rangle+\int_{0}^{\epsilon}\left[\frac{\pi(x+sv)}{\pi(x)}\langle\nabla U(x+sv),v\rangle-\langle\nabla U(x),v\rangle\right]\mathrm{d}s,\\
 & \qquad\qquad\epsilon\langle\nabla U(x),-R(x)v\rangle+\int_{0}^{\epsilon}\left[\frac{\pi(x-sR(x)v)}{\pi(x)}\langle\nabla U(x-sR(x)v),v\rangle-\langle\nabla U(x),v\rangle\right]\mathrm{d}s\bigg\}\\
 & =\epsilon\langle\nabla U(x),v\rangle+\min\bigg\{\int_{0}^{\epsilon}\left[\frac{\pi(x+sv)}{\pi(x)}\langle\nabla U(x+sv),v\rangle-\langle\nabla U(x),v\rangle\right]\mathrm{d}s,\\
 & \qquad\qquad\qquad\qquad\int_{0}^{\epsilon}\left[\frac{\pi(x-sR(x)v)}{\pi(x)}\langle\nabla U(x-sR(x)v),v\rangle-\langle\nabla U(x),v\rangle\right]\mathrm{d}s\bigg\}\\
 & =\epsilon\langle\nabla U(x),v\rangle+\mathcal{E}_{1}^{(\epsilon)}(x,v).
\end{align*}
Since we have assumed that for $\epsilon$ small enough we have $\pi(x-sR(x)v)<\pi(x)$,
then we have for $\epsilon$ small enough
\begin{align}
\left|\mathcal{E}_{1}^{(\epsilon)}(x,v)\right| & \leq\int_{0}^{\epsilon}\frac{\pi(x-sR(x)v)}{\pi(x)}\left|\langle\nabla U(x-sR(x)v),v\rangle-\langle\nabla U(x),v\rangle\right|\mathrm{d}s\nonumber \\
 & \leq\int_{0}^{\epsilon}\left|\nabla U(x-sR(x)v)-\nabla U(x)\right|\mathrm{d}s\nonumber \\
 & \leq\int_{0}^{\epsilon}M(x)s^{\delta}\left|v\right|^{\delta}\mathrm{d}s=CM(x)\left|v\right|^{\delta+1}\epsilon^{\delta+1},\label{eq:boundonerror1}
\end{align}
where we used Assumption \ref{assumption:holder}. Overall we have
that 
\begin{equation}
p_{2}^{(\epsilon)}(x,v)=\epsilon\max\{\langle\nabla U(x),v\rangle,0\}+CM(x)\left|v\right|^{\delta+1}\epsilon^{\delta+1}.\label{eq:boundonp2}
\end{equation}

\subsection{Proof of Theorem~\ref{thm:weakconv}}

Let $\epsilon_{n}\to0$. To ease notation we will write $\zeta^{(n)}$
rather than $\zeta^{(\epsilon_{n})}$. Define (see \citep[Remark 8.3(b)]{EK_05})
\begin{align}
\xi_{n}(t) & :=\epsilon_{n}^{-1}\int_{0}^{\epsilon_{n}}\mathbb{E}\left[\left.f\left(\zeta^{(n)}(t+s)\right)\right|\mathcal{G}_{t}^{n}\right]\mathrm{d}s,\label{eq:xidef}\\
\phi_{n}(t) & :=\epsilon_{n}^{-1}\mathbb{E}\left[\left.f\left(\zeta^{(n)}(t+\epsilon_{n})\right)-f\left(\zeta^{(n)}(t)\right)\right|\mathcal{G}_{t}^{n}\right],\label{eq:phidef}
\end{align}
where $\mathcal{G}_{t}^{n}:=\sigma\left(\zeta^{(n)}(s):s\leq t\right)$,
the natural filtration of $\left\{ \zeta^{(n)}(t):t\geq0\right\} $.
Recall that $\zeta^{(n)}(0)\sim\rho$ for all $n$. 

Since $\left\{ P^{t}:t\geq0\right\} $ is strongly continuous, we
have that $\mathcal{L}:\mathrm{Dom}\left(\mathcal{L}\right)\subset C_{0}(\mathcal{Z})\mapsto C_{0}(\mathcal{Z})$
is densely defined. Thus we can think of $\mathcal{L}$ as a subset
of $C_{0}(\mathcal{Z})\times C_{0}(\mathcal{Z})$ and therefore as
a subset of $C_{b}(\mathcal{Z})\times C_{b}(\mathcal{Z})$. For our
purposes we will define $\mathcal{L}$ on the space 
\[
D:=C_{c}^{\infty}(\mathcal{Z}):=\left\{ f:\mathcal{Z}\mapsto\mathcal{R}\,\text{infinitely differentiable with compact support}\right\} ,
\]
which is clearly a subset of $\mathrm{Dom}(\mathcal{L})$. Therefore
we will be working with the restricted generator $\left.\mathcal{L}\right|_{D}$.
Therefore \citep[Corollary~8.15 of Chapter~4]{EK_05} applies to our
scenario. Notice that \citep[Corollary~8.15 of Chapter~4]{EK_05}
does not require $D$ to be a core of the generator. 

To apply \citep[Corollary~8.15 of Chapter~4]{EK_05} we need to check
the following: 
\begin{itemize}
\item \textbf{Compact Containment:} For every $\eta>0$ and $T>0$ there
is a compact set $\rho_{\eta,T}\subset\mathcal{Z}$ such that 
\begin{equation}
\inf_{n}\mathbb{P}\left\{ \zeta^{(\epsilon_{n})}(t)\in\rho_{\eta,T},\,\,\text{for all \ensuremath{0\leq t\leq T}}\}\right\} \geq1-\eta.\label{cond:relcompact}
\end{equation}
\item \textbf{Separating algebra:} the closure of the linear span of $D$
contains an algebra that separates points;
\item \textbf{Martingale problem:} the martingale problem in $D_{E}([0,\infty))$
for $(\mathcal{L},\pi)$ admits at most one solution; this has already
been established in Lemma~\ref{lem:feller}.
\item \textbf{Generator~~convergence:} for each $f\in\mathcal{D}(\mathcal{L})$
and $T>0$, for $\xi_{n},\phi_{n}$ as defined in \eqref{eq:xidef},\eqref{eq:phidef}
\begin{align}
\sup_{n}\sup_{s\leq T}\mathbb{E}[|\xi^{(n)}(s)|] & <\infty\label{eq:firstcondition}\\
\sup_{n}\sup_{s\leq T}\mathbb{E}[|\phi^{(n)}(s)|] & <\infty\label{eq:secondcondition}\\
\lim_{n\to\infty}\mathbb{E}\left[\left|\xi^{(n)}(t)-f(X^{(n)}(t)\right|\right] & =0,\label{eq:thirdcondition}\\
\lim_{n\to\infty}\mathbb{E}\left[\left|\phi^{(n)}(t)-\mathcal{L}f(X^{(n)}(t)\right|\right] & =0,\label{eq:fourthcondition}
\end{align}
and in addition 
\begin{equation}
\lim_{n\to\infty}\mathbb{E}\left\{ \sup_{t\in\mathbb{Q}\cap[0,T]}|\xi_{n}(t)-f(X_{n}(t))|\right\} =0,\label{eq:833}
\end{equation}
and for some $p>1$ 
\begin{equation}
\sup_{n\to\infty}\mathbb{E}\left[\left(\int_{0}^{T}|\phi_{n}(s)|^{p}\mathrm{d}s\right)^{1/p}\right]<\infty.\label{eq:834}
\end{equation}
\end{itemize}
We will apply the theorem to the sequence of processes $X^{(n)}(\cdot)=\zeta^{(n)}(\cdot)$
with $\xi_{n},\phi_{n}$ as defined in \eqref{eq:xidef},\eqref{eq:phidef}.

\subsubsection{Compact Containment}

Let $\eta>0$, $T>0$ be arbitrary. We need to provide a compact set
$\rho_{\eta,T}\subset\mathcal{Z}$ such that \eqref{cond:relcompact}
holds. Let $\zeta^{(\epsilon_{n})}(0)=(X_{0},V_{0})\sim\rho$. Then
notice that for all $t\leq T$, the first component component will
of $\zeta^{(\epsilon_{n})}(t)$ will take on the values $X^{(\epsilon_{n})}(k)$
for $k$ ranging from $0$ up to $\lceil T/\epsilon_{n}\rceil$, while
the second component $V_{k}$ will only change in direction through
the reflection and negation steps, while the modulus will remain fixed
at $|V_{0}|$. From the definition of $K^{(\epsilon_{n})}$ we thus
know that for any $n$, for each $k$ we have that 
\[
\left|X_{k}^{(\epsilon_{n})}\right|\leq|X_{0}|+k\epsilon_{n}|V_{0}|.
\]
Let $R>0$ be large enough so that 
\[
\rho\left\{ B_{R}(0)\times B_{R}(0)\right\} \geq1-\eta
\]
and define 
\[
\rho_{\eta,T}:=B_{R(1+T)}(0)\times B_{R}(0)\subset\mathcal{Z}.
\]
It is then clear that 
\begin{align*}
\mathbb{P}\left\{ \zeta^{(\epsilon_{n})}(t)\in\rho_{\eta,T}\,\,\,\,\text{for all \ensuremath{0\leq t\leq T}}\right\}  & \geq\mathbb{P}\left\{ \zeta^{(\epsilon_{n})}(0)\in B_{R}(0)\times B_{R}(0)\right\} \geq1-\eta.
\end{align*}

\subsubsection{Separating Algebra. }

This holds since $C_{c}^{\infty}(\mathcal{Z})$ is dense in $C_{c}(\mathcal{Z})$,
continuous functions of compact support, which is in turn dense in
$C_{0}(\mathcal{Z})$ which is an algebra that separates points.

\subsubsection{Convergence of generators.}

Recall that for $f\in\mathcal{D}(\mathcal{L})$ 
\begin{align*}
\xi_{n}(t) & :=\epsilon_{n}^{-1}\int_{0}^{\epsilon_{n}}\mathbb{E}\left[\left.f\left(\zeta^{(\epsilon_{n})}(t+s)\right)\right|\mathcal{G}_{t}^{n}\right]\mathrm{d}s,\\
\phi_{n}(t) & :=\epsilon_{n}^{-1}\mathbb{E}\left[\left.f\left(\zeta^{(\epsilon_{n})}(t+\epsilon_{n})\right)-f\left(\zeta^{(\epsilon_{n})}(t)\right)\right|\mathcal{G}_{t}^{n}\right].
\end{align*}
Conditions~\eqref{eq:firstcondition},\eqref{eq:secondcondition}
are automatically satisfied by stationarity.

\paragraph{Conditions~\eqref{eq:thirdcondition},\eqref{eq:833}.}

Since \eqref{eq:833} implies \eqref{eq:thirdcondition} we only need
to check \eqref{eq:833}.

Let $t\in[0,T]$ and $k:=\lfloor t/\epsilon\rfloor$. Since for $t+s\leq(k+1)\epsilon$
we have $\zeta^{(\epsilon_{n})}(t+s)=Z^{(\epsilon_{n})}(k)$ it follows
that 
\begin{align*}
\left|\xi_{n}(t)-f(\zeta^{(\epsilon_{n})}(t)\right| & =\left|\epsilon_{n}^{-1}\int_{0}^{\epsilon_{n}}\left\{ \mathbb{E}\left[\left.f\left(\zeta^{(\epsilon_{n})}(t+s)\right)\right|\mathcal{G}_{t}^{n}\right]-f\left(\zeta^{(\epsilon_{n})}(t)\right)\right\} \mathrm{d}s\right|\\
 & =\left|\epsilon_{n}^{-1}\left[(k+1)\epsilon_{n}-t\right]\left\{ \mathbb{E}\left[\left.f\left(Z^{(\epsilon_{n})}(k+1)\right)\right|\mathcal{G}_{t}^{n}\right]-f\left(Z^{(\epsilon_{n})}(k)\right)\right\} \right|\\
 & \leq\left|K^{(\epsilon_{n})}f\left(Z^{(\epsilon_{n})}(k)\right)-f\left(Z^{(\epsilon_{n})}(k)\right)\right|.
\end{align*}
Therefore, we can estimate 
\begin{align}
 & \mathbb{E}\left\{ \sup_{t\in\mathbb{Q}\cap[0,T]}|\xi_{n}(t)-f(\zeta^{(\epsilon_{n})}(t))|\right\} \nonumber \\
 & \leq\mathbb{E}\left[\sup_{k\leq T/\epsilon}\left|K^{(\epsilon)}f\left(Z^{(\epsilon_{n})}(k)\right)-f\left(Z^{(\epsilon_{n})}(k)\right)\right|\right]\nonumber \\
 & \leq\mathbb{E}\left[\sup_{k\leq T/\epsilon}p_{1}^{(\epsilon)}\left(X^{(\epsilon_{n})}(k),V^{(\epsilon_{n})}(k)\right)\left|f\left(X^{(\epsilon_{n})}(k)+\epsilon_{n}V^{(\epsilon_{n})}(k),V^{(\epsilon_{n})}(k)\right)-f\left(X^{(\epsilon_{n})}(k),V^{(\epsilon_{n})}(k)\right)\right|\right]\nonumber \\
 & +\mathbb{E}\left[\sup_{k\leq T/\epsilon}p_{2}^{(\epsilon)}\left(X^{(\epsilon_{n})}(k),V^{(\epsilon_{n})}(k)\right)\left|f\left(X^{(\epsilon_{n})}(k),R(X^{(\epsilon_{n})}(k))V^{(\epsilon_{n})}(k)\right)-f\left(X^{(\epsilon_{n})}(k),V^{(\epsilon_{n})}(k)\right)\right|\right]\nonumber \\
 & +\mathbb{E}\left[\sup_{k\leq T/\epsilon}p_{3}^{(\epsilon)}\left(X^{(\epsilon_{n})}(k),V^{(\epsilon_{n})}(k)\right)\left|f\left(X^{(\epsilon_{n})}(k),-V^{(\epsilon_{n})}(k)\right)-f\left(X^{(\epsilon_{n})}(k),V^{(\epsilon_{n})}(k)\right)\right|\right]\nonumber \\
 & \leq\epsilon_{n}\mathbb{E}\left[\sup_{k\leq T/\epsilon}\left|V^{(\epsilon_{n})}(k)\right|\right]\sup|\nabla f|\nonumber \\
 & +\sup|f|\left(\mathbb{E}\left[\sup_{k\leq T/\epsilon}p_{2}^{(\epsilon)}\left(X^{(\epsilon_{n})}(k),V^{(\epsilon_{n})}(k)\right)\right]+\mathbb{E}\left[\sup_{k\leq T/\epsilon}p_{3}^{(\epsilon)}\left(X^{(\epsilon_{n})}(k),V^{(\epsilon_{n})}(k)\right)\right]\right),\label{eq:termstwoandthree}
\end{align}
by a simple Taylor expansion, since $f$ and $|\nabla f|$ are bounded.

Since $\left|V^{(\epsilon_{n})}(k)\right|\leq R$ for all $k$ the
first term clearly vanishes. In addition by \eqref{eq:boundonp2},
\eqref{ass:holder} and stationarity it follows that 
\begin{align*}
\mathbb{E}\left[\sup_{k\leq T/\epsilon}p_{2}^{(\epsilon_{n})}\left(X^{(\epsilon_{n})}(k),V^{(\epsilon_{n})}(k)\right)\right] & \leq\mathbb{E}\left[\sup_{K\leq T/\epsilon_{n}}\left|\epsilon_{n}\nabla U\left(X^{(\epsilon_{n})}(k)\right)\right|\right]+C\epsilon_{n}^{\delta+1}\mathbb{E}\left[\sup_{K\leq T/\epsilon_{n}}M\left(X^{(\epsilon_{n})}(k)\right)\right]\\
 & \leq\left\{ \mathbb{E}\left[\sup_{K\leq T/\epsilon_{n}}\left|\epsilon_{n}\nabla U\left(X^{(\epsilon_{n})}(k)\right)\right|^{2}\right]\right\} ^{1/2}+C\epsilon_{n}^{\delta+1}\sum_{k=1}^{T/\epsilon_{n}}\mathbb{E}\left[M\left(X^{(\epsilon_{n})}(k)\right)\right]\\
 & \leq\left\{ \sum_{k=1}^{T/\epsilon_{n}}\epsilon_{n}^{2}\mathbb{E}\left[\left|\nabla U\left(X^{(\epsilon_{n})}(0)\right)\right|^{2}\right]\right\} ^{1/2}+C\epsilon_{n}^{\delta}T\mathbb{E}\left[M\left(X^{(\epsilon_{n})}(0)\right)\right]\\
 & \leq\left\{ T\epsilon_{n}\pi\left[\left|\nabla U\right|^{2}\right]\right\} ^{1/2}+C\epsilon_{n}^{\delta}\mathbb{E}\left[M\left(X^{(\epsilon_{n})}(0)\right)\right]=o(1),
\end{align*}
by Assumption \ref{assumption:gradsecondmoment}. 

To control the last term of \eqref{eq:termstwoandthree}, again by
stationarity we have 
\begin{align*}
\mathbb{E}\left[\sup_{k\leq T/\epsilon}p_{3}^{(\epsilon)}\left(X^{(\epsilon_{n})}(k),V^{(\epsilon_{n})}(k)\right)\right] & \leq\sum_{k=1}^{T/\epsilon_{n}}\mathbb{E}\left[p_{3}^{(\epsilon)}\left(X^{(\epsilon_{n})}(k),V^{(\epsilon_{n})}(k)\right)\right]=\frac{T}{\epsilon_{n}}\mathbb{E}\left[p_{3}^{(\epsilon)}\left(X^{(\epsilon_{n})}(0),V^{(\epsilon_{n})}(0)\right)\right].
\end{align*}
Letting $(X,V)\sim\rho$ we thus have 
\begin{align}
\frac{T}{\epsilon_{n}}\mathbb{E}\left[p_{3}^{(\epsilon_{n})}\left(X^{(\epsilon_{n})}(0),V^{(\epsilon_{n})}(0)\right)\right] & =\frac{T}{\epsilon_{n}}\mathbb{E}\left[\max\left\{ 0,\min\left(1-\frac{\pi\left(X+\epsilon_{n}V\right)}{\pi\left(X\right)},\frac{\pi\left(X-\epsilon_{n}R\left(X\right)V\right)}{\pi\left(X\right)}-\frac{\pi\left(X+\epsilon_{n}V\right)}{\pi\left(X\right)}\right)\right)\right]\nonumber \\
 & \leq\frac{T}{\epsilon_{n}}\mathbb{E}\left[\max\left\{ 0,\min\left(\left|1-\frac{\pi\left(X+\epsilon_{n}V\right)}{\pi\left(X\right)}\right|,\left|\frac{\pi\left(X-\epsilon_{n}R\left(X\right)V\right)}{\pi\left(X\right)}-\frac{\pi\left(X+\epsilon_{n}V\right)}{\pi\left(X\right)}\right|\right)\right)\right]\nonumber \\
 & \leq\frac{T}{\epsilon_{n}}\mathbb{E}\left[\left|\frac{\pi\left(X-\epsilon_{n}R\left(X\right)V\right)}{\pi\left(X\right)}-\frac{\pi\left(X+\epsilon_{n}V\right)}{\pi\left(X\right)}\right|\right]\nonumber \\
 & =\frac{T}{\epsilon_{n}}\int\psi(\mathrm{\mathrm{d}}v)\int\mathrm{d}x\left|\pi\left(x-\epsilon_{n}R\left(x\right)v\right)-\pi\left(x+\epsilon_{n}v\right)\right|\nonumber \\
 & =T\int\psi(\mathrm{\mathrm{d}}v)\int\mathrm{d}x\frac{\left|\pi\left(x-\epsilon_{n}R\left(x\right)v\right)-\pi\left(x+\epsilon_{n}v\right)\right|}{\epsilon_{n}}.\label{eq:boundonp3-1}
\end{align}
Notice that 
\begin{align*}
\frac{\pi\left(x-\epsilon_{n}R\left(x\right)v\right)-\pi\left(x+\epsilon_{n}v\right)}{\epsilon_{n}} & =\epsilon_{n}^{-1}\left[\pi(x)-\epsilon_{n}\pi(x)\langle-U(x),R(x)v)-\pi(x)-\epsilon_{n}\pi(x)\langle-U(x),R(x)v)+o(\epsilon_{n})\right]\\
 & =\epsilon_{n}^{-1}\left[-\epsilon_{n}\pi(x)\langle U(x),v)+\epsilon_{n}\pi(x)\langle U(x),R(x)v)+o(\epsilon_{n})\right],
\end{align*}
and thus the integrand vanishes pointwise. Let $F:\mathbb{R}\to\mathbb{R}$
have bounded continuous second derivatives. Then letting $G(s):=F(s)-F(0)-sF'(0)$
we have by the Mean Value Theorem for some $\xi\in[0,s]$ with $s>0$
\[
\frac{G(s)-G(0)}{s^{2}}=\frac{G'(\xi)}{s}=\frac{G'(0)}{s}+\frac{G'(\xi)-G'(0)}{s}=\frac{G'(0)}{s}+\frac{G''(\xi')}{s}\xi.
\]
Thus it follows that
\begin{align}
\pi\left(x-\epsilon_{n}R\left(x\right)v\right)-\pi\left(x+\epsilon_{n}v\right) & =\pi\left(x-\epsilon_{n}R\left(x\right)v\right)-\pi(x)+\pi(x)-\pi\left(x+\epsilon_{n}v\right)\nonumber \\
 & =\left\langle \nabla\pi(x),-R(x)v\right\rangle \epsilon_{n}+\left\langle R(x)v,\Delta\pi(x+\xi_{1}'v)R(x)v\right\rangle \epsilon_{n}\xi_{2}''\nonumber \\
 & \qquad\qquad-\left.\left\langle \nabla\pi(x),v\right\rangle \epsilon_{n}+\left\langle R(x)v,\Delta\pi(x+\xi_{1}'v)R(x)v\right\rangle \epsilon_{n}\xi_{2}''\right],\label{eq:p3expansion}
\end{align}
for $\xi_{i}',\xi_{i}''\in[0,\epsilon_{n}]$ for $i=1,2$. Since 
\[
\left\langle \nabla\pi(x),-R(x)v\right\rangle =\pi(x)\left\langle -\nabla U(x),-R(x)v\right\rangle =-\pi(x)\left\langle \nabla U(x),v\right\rangle =-\left\langle \nabla\pi(x),v\right\rangle ,
\]
it follows that for $n$ large enough so that $\epsilon_{n}R<\varepsilon$
\begin{align*}
\epsilon_{n}^{-1}\left|\pi\left(x-\epsilon_{n}R\left(x\right)v\right)-\pi\left(x+\epsilon_{n}v\right)\right| & \leq2C\epsilon_{n}\sup_{y\in B(x,\varepsilon)}\left\Vert \Delta\pi(y)\right\Vert ,
\end{align*}
which is integrable by Assumption \ref{assumption:piboundedder}.
Thus by dominated convergence it follows that the last term of \eqref{eq:termstwoandthree}
vanishes and thus \eqref{eq:833} holds.

\paragraph{Condition~\eqref{eq:fourthcondition}.}

Letting $k:=\lfloor t/\epsilon\rfloor$ and $(X,V)\sim\rho$, we have
by stationarity
\begin{align*}
\lefteqn{} & \mathbb{E}\left[\left|\phi^{(n)}(t)-\mathcal{L}f\left(X^{(n)}(t)\right)\right|\right]\\
 & =\mathbb{E}\left[\left|\epsilon_{n}^{-1}\mathbb{E}\left[\left.f\left(\zeta^{(\epsilon_{n})}(t+\epsilon_{n})\right)-f\left(\zeta^{(\epsilon_{n})}(t)\right)\right|\mathcal{G}_{t}^{n}\right]-\mathcal{L}f(\zeta^{(\epsilon_{n})}(t))\right|\right]\\
 & =\mathbb{E}\left\{ \left|\epsilon_{n}^{-1}\left[K^{(\epsilon_{n})}f\big(Z^{(\epsilon_{n})}(k)\big)-f\big(Z^{(\epsilon_{n})}(k)\big)\right]-\mathcal{L}f\big(Z^{(\epsilon_{n})}(k)\big)\right|\right\} \\
 & =\mathbb{E}\left\{ \left|\epsilon_{n}^{-1}\left[K^{(\epsilon_{n})}f\big(Z^{(\epsilon_{n})}(k)\big)-f\big(Z^{(\epsilon_{n})}(k)\big)\right]-\mathcal{L}f\big(Z^{(\epsilon_{n})}(k)\big)\right|\right\} \\
 & =\mathbb{E}\left\{ \left|\epsilon_{n}^{-1}p_{1}^{(\epsilon_{n})}(X,V)\left[f\big(X+\epsilon_{n}V,V\big)-f\big(X,V\big)\right]-\langle\nabla f(X,V),V\rangle\right|\right\} \\
 & \qquad+\mathbb{E}\left\{ \left|\left[\epsilon_{n}^{-1}p_{2}^{(\epsilon_{n})}(X,V)-\lambda(X,V)\right]\times\left[f\big(X,R(X)V\big)-f\big(X,V\big)\right]\right|\right\} \\
 & \qquad\qquad+\mathbb{E}\left\{ \epsilon_{n}^{-1}p_{3}^{(\epsilon_{n})}(X,V)\left|f\big(X,-V\big)-f\big(X,V\big)\right|\right\} \\
 & =I_{1}+I_{2}+I_{3}.
\end{align*}
From \eqref{eq:boundonp2} and the fact that $f$ is assumed bounded
it easily follows that $I_{2}\to0$. Also we proved that $I_{3}\to0$
while checking Condition~\eqref{eq:833}. Therefore we just have
to handle $I_{1}$. We start with the triangle inequality 
\begin{align*}
I_{1} & \leq\mathbb{E}\left\{ \left|\epsilon_{n}^{-1}\left[f\big(X+\epsilon_{n}V,V\big)-f\big(X,V\big)\right]-\langle\nabla f(X,V),V\rangle\right|\right\} +\mathbb{E}\left\{ \left|p_{1}^{(\epsilon_{n})}(X,V)-1\right|\right\} .
\end{align*}
The first term vanishes by continuity of $\nabla f$ and bounded convergence,
while for the second term we have 

\begin{align*}
\mathbb{E}\left\{ \left|p_{1}^{(\epsilon_{n})}(X,V)-1\right|\right\}  & =\mathbb{E}\left\{ \left|\min\left\{ 1,\frac{\pi\left(X+\epsilon_{n}V\right)}{\pi(X)}\right\} -1\right|\right\} \\
 & \leq\mathbb{E}\left\{ \left|\frac{\pi\left(X+\epsilon_{n}V\right)}{\pi(X)}-1\right|\right\} \\
 & =\int\rho(\mathrm{d}x,\mathrm{d}v)\left|\frac{\pi\left(x+\epsilon_{n}v\right)}{\pi(x)}-1\right|=\int\psi(\mathrm{d}v)\int\left|\pi\left(x+\epsilon_{n}v\right)-\pi\left(x\right)\right|\mathrm{d}x\to0,
\end{align*}
by dominated convergence.

\paragraph{Condition~\eqref{eq:834}.}

Notice that for $s\in[k\epsilon_{n},(k+1)\epsilon_{n})$ we have 
\[
\phi_{n}(s)=\epsilon_{n}^{-1}\left[K^{(\epsilon_{n})}f\left(Z^{(\epsilon_{n})}(k)\right)-f\left(Z^{(\epsilon_{n})}(k)\right)\right].
\]
Thus for $p>1$ by the $C_{r}$-inequality
\begin{align*}
\int_{0}^{T}|\phi_{n}(s)|^{p}\mathrm{d}s & =\sum_{k=0}^{\lfloor T/\epsilon_{n}\rfloor}\epsilon_{n}\left|\epsilon_{n}^{-1}\left[K^{(\epsilon_{n})}f\left(Z^{(\epsilon_{n})}(k)\right)-f\left(Z^{(\epsilon_{n})}(k)\right)\right]\right|^{p}\\
 & =\sum_{k=0}^{\lfloor T/\epsilon_{n}\rfloor}\epsilon_{n}^{1-p}\left|\left[K^{(\epsilon_{n})}f\left(Z^{(\epsilon_{n})}(k)\right)-f\left(Z^{(\epsilon_{n})}(k)\right)\right]\right|^{p}\\
 & \leq C\sum_{k=0}^{\lfloor T/\epsilon_{n}\rfloor}\epsilon_{n}^{1-p}\left[p_{1}^{(\epsilon_{n})}\left(Z^{(\epsilon_{n})}(k)\right)\right]^{p}\left|f\left(X^{(\epsilon_{n})}(k)+\epsilon_{n}V^{(\epsilon_{n})}(k),V^{(\epsilon_{n})}(k)\right)-f\left(Z^{(\epsilon_{n})}(k)\right)\right|^{p}\\
 & \qquad+C\sum_{k=0}^{\lfloor T/\epsilon_{n}\rfloor}\epsilon_{n}^{1-p}\left[p_{2}^{(\epsilon_{n})}\left(Z^{(\epsilon_{n})}(k)\right)\right]^{p}\left|f\left(X^{(\epsilon_{n})}(k),R\left(X^{(\epsilon_{n})}(k)\right)V^{(\epsilon_{n})}(k)\right)-f\left(Z^{(\epsilon_{n})}(k)\right)\right|^{p}\\
 & \qquad+C\sum_{k=0}^{\lfloor T/\epsilon_{n}\rfloor}\epsilon_{n}^{1-p}\left[p_{3}^{(\epsilon_{n})}\left(Z^{(\epsilon_{n})}(k)\right)\right]^{p}\left|f\left(X^{(\epsilon_{n})}(k),-V^{(\epsilon_{n})}(k)\right)-f\left(Z^{(\epsilon_{n})}(k)\right)\right|^{p}\\
 & =:J_{1}+J_{2}+J_{3}.
\end{align*}
We first estimate 
\begin{align*}
|J_{1}| & \leq C\|\nabla f\|\sum_{k=0}^{\lfloor T/\epsilon_{n}\rfloor}\epsilon_{n}^{1-p}\left|\epsilon_{n}V^{(\epsilon_{n})}(k)\right|^{p}\leq C\|\nabla f\|\sum_{k=0}^{\lfloor T/\epsilon_{n}\rfloor}\epsilon_{n}R^{p}=CT\|\nabla f\|R^{p}.
\end{align*}
Next we treat the second term, where from \eqref{eq:boundonp2} and
\eqref{eq:boundonerror1} we have 
\begin{align*}
J_{2} & \leq2C\|f\|^{p}\sum_{k=0}^{\lfloor T/\epsilon_{n}\rfloor}\epsilon_{n}^{1-p}\left[p_{2}^{(\epsilon_{n})}\left(Z^{(\epsilon_{n})}(k)\right)\right]^{p}\\
 & \leq2C\|f\|^{p}\sum_{k=0}^{\lfloor T/\epsilon_{n}\rfloor}\epsilon_{n}^{1-p}\left[\epsilon_{n}\max\{\langle\nabla U\left(X^{(\epsilon_{n})}(k)\right),V^{(\epsilon_{n})}(k))\rangle,0\}+\mathcal{E}_{1}^{(\epsilon)}\left(X^{(\epsilon_{n})}(k),V^{(\epsilon_{n})}(k)\right)\right]^{p}\\
 & \leq2C\|f\|^{p}R^{p}\sum_{k=0}^{\lfloor T/\epsilon_{n}\rfloor}\epsilon_{n}^{1-p}\left[\epsilon_{n}\left|\nabla U\left(X^{(\epsilon_{n})}(k)\right)\right|+CK\left(X^{(\epsilon_{n})}(k)\right)\epsilon_{n}^{\delta+1}\right]^{p}\\
 & \leq2C\|f\|^{p}\sum_{k=0}^{\lfloor T/\epsilon_{n}\rfloor}\epsilon_{n}^{1-p}\left\{ \epsilon_{n}^{p}\left|\nabla U\left(X^{(\epsilon_{n})}(k)\right)\right|^{p}+\epsilon_{n}^{p(\delta+1)}K\left(X^{(\epsilon_{n})}(k)\right)\right\} .
\end{align*}
Thus, using Holder's inequality we have 
\begin{align*}
\mathbb{E}[J_{2}^{1/p}]^{p}\leq\mathbb{E}[J_{2}] & \leq2C\|f\|^{p}\mathbb{E}\left\{ \sum_{k=0}^{\lfloor T/\epsilon_{n}\rfloor}\epsilon_{n}^{1-p}\left\{ \epsilon^{p}\left|\nabla U\left(X^{(\epsilon_{n})}(k)\right)\right|^{p}+\epsilon^{p(\delta+1)}K\left(X^{(\epsilon_{n})}(k)\right)\right\} \right\} \\
 & \leq C\epsilon_{n}\sum_{k=0}^{\lfloor T/\epsilon_{n}\rfloor}\mathbb{E}\left[\left|\nabla U\left(X^{(\epsilon_{n})}(k)\right)\right|^{p}\right]+C\epsilon_{n}^{1+p\delta}\sum_{k=0}^{\lfloor T/\epsilon_{n}\rfloor}\mathbb{E}\left[K\left(X^{(\epsilon_{n})}(k)\right)\right].
\end{align*}
By stationarity it easily follows that 
\begin{align*}
\mathbb{E}[J_{2}^{1/p}]^{p} & \leq C\epsilon_{n}\frac{T}{\epsilon_{n}}\mathbb{E}\left[\left|\nabla U\left(X^{(\epsilon_{n})}(0)\right)\right|^{p}\right]+C\epsilon_{n}^{1+p\delta}\frac{T}{\epsilon_{n}}\mathbb{E}\left[K\left(X^{(\epsilon_{n})}(0)\right)\right]\leq C,
\end{align*}
uniformly in $n$. To control $J_{3}$, since $p>1$ by subbaditivity
we have
\begin{align*}
\mathbb{E}\left[\left|J_{3}\right|^{1/p}\right] & \leq C\mathbb{E}\left[\left(\sum_{k=0}^{\lfloor T/\epsilon_{n}\rfloor}\epsilon_{n}^{1-p}\left[p_{3}^{(\epsilon_{n})}\left(Z^{(\epsilon_{n})}(k)\right)\right]^{p}\right)^{1/p}\right]\\
 & \leq C\mathbb{E}\left[\sum_{k=0}^{\lfloor T/\epsilon_{n}\rfloor}\left(\epsilon_{n}^{1-p}\left[p_{3}^{(\epsilon_{n})}\left(Z^{(\epsilon_{n})}(k)\right)\right]^{p}\right)^{1/p}\right]\\
 & =C\mathbb{E}\left[\sum_{k=0}^{\lfloor T/\epsilon_{n}\rfloor}\epsilon_{n}^{1/p-1}p_{3}^{(\epsilon_{n})}\left(Z^{(\epsilon_{n})}(k)\right)\right]\\
 & \leq CT\epsilon_{n}^{1/p-2}\mathbb{E}\left[p_{3}^{(\epsilon_{n})}\left(Z^{(\epsilon_{n})}(k)\right)\right].
\end{align*}
Now recall from \eqref{eq:boundonp3-1} and \eqref{eq:p3expansion},
for $n$ large enough so that $\epsilon_{n}R<\varepsilon$, it follows
that
\begin{align*}
\mathbb{E}\left[p_{3}^{(\epsilon_{n})}\left(X^{(\epsilon_{n})}(0),V^{(\epsilon_{n})}(0)\right)\right] & =\int\psi(\mathrm{\mathrm{d}}v)\int\mathrm{d}x\left|\pi\left(x-\epsilon_{n}R\left(x\right)v\right)-\pi\left(x+\epsilon_{n}v\right)\right|\\
 & \leq\int\psi(\mathrm{\mathrm{d}}v)\int\mathrm{d}x\sup_{y\in B(x,\epsilon_{n})}\left\Vert \Delta\pi(y)\right\Vert \epsilon_{n}^{2}.
\end{align*}
By Assumption \ref{assumption:piboundedder} it thus follows that
\begin{align*}
\mathbb{E}\left[\left|J_{3}\right|^{1/p}\right] & \leq CT\epsilon_{n}^{1/p-2}\mathbb{E}\left[p_{3}^{(\epsilon_{n})}\left(Z^{(\epsilon_{n})}(k)\right)\right].\\
 & \leq CT\epsilon_{n}^{1/p}\int\psi(\mathrm{\mathrm{d}}v)\int\mathrm{d}x\sup_{y\in B(x,\varepsilon)}\left\Vert \Delta\pi(y)\right\Vert \to0,
\end{align*}
since $1/p>0$.
\end{document}